\documentclass[
aps,
prx,
reprint,   
superscriptaddress, 
amsmath,
amssymb,
longbibliography,
footinbib
]{revtex4-2}
\usepackage{amsmath,amsfonts,amssymb,amstext,amscd,dsfont,bbm,natbib}
\usepackage{physics}
\usepackage{color,comment}
\usepackage{soul,xcolor}
\usepackage{gensymb}
\usepackage[colorlinks=true,citecolor=violet,urlcolor=violet,linkcolor=blue]{hyperref}
\usepackage{graphicx}
\usepackage{csquotes}
\usepackage{bbold}
\usepackage{braket}
\usepackage{placeins,comment}
\usepackage{mathtools}
\usepackage{enumitem}
\usepackage[normalem]{ulem}
\usepackage{thm-restate}
\usepackage{cancel}
\usepackage[utf8]{inputenc}
\usepackage[T1]{fontenc}

\usepackage{lipsum}

\usepackage[margin=1in]{geometry}
\usepackage{tikz}
\usetikzlibrary{
	arrows.meta,
	positioning,
	calc,
}

\tikzset{
  lattice node/.style={circle, draw, minimum size=8mm, inner sep=0pt},
  special node/.style={lattice node, fill=yellow!40},
  lattice edge/.style={line width=0.8pt},
  special edge/.style={lattice edge, red, very thick},
}

\usepackage{amsthm}

\newtheorem{lemma}{Lemma}

\newtheorem{definition}{Definition}



\theoremstyle{definition} 
\newtheorem{example}{Example} 
\newtheorem{proposition}{Proposition}

\newtheorem*{statement*}{}

\newtheorem{principle}{Principle}

\definecolor{js}{rgb}{0.578125,0.23828125,0.9609375}
\definecolor{green}{rgb}{0.2,0.8,0.1}
\definecolor{red}{rgb}{1,0,0}

\newcommand{\TTT}{\textbf{T}}
\newcommand{\FFF}{\textbf{F}}
\newcommand{\mTT}{\mbox{\TTT}}
\newcommand{\mFF}{\mbox{\FFF}}
\newcommand{\mR}{\mathbb{R}}
\newcommand{\mC}{\mathbb{C}}

\newcommand{\LL}{\mathtt{L}}

\newcommand{\property}{valuation }

\newcommand{\PrI}[1]{{\small\downarrow}\left(#1\right)}
\newcommand{\GG}[1]{\mathsf{G}\left(#1\right)}

\newcommand{\GGn}[2]{\mathsf{G}^{(#1)}\left(#2\right)}

\newcommand{\SL}[1]{{\mathcal{S}}_{\mathcal{L}}(#1)}

\newcommand{\Ded}{\mathsf{D}}
\newcommand{\Val}{\mathcal{V}}

\usepackage{etoolbox}

\newcommand{\examplesymbol}{\ensuremath{\diamond}} 

\AtBeginEnvironment{example}{\normalfont}
\AtEndEnvironment{example}{\unskip\hfill\examplesymbol}

\begin{document}

\title{Reconstruction of finite Quasi-Probability and Probability from Principles: \\The Role of Syntactic Locality}
\author{Jacopo Surace}
\email{jacopo.surace@gmail.com}
\affiliation{Aix-Marseille University, CNRS, LIS, Marseille, France}

\begin{abstract}
Quasi-probabilities appear across diverse areas of physics, 
but their conceptual foundations remain unclear: they are often treated merely as computational tools, and operations like conditioning and Bayes' theorem become ambiguous. We address both issues by developing a principled framework that derives quasi-probabilities and their conditional calculus from structural consistency requirements on how statements are valued across different universes of discourse, understood as finite Boolean algebras of statements.
We begin with a universal valuation that assigns definite (possibly complex) values to all statements. The central concept is \emph{Syntactic Locality}: every universe can be embedded within a larger ambient one, and the universal valuation must behave coherently under such embeddings and restrictions. From a set of structural principles, we prove a representation theorem (Theorem~1) showing that every admissible valuation can be re-expressed as a finitely additive measure on mutually exclusive statements, mirroring the usual probability sum rule. We call such additive representatives \emph{pre-probabilities}. This representation is unique up to an additive regraduation freedom. When this freedom can be fixed canonically, pre-probabilities reduce to \emph{finite quasi-probabilities}, thereby elevating quasi-probability theory from a computational device to a uniquely determined additive representation of universal valuations. Classical finite probabilities arise as the subclass of quasi-probabilities \textit{stable under relativisation}, i.e., closed under restriction to sub-universes.
Finally, the same framework enables us to define a coherent theory of conditionals, yielding a well-defined \emph{generalized Bayes' theorem} applicable to both pre-probabilities and quasi-probabilities. We conclude by discussing additional regularity conditions, including the role of rational versus irrational probabilities in this setting.
\end{abstract}

\maketitle

\section{Introduction}
What is a probability, and what is it a probability of? In the standard classical paradigm, probabilities express uncertainty: a statement about the world is assumed to be either true or false, and when an agent remains ignorant of the underlying fact, probability represents this incomplete knowledge. In this view, probability is not a quality of the world itself, but a reflection of an agent’s epistemic situation. However, the structure of quantum theory creates a tension with this intuition. Even with a complete specification of preparation and measurement, the theory maintains that no refinement of the description can return certain answers for all questions one might reasonably ask.

We assume that this is not a technical limitation, but it is a reflection of the fact that quantum mechanics describes a world of intrinsic randomness. Recent advancements have demonstrated that observed randomness in quantum phenomena can be rigorously certified as fully intrinsic \cite{dhara2014, acin2016, meng2024}, a foundational insight that has enabled the development of quantum random number generators \cite{Herrero-Collantes2017} and novel methods for quantifying the inherent unpredictability of measurements \cite{senno2022}. If this limit on knowledge is absolute, it becomes unclear what it means to interpret quantum probabilities as degrees of \enquote{missing information}. Missing information about what, exactly~\footnote{This question echoes the inquiry posed by Fuchs regarding the nature of information in the quantum realm in section $4$ of \cite{fuchs2002a}, where an alternative point of view is explored. In particular one can interpret quantum probabilities as \enquote{doxastic quantities} \cite{stacey2019}, from the Greek \enquote{doxa} translatable as \enquote{opinion}, versus the greek \enquote{episteme} translatable as \enquote{knowledge}.}?

The struggle to define probability is not unique to the quantum domain. For over a century, the foundations of the field have remained the subject of ongoing debate \footnote{As noted by A. Hájek \cite{hajek2023}, referencing E.T. Bell who quotes B. Russell: \enquote{\textit{Probability is the most important concept in modern science, especially as nobody has the slightest notion what it means.}}}. 
To bypass these metaphysical stalemates, the modern canon shifted its focus towards internal coherence. This culminated in Kolmogorov’s axiomatic formalisation \cite{kolmogorov1933}, which successfully decoupled mathematical probability from its physical interpretations. While this allowed the theory to be applied universally, it also spurred a tradition of reconstructions: attempts to identify the conceptually motivated principles that single out specific probabilistic formalisms \cite{fine2014}. Such reconstructions have proven to be fundamental instruments for resolving practical problems where pure mathematical tools offer no prescription \cite{richardson1971, stone1975, barone2008, pressacco2007}.

Quantum mechanics followed a strikingly similar trajectory. Its canonical axiomatic formulation was set out by von Neumann \cite{vonneumann2018} only a year before Kolmogorov’s work; yet, much like its probabilistic counterpart, it lacks a single agreed-upon interpretation. This has inspired a vast programme of reconstruction attempts \cite{hardy2001b, hardy2013, chiribella2011a, masanes2011a, dakic2011, barnum2013, janotta2014, bretthorst1988, fuchs2002a, barrett2007a, clifton2003, fuchs2011, pawlowski2009, selby2021, wilce2018, piron1976a, goyal2008, mackey2004}. Because quantum mechanics and probability theory are so closely intertwined, many interpret the former as a generalised or non-commutative form of the latter. Central to this intersection is the language of quasi-probabilities, which has been extensively developed within quantum theory \cite{wigner1932, hiley1987a, kirkwood1933, moyal1949, glauber1963, gross2006, wootters1987, ferrie2008e, spekkens2008b, pashayan2015, zhu2016, schmid2024a, braaschjr.2022, surace2024a, lostaglio2023, levy2020}. Yet, quasi-probabilities themselves remain comparatively underexplored as independent objects; they are often treated as mere technical devices useful in quantum theory as well as other domains. 
A reconstruction of quasi-probability theory in its own right is missing.

If the probabilities that appear in quantum theory must be understood differently from those in the standard paradigm, and if quantum theory in fact requires quasi-probabilities, then we are led to think that quasi-probabilities are not merely technical tools for handling probabilities more conveniently. Rather, they may be something more, namely objects that call for a different kind of interpretation. In this work, we take a first step in that direction. Our aim is not to reconstruct the quasi-probabilities that arise in quantum mechanics; instead, we seek to understand quasi-probabilities in their own right. 
We avoid forcing quasi-probabilities into the classical paradigm of epistemic uncertainty. For this reason, we introduce a \textit{Universal Valuation} $V$, a way of assigning to each statement a definite value that is not restricted to zero or one. We view a universal valuation as quantifying a property of each statement, and we characterise it in the same spirit as one characterises a physical quantity.

To characterise $V$ we propose a set of motivated principles, inspired in part by ideas from physics and relying on the concept of Syntactic Locality. Syntactic Locality formalises a simple intuition: a statement is always understood relative to a larger context of statements, much as in physics any system can be viewed as a subsystem of a larger ambient system. In our setting this becomes a consistency requirement across embeddings. Whenever one syntactic universe embeds into a larger one, the valuation should restrict coherently along that embedding. This compatibility across hierarchies of systems places strong constraints on the behaviour of $V$. Starting from this we lay out five principles that allow us to fully characterise the structure of universal valuations $V$. This defines our programme.

The central technical outcome of the reconstruction and our main result is that the principles force an additive structure. Concretely, we prove (Theorem~1) that every admissible universal valuation admits a reparametrisation as a finitely additive representative on disjoint joins, unique up to an additive regraduation (gauge) freedom. We define a \emph{pre-probability} to be any such additive representative. We then identify quasi-probabilities as those pre-probabilities for which a canonical reparametrisation can be fixed. Finally, we recover finitely additive probabilities as exactly the quasi-probabilities that satisfy an additional stability property (stability under relativisation): the set of all probabilities is the set of all quasi-probabilities that behaves well when considered on local universes.

The same framework also yields as a second outcome of this work a coherent theory of conditionals: it resolves the usual difficulties in defining conditional quasi-probabilities and provides natural generalisations of the rule of total probability and of Bayes' theorem. This differs from the standard practice, where one simply applies the calculus of ordinary probability to an enlarged codomain that includes values outside $[0,1]$. Here the calculus is obtained together with the underlying structure, and when restricted to the classical domain it reproduces the familiar rules.

As a further outcome presented in the appendix, due to the finite additive structure, our reconstruction singles out rational probabilities. Allowing irrational-valued probabilities introduces additional descriptive complexity that is most naturally handled at the level of pre-probabilities. This is in line with the arguments of de Finetti  in favour of finite probability \cite{definetti2017d,dubins2014}.  We argue in favour of rational probabilities both on technical grounds and on interpretational grounds, and we use this to motivate a critique of infinite or hypothetical frequentism and of approaches that take irrational-valued probabilities as primitive. However, we also show that additional requirements of regularity recover a full treatment of irrational probabilities. 

Our reconstruction is inspired by both the logical-Bayesian school \cite{carnap1950, jeffreys1998a, cox2001, jaynes2003, caticha2009, knuth2005c} and the subjective-Bayesian tradition \cite{ramsey2016e, definetti1931, savage1954, jeffrey1990a, bernardo2001}.

\section{Contributions}
This paper develops a principled reconstruction of additive (quasi-)probabilistic calculi from a notion of Syntactic Locality and a set of principles on \enquote{universal valuations} of statements. Concretely, (i) we show that any admissible universal valuation $V$ admits a reparametrisation $R=\varphi\circ V$ that is finitely additive on disjoint joins (Theorem~1), that is, an additive representative which we call a \emph{pre-probability}. (ii) We characterise the resulting gauge freedom (Lemma~2) and introduce semantic frames as the minimal gauge-invariant data needed to compare representations across sub-universes. (iii) We provide a general conditioning/synchronisation framework for pre-probabilities and quasi-probabilities, yielding well-defined rules of total (quasi-)probability and a generalised Bayes' theorem even in settings where standard quasi-probability conditionals are ambiguous. (iv) We show that classical probabilities arise as those quasi-probabilities that are stable under relativisation; in this regime, conditioning reduces to the standard Kolmogorov/Bayes calculus. (v) The appendices contain proofs and additional discussions, including the consequences of assuming additional regularity assumptions and interpretational remarks.

\paragraph{Organisation.}
Section~II introduces universal valuations and states the reconstruction problem. Section~III presents universes and sub-universes and formalises Syntactic Locality. Section~IV gives the principles and derives the main structural constraints. Section~V proves the additivity theorem and introduces pre-probabilities and their gauge freedom. Section~VI develops conditioning and synchronisation, including generalised Bayes' theorem and total conditioning, and Section~VII  describes the probability (positive) regime via stability under relativisation. The appendices contain proofs and additional discussions, including regularity assumptions and interpretational remarks.

\section{Universal valuations and reconstruction of quasi-probability theory}

In this work we will occasionally refer to \textit{standard} probability theory, by which we mean the axiomatic framework introduced by Kolmogorov~\cite{kolmogorov1933}. For convenience, we recall Kolmogorov's three axioms below, since it will be useful to keep them in view throughout the reconstruction. This manuscript will ultimately recover finitely additive probability theory, thus Kolmogorov's axiomatic probability with countable additivity substituted by finite additivity in line with de Finetti \cite{regazzini2013a,howson2008a}. However we will recover the axioms in reverse logical order, starting from the third axiom and working back to the first. When referring to Kolmogorov's standard probability we use the symbol $\mathcal{P}$. We will also refer to \textit{standard} quasi-probability, denoted by $\mathcal{Q}$. By standard quasi-probability we mean the theory obtained by keeping Kolmogorov's axioms except that the first one (non-negativity) is weakened by allowing $\mathcal{Q}$ to take values in the whole $\mathbb{C}$.

For Kolmogorov a probabilistic model consists of a set $\Omega$ of all possible outcomes, a collection $\mathcal F$ of subsets of $\Omega$ (the \emph{events}) closed under complements and countable unions, and a function $\mathcal{P}:\mathcal F\to\mathbb R$ that assigns a real number to each event. Kolmogorov’s axioms require that:
\begin{enumerate}[label=\arabic*.]
	\item \label{ax:non-neg}Non-negativity: for every $A\in\mathcal F$, one has $\mathcal{P}(A)\ge 0$.
	\item Normalisation: $\mathcal{P}(\Omega)=1$.
	\item Countable additivity: if $A_1,A_2,\dots\in\mathcal F$ are pairwise disjoint, then
	\[
	\mathcal{P}\!\left(\bigcup_{i=1}^{\infty} A_i\right)=\sum_{i=1}^{\infty} \mathcal{P}(A_i).
	\]
\end{enumerate}
Standard quasi-probabilities are functions $\mathcal{Q}$ defined as $\mathcal{P}$, but omitting axiom \ref{ax:non-neg}.

In our reconstruction we begin from a small set of principles and progressively recover Kolmogorov’s axioms. 
In what we regard as a merit of this approach, our starting principles imply a weakening of Kolmogorov’s third axiom, replacing countable additivity with finite additivity:
\begin{enumerate}[label=3.\arabic*., start=1]
	\item Finite additivity: if $A_1,\dots,A_n \in \mathcal F$ are pairwise disjoint, then
	\[
	\mathcal P\!\left(\bigcup_{i=1}^{n} A_i\right)
	= \sum_{i=1}^{n} \mathcal P(A_i).
	\]
\end{enumerate}

As we already said, our reconstruction yields a natural hierarchy of theories, each contained in the preceding one, distinguished by which of Kolmogorov’s axioms they satisfy. We call the most general theory \emph{pre-probability}: it satisfies finite additivity 3.1., but need not satisfy axioms 2. and 1.. Imposing normalisation yields \emph{quasi-probability}: every quasi-probability is a pre-probability, but not conversely. Finally, when the theory is specialised so as to satisfy all three axioms, one recovers standard finite probability theory.

\section{Universes and sub-universes: the idea of Syntactic Locality}

Everything we can talk about can be expressed as a statement and in this work we are exactly interested in everything we can talk about.  This is why at the core of our reconstruction is the common notion of a \emph{statement}. To maintain this generality, we do not assume that statements necessarily correspond to facts about the world, nor that they just encode an agent’s information. They can be posed in entirely fictional settings and still count as statements in our sense. One is totally free to talk or write about a fantasy world. 
Even at this level of generality, statements come with formation rules and admit a systematic organisation. They can be combined into more complex statements, transformed by operations, and gathered into a universe of discourse. 
If an agent lists all admissible statements it can form and organise them, it might find that many turn out to be equivalent. Quotienting by this equivalence produces a finite structure, which we take to be the whole universe of what can be said, and in this formal sense, thought within the discourse~\cite{boole2003}.
This notion of universes of all possible statements is central to this work. We call these syntactic universes. 
We take classical propositional logic as our background formalism, since it provides a  simple and non-exotic set of rules for how statements are formed and how reasoning proceeds. Our focus will be on the most basic algebraic structure associated with this logic: the Boolean algebra of propositions~\footnote{Concretely, this Boolean algebra can be obtained as the Lindenbaum--Tarski algebra of classical propositional logic, formed by quotienting formulas by a purely syntactic equivalence relation.}.
Within this formalism we can formally characterise syntactic universes as follows
\begin{definition}[Syntactic universe]
	A Syntactic Universe (or simply a \emph{Universe}) is a set $\mathcal L$ of equivalence classes of statements such that
	\[
	(\mathcal L,\wedge,\vee,\neg,\bot,\top)
	\]
	is a Boolean lattice. We write $s\in\mathcal L$ for an element (a statement). The operations $\wedge,\vee,\neg$ are the algebraic counterparts of conjunction, disjunction, and negation, and $\bot,\top$ are the bottom and top elements. We denote by $\mathcal{L}_n$ a universe with $n\in \mathbb{N}$ atomic statements.
\end{definition}

The top element $\top$ is the join of all the statements in the universe, while the $\bot$ is the meet of all the statements of the universe. Different universes can have different top elements, as the statements of different universes can be different, however we will consider the $\bot$ element to be the same for every universe, in analogy to the empty set or the vacuum state. 
The crucial point that lets us call these structures \textit{universes} is that they are closed under their operations, every new statement formed with any possible combination of operations within the universe is contained inside the universe already. 
However, given a universe $\mathcal{L}$ one can find subsets of statements that form again a structure of Boolean lattice with the same conjunction and disjunction of $\mathcal{L}$ and contains the same bottom element of $\mathcal{L}$. We call these structures sub-universes

\begin{definition}[Sub-universe]
Let $\mathcal{L}$ be a syntactic universe. A \emph{syntactic sub-universe} of $\mathcal{L}$ is a subset
$\tilde{\mathcal{L}} \subseteq \mathcal{L}$ such that $\bot\in \tilde{\mathcal{L}}$, $\tilde{\mathcal{L}}$ is closed under the ambient
$\wedge$ and $\vee$, and there exists an element $\tilde\top\in \tilde{\mathcal{L}}$ and a map
$\tilde\neg:\tilde{\mathcal{L}}\to \tilde{\mathcal{L}}$ for which
\[
(\tilde{\mathcal{L}},\wedge,\vee,\tilde\neg,\bot,\tilde\top)
\]
is a Boolean lattice (with $\wedge,\vee$ the restrictions of those of $\mathcal{L}$).
\end{definition}

Every sub-universe can be understood as a universe in its own right. Hence we may regard any universe as embedded in a larger ambient universe, of which it captures only a fragment closed under the inherited conjunction and disjunction. This motivates a notion of \emph{Syntactic Locality}~\footnote{Syntactic Locality refers only to the embedding/restriction relations between Boolean universes of statements (coarse-grainings and principal ideals). It is not a locality assumption about spacetime, signalling, or causal structure. The term is intended to highlight the subsystem-like role of sub-universes in the logical calculus.}: a universe provides a local description relative to some ambient discourse.
 The terminology is inspired by the way physical theories treat subsystems via reductions and dilations, though the present notion is purely syntactic. 
Thus, an agent’s universe of statements should be seen as a local fragment of a larger discourse, possibly excluding statements that exist in the ambient universe but cannot be formulated within the agent’s own language.
The notion of Syntactic Locality is crucial. Every universe must always be thought of as a local universe of a bigger ambient universe. Every universe is just a (syntactically) local description of a big universe.
It can be useful for this to imagine there is a universal, possibly infinite, but atomic, Boolean universe $\mathbb{U}$, such that all the universes $\mathcal{L}$ one talks about are just local universes of $\mathbb{U}$. Even if we can imagine $\mathbb{U}$, because Syntactic Locality tells us that we consider only local universes always, every principle, theorem, lemma or proposition that appears in this work quantifies only over finite universes $\mathcal{L}$. In particular, because of Syntactic Locality, no axiom quantifies over all of $\mathbb U$.
 
\begin{figure}[h!]
	\begin{tikzpicture}
		\node[lattice node, fill=blue!25, line width=2pt] (bot) at (0,0) {$\bot$};
		
		\node[lattice node, fill=blue!25, line width=2pt] (a) at (-1.5, 1) {$a$};
		\node[lattice node, fill=blue!25, line width=2pt] (b) at (-0.5, 1) {$b$};
		\node[lattice node] (c) at (0.5, 1)  {$c$};
		\node[lattice node] (d) at (1.5, 1)  {$d$};
		
		\node[lattice node, fill=blue!25, line width=2pt] 			(ab) at (-2.5, 2) {$a\vee b$};
		\node[lattice node] 			(ac) at (-1.5, 2) {$a\vee c$};
		\node[lattice node]			(ad) at (-0.5, 2) {$a\vee d$};
		\node[lattice node]    			(bc) at (0.5, 2)  {$b\vee c$};
		\node[lattice node]			(bd) at (1.5, 2)  {$b\vee d$};
		\node[lattice node] 			(cd) at (2.5, 2)  {$c\vee d$};
		
		\node[lattice node] 			(abc) at (-1.5, 3) {\scalebox{0.6}{$a\vee b\vee c$}};
		\node[lattice node] 			(abd) at (-0.5, 3) {\scalebox{0.6}{$a\vee b\vee d $}};
		\node[lattice node] 			(acd) at (0.5, 3)  {\scalebox{0.6}{$a\vee c\vee d$}};
		\node[lattice node] 			(bcd) at (1.5, 3)  {\scalebox{0.6}{$b\vee c\vee d$}};
		
		\node[lattice node] 			(top) at (0, 4) {$\top$};
		
		\draw[lattice edge, very thick] (bot) -- (a);
		\draw[lattice edge, very thick] (bot) -- (b);
		\draw[lattice edge] (bot) -- (c);
		\draw[lattice edge] (bot) -- (d);
		
		\draw[lattice edge, very thick] (a) -- (ab); 	\draw[lattice edge] (a) -- (ac); 	\draw[lattice edge] (a) -- (ad);
		\draw[lattice edge, very thick] (b) -- (ab); 	\draw[lattice edge] (b) -- (bc); 	\draw[lattice edge] (b) -- (bd);
		\draw[lattice edge] (c) -- (ac); 	\draw[lattice edge] (c) -- (bc); 	\draw[lattice edge] (c) -- (cd);
		\draw[lattice edge] (d) -- (ad); 	\draw[lattice edge] (d) -- (bd); 	\draw[lattice edge,] (d) -- (cd);
		
		\draw[lattice edge] (ab) -- (abc); 	\draw[lattice edge] (ab) -- (abd);
		\draw[lattice edge] (ac) -- (abc); 	\draw[lattice edge] (ac) -- (acd);
		\draw[lattice edge] (ad) -- (abd); 	\draw[lattice edge] (ad) -- (acd);
		\draw[lattice edge] (bc) -- (abc); 	\draw[lattice edge] (bc) -- (bcd);
		\draw[lattice edge] (bd) -- (abd); 	\draw[lattice edge] (bd) -- (bcd);
		\draw[lattice edge] (cd) -- (acd); 	\draw[lattice edge] (cd) -- (bcd);
		
		\draw[lattice edge] (abc) -- (top);
		\draw[lattice edge] (abd) -- (top);
		\draw[lattice edge] (acd) -- (top);
		\draw[lattice edge] (bcd) -- (top);
	\end{tikzpicture}
	\caption{A universe $\mathcal{L}_4$ with highlighted in blue the relative (sub-)universe $\mathcal{L}_{a\vee b}$.}
	\label{fig:ex-sub-universe}
\end{figure}
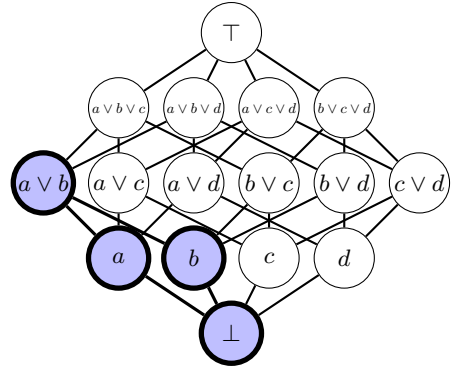

We distinguish two ways in which a universe may sit inside an ambient syntactic universe $\mathcal L$.

First, one may take a Boolean sub-lattice $\tilde{\mathcal L}\subseteq \mathcal L$ in the strict sense: $\tilde{\mathcal L}$ contains
$\bot$ and $\top$ and is closed under the ambient operations $\wedge$, $\vee$, and $\neg$. Such a $\tilde{\mathcal L}$ inherits the
entire Boolean structure of $\mathcal L$ by restriction. Conceptually, these sub-universes act as coarse-grainings of the ambient
universe, since they retain Boolean reasoning but forget distinctions present in $\mathcal L$.

Secondly, one may restrict $\mathcal L$ to the principal ideal generated by an element $t\in\mathcal L$. We define this ideal as
\[
\downarrow t \;:=\; \{\, s\in\mathcal L : s\wedge t = s \,\}.
\]
This set is automatically closed under the ambient $\wedge$ and $\vee$ and contains $\bot$, but it typically fails to be closed
under the ambient complement $\neg$, and therefore need not form a Boolean lattice with top element $t$. Nonetheless, $\downarrow t$
can be made into a Boolean lattice by endowing it with the relative complement
\[
\neg_t s = \neg(\neg t \vee s)
\qquad s\in\downarrow t,
\]
so that $(\downarrow t,\wedge,\vee,\neg_t,\bot,t)$ is Boolean. We call sub-universes arising in this way \emph{relative universes} (see FIG.~\ref{fig:ex-sub-universe}), we formalise them in the following definition

\begin{definition}[Relative universe]
	Let $\mathcal L$ be a universe and let $t\in\mathcal L$.
	The \emph{relative universe at $t$} is the structure
	\[
	\mathcal L_t \;:=\; \downarrow t \;=\; \{\, s\in\mathcal L : s\wedge t = s \,\},
	\]
	equipped with the operations $\wedge$ and $\vee$ inherited from $\mathcal L$, with bottom element $\bot$, top element $t$, and
	with \emph{relative complement} $\neg_t:\mathcal L_t\to\mathcal L_t$ defined by
	\[
	\neg_t s = \neg(\neg t \vee s)
	\qquad s\in\mathcal L_t,
	\]
	where $\neg$ denotes the complement in $\mathcal L$.
\end{definition}


\section{Universal Valuations of statements}
Given a universe $\mathcal{L}$, we call \textit{Universal Valuation} a function 
\[
V:\mathcal{L}\to \mC
\]
that assigns to every statement in the universe a complex number~\footnote{All the following can analogously be formulated for a valuation $\tilde{V}:\mathcal{L}\to\mR$ just by substituting $\mC$ with $\mR$. Since $(\mC,+,0)$ is isomorphic to $(\mR,+,0)$ the proof of Theorem~1 will be identical. In general, any group isomorphic to $(\mC,+,0)$ is suitable in the proof of Theorem~1. Thus one might have a representation additive on real numbers, but also, for example, additive on quaternions. We formulate the theorem referring to complex numbers as we have in mind quasi-probabilities in quantum mechanics and these are most generally represented by vectors of complex numbers, i.e. Kirkwood-Dirac distribution~\cite{lostaglio2023a,schmid2024} or any general frame~\cite{ferrie2008,ferrie2009,ferrie2011}.}. For each universe $\mathcal{L}$ we are interested in characterising the set $\Val(\mathcal{L})$  of admissible valuations. 

We call $V$ \textit{universal} valuation to highlight the idea, derived from the concept of Syntactic Locality, that this is indeed a restriction to local universes of a universal valuation on $\mathbb{U}$. For this reason we will always have that valuations are stable under restrictions, that is if for two universes $\mathcal{L}',\mathcal{L}$ we have that $\mathcal{L}'\subseteq \mathcal{L}$ and $V\in\Val(\mathcal{L})$, then $V|_{\mathcal{L}'}\in\Val(\mathcal{L}')$.

We want to characterise the set of admissible valuations that obey the following principles
\begin{enumerate}[label=\roman*]
	\item \label{dec-prin:classical-compatibility}\textbf{Compatibility with classical propositional logic}:
	The set of admissible valuations must reduce to the set of standard true–false valuations of classical propositional logic in the appropriate limit.

	\item \label{dec-prin:syntactic-locality}\textbf{Local deducibility}:
	There exists a deduction rule that allows one to compute the valuation of statements. This deduction rule allows one to compute the valuation of $s$ from the knowledge of the valuation of all the other statements in the universe. In every universe $\mathcal{L}$, the value assigned by $V$ to a statement is uniquely determined by the values assigned to all the other statements of $\mathcal L$.

	\item \label{dec-prin:universality}\textbf{Universality}:
	The deduction rules are universal: they depend only on the logical structure of the universe and are independent of how its elements are labelled or of the particular valuation $V$. 

	\item \label{dec-prin:realisability}\textbf{Maximum realisability}: The set of admissible valuations is the largest realisable set compatible with the principles.
	
	\item \label{dec-prin:symmetry-removes-freedom}\textbf{Symmetry removes valuational freedom}:  
	 Symmetry removes structure, and structure is the only source of valuational freedom. Hence, when atomic distinctions are erased, the value of a single non-bottom statement fixes the valuation uniquely.

\end{enumerate}

These principles are formalised precisely in the next subsection.

\subsection{Formalisation of the principles}

In this section we discuss and formalise mathematically the idea behind each principle. In this formalisation the principles are intertwined and are not independent. The formalisation of the idea behind each of them will depend also on the structure that earlier introduced principles dictate. For this reason the order in which the principles are introduced matters. 

\subsubsection{Compatibility with classical propositional logic}

The first principle we introduce is compatibility with classical propositional logic.

In the standard paradigm probability quantifies an agent's uncertainty about the property of statements to be either true or false. This tells us that we already recognise truth and falsity as quantification of property of a statement. Thus, among all the value of a valuation $V$ there must be two values that correspond to truth and falsity. Truth and falsity are a quantification of the property of statements. We require the universal valuation $V$ to acknowledge and include this classical limit. 

We formalise this requirement as follows.

On any universe $\mathcal{L}$, a truth assignment is a function
\[
\LL:\mathcal{L}\to\{\mTT,\mFF\},
\]
which declares each statement either true or false.

The truth assignments we consider include all classical Boolean truth assignments on a Boolean lattice, together with the degenerate assignment that maps every statement in the universe to $\mFF$. This degenerate assignment is usually excluded in classical logic, largely because one typically fixes a universe of discourse and does not treat arbitrary sub-universes as legitimate universes in their own right. In this work, however, locality plays a central role, and sub-universes must be regarded as proper universes. Once one allows restrictions to sub-universes of a Boolean lattice equipped with classical truth assignments, the all-$\mFF$ assignment appears naturally and must be included.

We say that  a set of valuations is compatible with classical propositional logic if there exist two distinguished numbers $\tau,\varphi\in \mC$ with $\tau\neq\varphi$ that play the roles of $\mTT$ and $\mFF$, and such that passing between truth assignments and $\{\tau,\varphi\}$-valued valuations is always possible and unambiguous (see FIG.~\ref{fig:classical-correspondence} for a pictorial representation of the correspondence between valuations and classical truth assignment).

\begin{principle}[Compatibility with classical propositional logic]
	A set of valuations is compatible with classical propositional logic if it respects these properties.
	\begin{enumerate}[label=(\roman*)]
  \item \textbf{Two-valued case.}
  Two-valued valuations $V:\mathcal L\to\{\tau,\varphi\}$ are in bijection with truth assignments
  $\LL:\mathcal L\to\{\mTT,\mFF\}$ via
  \begin{align*}
    \LL(s)=\mTT \ &\Longleftrightarrow\ V(s)=\tau,\\
    \LL(s)=\mFF \ &\Longleftrightarrow\ V(s)=\varphi,
  \end{align*}
		\item The valuation of the bottom element is always fixed  as in the classical propositional logic case:
		\[
		V(\bot)\coloneqq \varphi.
		\]
	\end{enumerate}
\end{principle}

This principle says that whenever a valuation happens to be $\{\tau,\varphi\}$-valued it is, in every respect, a classical truth assignment, and every classical truth assignment can be seen as a special case of our framework.
\begin{figure}[h!]
	\centering
	\begin{tikzpicture}
		\node[align=center] at (0,3.6) {Universal Valuation $V$};
		\node[lattice node, fill=red!25]   (Lbot) at (0,0) {$\bot$};
		
		\node[lattice node, fill=red!25]   (La)   at (-1,1) {$a$};
		\node[lattice node, fill=blue!25]  (Lb)   at (0,1)  {$b$};
		\node[lattice node, fill=red!25]   (Lc)   at (1,1)  {$c$};
		
		\node[lattice node, fill=blue!25]  (Lab)  at (-1,2) {$a\vee b$};
		\node[lattice node, fill=red!25]   (Lac)  at (0,2)  {$a\vee c$};
		\node[lattice node, fill=blue!25]  (Lbc)  at (1,2)  {$b\vee c$};
		
		\node[lattice node, fill=blue!25]  (Ltop) at (0,3) {$\top$};
		
		\draw[lattice edge] (Lbot) -- (La);
		\draw[lattice edge] (Lbot) -- (Lb);
		\draw[lattice edge] (Lbot) -- (Lc);
		
		\draw[lattice edge] (La) -- (Lab);
		\draw[lattice edge] (La) -- (Lac);
		
		\draw[lattice edge] (Lb) -- (Lab);
		\draw[lattice edge] (Lb) -- (Lbc);
		
		\draw[lattice edge] (Lc) -- (Lac);
		\draw[lattice edge] (Lc) -- (Lbc);
		
		\draw[lattice edge] (Lab) -- (Ltop);
		\draw[lattice edge] (Lac) -- (Ltop);
		\draw[lattice edge] (Lbc) -- (Ltop);
		
		\draw[<->, line width=0.8pt] (1.7,1.5) -- (2.5,1.5);
		
		\begin{scope}[xshift=4cm]
			\node[align=center] at (0,3.6) {Classical truth assignment $\LL$};
			\node[lattice node, fill=gray!25, align=center] (Rbot) at (0,0) {\(\bot\)\\\(\mFF\)};
			
			\node[lattice node, fill=gray!25, align=center] (Ra)   at (-1,1) {\(a\)\\\(\mFF\)};
			\node[lattice node, fill=gray!25, align=center] (Rb)   at (0,1)  {\(b\)\\\(\mTT\)};
			\node[lattice node, fill=gray!25, align=center] (Rc)   at (1,1)  {\(c\)\\\(\mFF\)};
			
			\node[lattice node, fill=gray!25, align=center] (Rab)  at (-1,2) {\(a\vee b\)\\\(\mTT\)};
			\node[lattice node, fill=gray!25, align=center] (Rac)  at (0,2)  {\(a\vee c\)\\\(\mFF\)};
			\node[lattice node, fill=gray!25, align=center] (Rbc)  at (1,2)  {\(b\vee c\)\\\(\mTT\)};
			
			\node[lattice node, fill=gray!25, align=center] (Rtop) at (0,3) {\(\top\)\\\(\mTT\)};
			
			\draw[lattice edge] (Rbot) -- (Ra);
			\draw[lattice edge] (Rbot) -- (Rb);
			\draw[lattice edge] (Rbot) -- (Rc);
			
			\draw[lattice edge] (Ra) -- (Rab);
			\draw[lattice edge] (Ra) -- (Rac);
			
			\draw[lattice edge] (Rb) -- (Rab);
			\draw[lattice edge] (Rb) -- (Rbc);
			
			\draw[lattice edge] (Rc) -- (Rac);
			\draw[lattice edge] (Rc) -- (Rbc);
			
			\draw[lattice edge] (Rab) -- (Rtop);
			\draw[lattice edge] (Rac) -- (Rtop);
			\draw[lattice edge] (Rbc) -- (Rtop);
		\end{scope}
	\end{tikzpicture}
	\caption{Correspondence between valuation and classical truth assignment. On the left in different colours are represented the values $\varphi$ and $\tau$ for $V$.}
	\label{fig:classical-correspondence}
\end{figure}
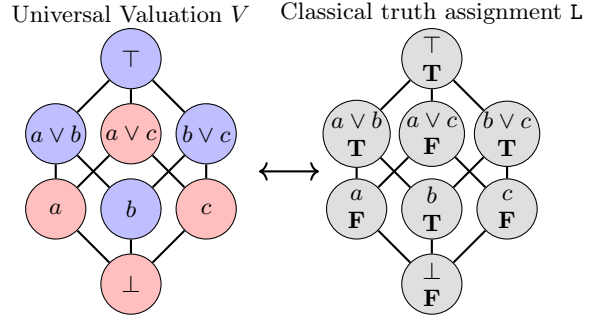

\subsubsection{Local Deducibility}

The principle of Local Deducibility says that, within a fixed syntactic universe, the valuation of any statement is determined by the valuations of all the other statements in the same universe.
Equivalently, an agent who knows the valuation of every statement in its universe except for a single statement $\tilde s$ has enough information to uniquely reconstruct the valuation of $\tilde s$ (see FIG.~\ref{fig:syntactic-locality} for a graphical representation of a use case of local deducibility).

In order for an agent to reconstruct the valuations of its distinguished statement there must be some other non-trivial statements in the universe, thus the principle of Local Deducibility holds whenever the universe has more than one atomic statement. 
Indeed, if $\mathcal L=\{\bot,\top\}$ an agent would have no other data to deduce the valuation of $\top$, in fact, from principle~\ref{dec-prin:classical-compatibility} $V(\bot)\coloneqq \varphi$ thus we do not include it as data for the deduction. In this case the value of the top element is free.

We formalise this principle as follows.
\begin{principle}[Local Deducibility]
	\label{prin:syntactic-locality}
	Let $\mathcal{L}$ be a universe described by a finite Boolean lattice with $n\ge 2$ atoms.
	For every distinguished statement ${\tilde s\in\mathcal L\setminus\{\bot\}}$ there exists a function, a \textbf{deduction rule}, $\Ded_{\mathcal L,\tilde s}$ such that, for every valuation $V\in\Val(L)$,
	\begin{equation}
		\label{eq:syntactic-locality}
		V(\tilde s)
		=
		\Ded_{\mathcal L,\tilde s}\!\left(\{V(s)\}_{s\in\mathcal L\setminus\{\tilde s,\bot\}}\right).
	\end{equation}
Furthermore, if $\mathcal{L}_a=\{\bot,a\}$ is a $1$-atom universe, then for every $x\in C$ there exists $V\in\Val(L_a)$ such that $V(a)=x$.
\end{principle}

In other words, the value assigned to $\tilde s$ is fixed by the valuation on all other statements (here we exclude explicitly the valuation of $\bot$ as a reminder that it is a fixed value).
Since the deduction rule $\Ded_{\mathcal L,\tilde s}$ is a function, this deduction is unambiguous.

This principle seems quite weak as it requires knowing the valuation of \textit{all} other statements in the universe. Inspired by physics, one could imagine that a more sensible principle would be asking for the valuation of a statement to be completely deducible from its nearest neighbour, for whatever definition of nearest neighbour one can imagine in the case of syntactic universes. However, the interplay of this principle, with the next one  (Universality) and Syntactic Locality will result in a deduction rule that is indeed \enquote{more local} than what this principle alone asks for.

\begin{figure}[h!]
	\begin{tikzpicture}
		\node[lattice node, fill=green!25] (bot) at (0,0) {$\bot$};               
		
		\node[lattice node, fill=green!25] (a)  at (-1,1) {$a$};
		\node[lattice node, fill=green!25] (b)  at (0,1)  {$b$};
		\node[lattice node, fill=green!25] (c)  at (1,1)  {$c$};
		
		\node[lattice node, fill=green!25]  (ab) at (-1,2) {$a\vee b$};
		\node[lattice node] (ac) at (0,2)  {$a\vee c$};
		\node[lattice node, fill=green!25]  (bc) at (1,2)  {$b\vee c$};
		
		\node[lattice node, fill=green!25] (top) at (0,3) {$\top$};
		
		\draw[lattice edge] (bot) -- (a);
		\draw[lattice edge] (bot) -- (b);            
		\draw[lattice edge] (bot) -- (c);
		
		\draw[lattice edge] (a) -- (ab);
		\draw[lattice edge] (a) -- (ac);
		
		\draw[lattice edge] (b) -- (ab);
		\draw[lattice edge] (b) -- (bc);
		
		\draw[lattice edge] (c) -- (ac);
		\draw[lattice edge] (c) -- (bc);
		
		\draw[lattice edge] (ab) -- (top);
		\draw[lattice edge] (ac) -- (top);        
		\draw[lattice edge] (bc) -- (top);
		
		\node[anchor=west, xshift=-8.4pt, yshift=12pt] at (ac.east) {\large\textbf{?}};
	\end{tikzpicture}
	\caption{From the principle of Local Deducibility one can deduce  $V(a\vee c)$ from the valuation of all the green statements.}
	\label{fig:syntactic-locality}
\end{figure}
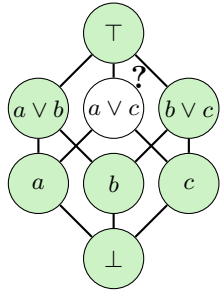

\subsubsection{Universality}

The principle of universality expresses that the rules of deduction depend only on the \emph{structure} of the syntactic universe, and not on arbitrary labels attached to statements.
For example we can imagine two syntactic universes $\mathcal{L}_X=\{\bot,a,b,\top_X\coloneqq a\vee b\}$ and $\mathcal{L}_Y=\{\bot,c,d,\top_Y\coloneqq c\vee d\}$. These two universes have the same structure, but the labels of the elements are different. We want the deduction rules to be independent, that is the valuation of $a$ should be deducible from the valuation of the other statements in $\mathcal{L}_X$ with the same deduction rule that one would use for computing the valuation of $c$ from the valuation of the other statements in $\mathcal{L}_Y$.

The notion of relabelling is captured by the notion of Boolean lattice isomorphism, thus we want for the rules of deduction to be invariant under Boolean lattice isomorphism. 
In finite Boolean lattices, statements at the same level (or rank) are isomorphic, so we expect them to obey the same deduction rule, while statements at different levels are not isomorphic and we expect them to obey different deduction rules.
Universality therefore identifies the deduction rule within each structural class of statements, that is it requires that the deduction rules of the principle~\eqref{dec-prin:syntactic-locality} depend only on the level of the statement in the Boolean lattice and not on the specific statement.

To make the interplay between universality and the idea of Local Deducibility explicit, in appendix \ref{app:global-local-universality} we split universality into two complementary requirements: a global one (within a fixed universe) and a local one (across isomorphic sub-universes).
Both can be summarised in the single principle of universality stating that the deduction rule depends only on the number $n$ of atoms of the universe (Boolean lattice isomorphisms do not change the number of atoms) and on the level $l$ of the distinguished statement (see FIG.~\ref{fig:levels} for a graphical representation of levels).

\begin{principle}[Universality]
	\label{prin:SL-U-rank}
	For each pair $(n,l)$ with $n\ge 2$ and $1\le l\le n$ there exists a deduction rule $\Ded_{n,l}$ such that whenever $\mathcal L$ is a Boolean lattice with $n$ atoms, $V\in \Val(\mathcal{L})$, and $\tilde s\in\mathcal L\setminus\{\bot\}$ at level $l$ of the lattice, one has
	\begin{equation}
		\label{eq:SL-U-rank}
		V(\tilde s)=\Ded_{n,l}\!\left(\{V(s)\}_{s\in\mathcal L\setminus\{\tilde s,\bot\}}\right).
	\end{equation}
	In particular, the deduction rule depends only on $n$ and on the level $l$ of the distinguished statement, and not on the specific universe or on the labels of its elements.
\end{principle}

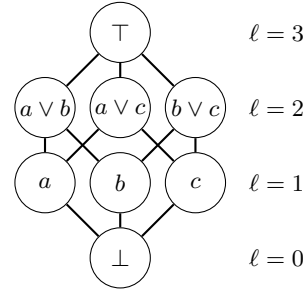
\begin{figure}[h!]
	\begin{tikzpicture}
		\node[lattice node] (bot) at (0,0) {$\bot$};
		
		\node[lattice node] (a) at (-1,1) {$a$};
		\node[lattice node] (b) at (0,1) {$b$};
		\node[lattice node] (c) at (1,1) {$c$};
		
		\node[lattice node] (ab) at (-1,2) {$a\vee b$};
		\node[lattice node] (ac) at (0,2) {$a\vee c$};
		\node[lattice node] (bc) at (1,2) {$b\vee c$};
		
		\node[lattice node] (top) at (0,3) {$\top$};
		
		\draw[lattice edge] (bot) -- (a);
		\draw[lattice edge] (bot) -- (b);
		\draw[lattice edge] (bot) -- (c);
		
		\draw[lattice edge] (a) -- (ab);
		\draw[lattice edge] (a) -- (ac);
		
		\draw[lattice edge] (b) -- (ab);
		\draw[lattice edge] (b) -- (bc);
		
		\draw[lattice edge] (c) -- (ac);
		\draw[lattice edge] (c) -- (bc);
		
		\draw[lattice edge] (ab) -- (top);
		\draw[lattice edge] (ac) -- (top);
		\draw[lattice edge] (bc) -- (top);
		
		\node[anchor=west] at (1.6,0) {$\ell=0$};
		\node[anchor=west] at (1.6,1) {$\ell=1$};
		\node[anchor=west] at (1.6,2) {$\ell=2$};
		\node[anchor=west] at (1.6,3) {$\ell=3$};
	\end{tikzpicture}
	\caption{The levels of a universe with three atomic statement labelled.}
	\label{fig:levels}
\end{figure}

\subsubsection{Maximum Realisability}
The principle of maximum realisability is meant to encode the fact that we want to consider all the possible valuations compatible with our principles. We want to be sure that if a valuation is compatible with our principle, then this valuation must be admissible. 
Since, by the idea of Syntactic Locality, we know that any syntactic universe and thus any valuation is always local in a bigger ambient universe, we have to approach the concept of maximum realisability from the point of view of local universes. For this purpose it is useful to rely on the following definition.
\begin{definition}[Consistent partial valuation]
	\label{def:consistent}
	Let $\mathcal{L}$ be a universe. A \textbf{partial valuation} on $\mathcal{L}$ is a map $p:\mathcal{S}\to \mC$ with $\mathcal{S}\subseteq \mathcal{L}$ not necessarily a sub-universe,
	and $\bot \in \mathcal{S}$. The partial valuation $p$ is \textbf{consistent} if:
	\begin{enumerate}
		\item $p(\bot)=\varphi$;
		
		\item for every sub-universe $\mathcal{L}'\subseteq \mathcal{L}$, whenever $ \mathcal{L}'\subseteq \mathcal{S}$ and $\mathcal{L}'\setminus\{\tilde s,\bot\}\neq \emptyset$,  for every $\tilde s\in  \mathcal{L}'\setminus\{\bot\}$ one has
		\[
		p(\tilde s)
		=
		\Ded_{\mathcal{L}',\tilde s}\!\Big(\{\,p(s)\,\}_{s\in \mathcal{L}'\setminus\{\tilde s,\bot\}}\Big).
		\]
	\end{enumerate}
\end{definition}

We call it a consistent partial valuation because we want to think of it as an \enquote{incomplete} specification of a complete valuation on a universe $\mathcal{L}$. In other words, we do not know the valuation of every statement in $\mathcal{L}$, but only of some of them. We denote by $\mathcal{S}\subseteq \mathcal{L}$ the subset of statements whose valuations are known.
It is consistent because we have no reason to believe that it fails to be the restriction of an admissible valuation on $\mathcal{L}$. Indeed, points~1 and~2 show that this partial assignment does not introduce any inconsistencies with the deduction rules we are able to check.

Once we have verified consistency, we may legitimately treat a consistent partial valuation as an \enquote{incomplete} specification of a complete valuation on $\mathcal{L}$. This is captured by the following principle, which is how we precisely formalise the notion of maximum realisability.

\begin{principle}[Maximum realisability]
	\label{prin:realisability}
	Let $\mathcal{L}$ be a syntactic universe. Every consistent partial valuation $p:S\to \mC$ on $\mathcal{L}$ extends to an admissible valuation $V\in\Val(L)$, that is,
	\[
	V|_{S}=p.
	\]
\end{principle}

That is, if there is no reason for a local agent within a local universe to not admit a valuation, then this valuation must be admissible. This principle is used to ensure that any assignment of values to a set of statements that does not violate deducibility constraints can be realised by some admissible valuation. Technically, it is invoked to guarantee surjectivity/invertibility properties of the binary composition rule $\mathsf{G}$ in the proof of the additivity theorem  appendix~\ref{app:study-of-G}, and to allow free choice of atomic values when constructing test valuations.

\subsubsection{Symmetry removes valuational freedom}
\label{subsec:rigidity-under-symmetry}

Local Deducibility and universality originate from the idea that a valuation depends only on distinctions that are present at the level of syntax.
Any structure that is not syntactically accessible is invisible to the valuation and cannot affect its assignments.

A particularly strong constraint arises when atomic statements are treated as indistinguishable.
If all atoms are assigned the same value, then no syntactic label remains that allows the valuation to tell them apart.
In this situation the valuation cannot access the internal combinatorial structure of the Boolean lattice.
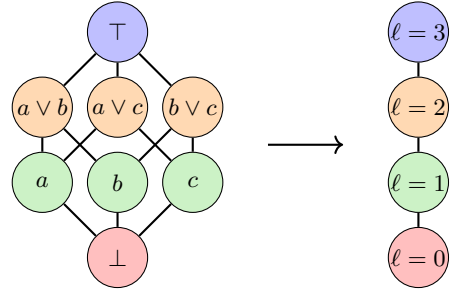
\begin{figure}[h!]
	\begin{tikzpicture}
		\node[lattice node, fill=red!25] (bot) at (0,0) {$\bot$};               
		
		\node[lattice node, fill=green!25] (a)  at (-1,1) {$a$};
		\node[lattice node, fill=green!25] (b)  at (0,1)  {$b$};
		\node[lattice node, fill=green!25] (c)  at (1,1)  {$c$};
		
		\node[lattice node, fill=orange!30] (ab) at (-1,2) {$a\vee b$};
		\node[lattice node, fill=orange!30] (ac) at (0,2)  {$a\vee c$};
		\node[lattice node, fill=orange!30] (bc) at (1,2)  {$b\vee c$};
		
		\node[lattice node, fill=blue!25] (top) at (0,3) {$\top$};
		
		\draw[lattice edge] (bot) -- (a);
		\draw[lattice edge] (bot) -- (b);            
		\draw[lattice edge] (bot) -- (c);
		
		\draw[lattice edge] (a) -- (ab);
		\draw[lattice edge] (a) -- (ac);
		
		\draw[lattice edge] (b) -- (ab);
		\draw[lattice edge] (b) -- (bc);
		
		\draw[lattice edge] (c) -- (ac);
		\draw[lattice edge] (c) -- (bc);
		
		\draw[lattice edge] (ab) -- (top);
		\draw[lattice edge] (ac) -- (top);        
		\draw[lattice edge] (bc) -- (top);
		
		\draw[->, line width=0.8pt] (2,1.5) -- (3,1.5);
		
		\begin{scope}[xshift=4cm]
			\node[lattice node, fill=red!25]    (l0) at (0,0) {$\ell=0$};
			\node[lattice node, fill=green!25]  (l1) at (0,1) {$\ell=1$};
			\node[lattice node, fill=orange!30] (l2) at (0,2) {$\ell=2$};
			\node[lattice node, fill=blue!25]   (l3) at (0,3) {$\ell=3$};
			
			\draw[lattice edge] (l0) -- (l1);
			\draw[lattice edge] (l1) -- (l2);
			\draw[lattice edge] (l2) -- (l3);
		\end{scope}
	\end{tikzpicture}
	\caption{Imposing a symmetry of the valuation with respect to all Boolean isomorphisms collapses the Boolean structure of universes into a chain structure.}
	\label{fig:collapse-to-chain}
\end{figure}

This loss of structure is captured by the following collapse result.

\begin{restatable}{lemma}{rankequality}
	\label{lem:atoms-to-ranks}
	Let $\mathcal{L}$ be a universe with $n\ge 2$ atoms and let $V:\mathcal L\to\mathbb C$ satisfy Local Deducibility and universality.
	If there exists $x\in\mathbb C$ such that
	\[
	V(a)=x \qquad \text{for all } a\in\mathrm{Atoms}(\mathcal L),
	\]
	then $V$ is constant on each level of the universe:
	for every $l=1,\dots,n$ there exists $v_l\in\mathbb C$ such that
	\[
	V(s)=v_l
	\]
	for every statement $s$ at level $l$.
\end{restatable}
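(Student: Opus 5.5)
We argue by induction on the level $l$, using two facts. Local Deducibility and Universality (Principle~\ref{prin:SL-U-rank}) must hold on every universe, including the relative universes $\mathcal L_t={\downarrow t}$. And by Syntactic Locality the restriction of $V$ to such a sub-universe is again an admissible valuation. The idea is that in a small relative universe whose atoms all carry the common value $x$, the deduction rule $\Ded_{m,l}$ is a single fixed function. It therefore produces the same output at every element of the same level, provided the remaining inputs are identical as multisets, or more precisely are arranged identically up to a lattice automorphism.

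The base case $l=1$ holds by hypothesis with $v_1=x$. For the case $l=2$, take two elements $a\vee b$ and $c\vee d$ at level $2$.

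We first treat elements sharing an atom, say $a\vee b$ and $a\vee c$. Consider the relative universe $\mathcal L_{a\vee b\vee c}$, which has $3$ atoms (when $n=2$ there is only one level-$2$ element and nothing to prove). In this universe we look at the top $t=a\vee b\vee c$, deduced by $\Ded_{3,3}$ from the atoms and the three level-$2$ elements. That alone does not compare the level-$2$ values, so instead we deduce $a\vee b$ via $\Ded_{3,2}$. Its inputs are the three atoms (all equal to $x$), the other two level-$2$ values $V(a\vee c),V(b\vee c)$, and $V(t)$.

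The cleaner route is to work in the $2$-atom relative universes $\mathcal L_{a\vee b}=\{\bot,a,b,a\vee b\}$. There Local Deducibility with $\Ded_{2,2}$ gives $V(a\vee b)=\Ded_{2,2}(V(a),V(b))=\Ded_{2,2}(x,x)$, and likewise $V(c\vee d)=\Ded_{2,2}(x,x)$. Hence every level-$2$ element equals $v_2:=\Ded_{2,2}(x,x)$, and this route needs no shared atom.

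The same trick handles general $l$. An element $s$ at level $l$ is the top of $\mathcal L_s$, which has $l$ atoms, so by Universality
\[
V(s)=\Ded_{l,l}\big(\{V(r)\}_{r\in\mathcal L_s\setminus\{s,\bot\}}\big).
\]
By the induction hypothesis, every $r\in\mathcal L_s$ other than $s$ has level $k<l$ and value $v_k$. For two elements $s,s'$ of level $l$, any bijection of atoms induces a Boolean isomorphism $\mathcal L_s\cong\mathcal L_{s'}$ that preserves levels. Under this isomorphism the input tuples of $\Ded_{l,l}$ coincide entry by entry, once inputs are indexed by the abstract lattice as Universality intends. Therefore $V(s)=V(s')=:v_l$. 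Applying this for $l=2,\dots,n$ completes the induction; at $l=n$ the claim is trivial because the top is the unique element of that level.

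The main obstacle is justifying that the deduction rule may be applied inside the relative universe $\mathcal L_s$, and that its inputs are matched correctly under relabelling. This requires reading Principle~\ref{prin:SL-U-rank} as a statement that $\Ded_{l,l}$ is a function of inputs indexed by an abstract $l$-atom lattice, with the identification made through a lattice isomorphism. It also requires using stability under restriction to regard $V|_{\mathcal L_s}$ as admissible on $\mathcal L_s$. I would make both points explicit, via the global/local split of Universality in Appendix~\ref{app:global-local-universality}. The only remaining subtlety is the requirement $l\ge 2$ for Local Deducibility, which is met since level $1$ is the base case.
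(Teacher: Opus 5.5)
Your proof is correct and follows essentially the same route as the paper: induction on the level, applying $\Ded_{l,l}$ in the principal ideal ${\downarrow}t$, whose inputs by the induction hypothesis are the constant values $v_1,\dots,v_{l-1}$ and so coincide for any two elements of level $l$. The two points you flag are left implicit in the paper's proof: that $V|_{{\downarrow}t}$ is admissible by stability under restriction, and that inputs are matched through a level-preserving isomorphism. Your exploratory detour through $\mathcal L_{a\vee b\vee c}$ is unnecessary and can be dropped.
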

The proof of the Lemma can be found in Appendix~\ref{proof:atoms-to-ranks}.

Lemma~\ref{lem:atoms-to-ranks} shows that once atomic distinctions are erased, the valuation can distinguish statements only by their level.
From the perspective of the valuation, all statements at the same level become indistinguishable: the lattice structure collapses from a Boolean lattice to a simple chain as represented in FIG.~\ref{fig:collapse-to-chain}.
Universality further implies that this hierarchy carries no intrinsic identity.
Any statement at a given level can be replaced by any other without changing the valuation.
Thus the valuation does not see \emph{which} statement occurs at a given level, but only that it occurs at that level.
After atomic distinctions have been removed, there should therefore remain no hidden freedom: fixing the value of a single statement should fix the valuation completely.
Allowing multiple valuations compatible with the same declared indifference would amount to introducing extra structure not supported by syntax.
Multiple symmetric valuations with the same $V(s)$ would amount to extra \enquote{labels} not fixed by the automorphisms that erase atomic identity.

Consider for example the chain structure in FIG.~\ref{fig:collapse-to-chain}. For the valuation $V$ all atomic statements are identical, thus we represent the structure of the universe as the chain on the right, where at level $l=1$ we have the statement $s$, at level $l=2$ we have the statement $s\vee s$, and at level $l=3$ we have the statement $s\vee s \vee s$. We expect at each level for the valuation to be just dependent on the level of the universe and the valuation of the statement $s$. Thus we expect to have essentially just one degree of freedom, that is the valuation of $s$. Allowing for more degrees of freedom would correspond to asking for some hidden structure. In the present approach, such hidden structure is excluded.
Once atoms are declared identical, the valuation must not retain any residual choice about how their combinations are valued.
This motivates the following rigidity principle.

\begin{principle}[symmetry removes valuational freedom]
	\label{prin:srigidity-under-symmetry}
	Let $\mathcal L$ be a finite Boolean lattice and let $s\in\mathcal L\setminus\{\bot\}$.
	For any $z\in\mathbb C$, there exists a unique valuation $V:\mathcal L\to\mathbb C$ with $V(s)=z$ such that
	$V$ assigns the same value to all atoms of $\mathcal L$ appearing in $s$.
\end{principle}

 Once we ignore distinctions between atoms, there is no further syntactic structure left to exploit, so fixing the value of just one statement already fixes all the rest and uniquely determines the valuation.

\section{Pre-probabilities}
In stating our principles we have not fixed any particular deducibility rule. Instead, we require only general structural constraints on admissible valuations. These constraints already determine a strong algebraic form: every admissible valuation can be reparametrised as an additive function on joins of statements. This is the central result of the paper, as it shows that universal valuations admit a finitely additive representation, and therefore exhibit a structure that partly mirrors that of probability functions.

We can now state the main theorem of the reconstruction: every universal valuation $V$ admits a finitely additive representation.

\begin{restatable}{theorem}{MainTheorem}
	\label{th:sum-rule}
	There exists a bijection $\phi : \mC \to \mC$, which we refer to as a \textbf{pre-probability mapping}, such that $R:\mathcal{L}\to\mC$ defined by $R = \phi \circ V $ satisfies the following conditions:
	\begin{enumerate}
		\item for any set of distinct indices of atomic statements $\{ i_j \}_{j}$,
		\begin{equation}
			R\!\left( \bigvee_{j} s_{i_j} \right) = \sum_{j} R(s_{i_j}),
		\end{equation}
		\item $R(\bot)=0$,
		\item for every $s\in \mathcal{L}$,
		\begin{equation}
			R(\neg s)= R(\top)-R(s).
		\end{equation}
	\end{enumerate}
	We call $R$ a \textbf{pre-probability}.
\end{restatable}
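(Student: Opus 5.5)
Throughout, I write the target in Cox-style functional-equation form. The plan is to first extract a binary composition rule $\mathsf{G}$ for disjoint joins, then show that $\mathsf{G}$ is associative, commutative, has $\varphi$ as identity, and is invertible. After that, a group-isomorphism argument gives the additive reparametrisation $\phi$.

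\emph{Step 1 (binary rule).} I would start from a two-atom universe $\mathcal L_2=\{\bot,a,b,a\vee b\}$. By Local Deducibility and Universality, $V(a\vee b)=\Ded_{2,2}(V(a),V(b))$, and I define $\GG{x,y}:=\Ded_{2,2}(x,y)$. Syntactic Locality makes this rule apply everywhere. For disjoint $s,t$ in any $\mathcal L$, the relative universe $\mathcal L_{s\vee t}$ contains the Boolean sub-universe $\{\bot,s,t,s\vee t\}$, which is isomorphic to $\mathcal L_2$. Restriction stability of valuations then yields
\[
V(s\vee t)=\GG{V(s),V(t)}.
\]
Invariance of $\Ded_{2,2}$ under the automorphism swapping the two atoms gives commutativity. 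Applying the rule to $s\vee\bot=s$ inside a sub-universe gives $\GG{x,\varphi}=x$. Associativity follows from computing $V(a\vee b\vee c)$ in $\mathcal L_3$ in two ways, via the sub-universes built on $\{a\vee b,c\}$ and on $\{a,b\vee c\}$. By Maximum Realisability, consistent partial valuations with arbitrary atomic values $x,y,z$ are realisable. Hence $\GG{\GG{x,y},z}=\GG{x,\GG{y,z}}$ holds for all $x,y,z\in\mC$, not just for values occurring in one fixed valuation.

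\emph{Step 2 (invertibility).} This is the step I expect to be the main obstacle. I must show that $(\mC,\mathsf{G},\varphi)$ is an abelian \emph{group}, not merely a commutative monoid. To do this I would show that for each $x$ there is a $y$ with $\GG{x,y}=\varphi$, and that each map $y\mapsto\GG{x,y}$ is injective.

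Injectivity should come from Local Deducibility applied to the atom $b$ in $\mathcal L_2$. The value $V(b)$ is a function of $(V(a),V(a\vee b))$, so $y$ is recovered from $x$ and $\GG{x,y}$.

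Surjectivity should come from Maximum Realisability. Take the partial valuation with $V(a)=x$ and $V(a\vee b)=\varphi$ (or an arbitrary target $z$). This partial valuation is consistent, because no full sub-universe lies inside its domain except trivial ones. So it extends to an admissible valuation, whose value at $b$ supplies the required $y$.

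\emph{Step 3 (additive reparametrisation).} Next I would use the Symmetry principle to pin down the group structure. I would not rely on Principle~\ref{prin:srigidity-under-symmetry} as a black box without checking what it asserts. When all atoms take value $x$, Lemma~\ref{lem:atoms-to-ranks} and the rule $\mathsf{G}$ give level values $v_l=x^{\star l}$, where $x^{\star l}$ denotes $l$-fold $\mathsf{G}$-iteration. The principle then says that $x\mapsto x^{\star l}$ is a bijection of $\mC$ for every $l$. Hence the group is torsion-free and uniquely divisible, i.e.\ a $\mathbb Q$-vector space.

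As an abstract group, $(\mC,\mathsf{G})$ has cardinality $2^{\aleph_0}$. Its dimension over $\mathbb Q$ is therefore $2^{\aleph_0}$, the same as that of $(\mC,+)$. Choosing Hamel bases gives a group isomorphism $\phi:(\mC,\mathsf{G},\varphi)\to(\mC,+,0)$, which is a bijection satisfying
\[
\phi(\GG{x,y})=\phi(x)+\phi(y), \qquad \phi(\varphi)=0 .
\]

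\emph{Step 4 (conclusions).} Set $R=\phi\circ V$. Item 2 is $R(\bot)=\phi(\varphi)=0$. Item 1 follows by induction on the number of disjoint atoms, using associativity and the binary rule. For item 3, note that $s$ and $\neg s$ are disjoint with $s\vee\neg s=\top$. Therefore
\[
R(\top)=R(s)+R(\neg s),
\]
which rearranges to $R(\neg s)=R(\top)-R(s)$.
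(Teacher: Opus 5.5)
There is a genuine gap in Step 1, at the neutral element. You derive $\GG{x,\varphi}=x$ by ``applying the rule to $s\vee\bot=s$ inside a sub-universe''. But $\mathsf{G}=\Ded_{2,2}$ is only available on two-atom Boolean sub-universes $\{\bot,s,t,s\vee t\}$ with $s,t$ distinct, disjoint and non-bottom. Taking $t=\bot$ collapses this to the one-atom universe $\{\bot,s\}$, where Local Deducibility imposes nothing: the value of its top is free. So $V(\bot)=\varphi$ is never an input of $\mathsf{G}$, and nothing in your Steps 1--3 ties $\varphi$ to the identity of $\mathsf{G}$.

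Concretely, take $\GG{x,y}=x+y-e$ with $e\neq\varphi$. It has every property you derive (commutative, associative, cancellative, bijective translations, uniquely divisible) and is compatible with $V(\bot)=\varphi$, yet its identity is $e$. Any bijection $\psi$ with $\psi(\GG{x,y})=\psi(x)+\psi(y)$ satisfies $\psi(e)=2\psi(e)$, so $\psi(e)=0$ and hence $\psi(\varphi)\neq 0$. Item~2 then fails, and so does item~3 at $s=\bot$. Your inverses ``$\GG{x,y}=\varphi$'' and the torsion-freeness phrased relative to $\varphi$ also silently presuppose the missing fact.

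The missing ingredient is the two-valued clause of Compatibility with classical propositional logic, which you never use; you only use $V(\bot)=\varphi$. The truth assignment with $a$ true and $b$ false on $\mathcal{L}_2$ must be realised by an admissible $V$ with $V(a)=\tau$, $V(b)=\varphi$, $V(a\vee b)=\tau$. This gives $\GG{\tau,\varphi}=\tau$, and it is exactly what excludes the model above, since $\tau+\varphi-e=\tau$ forces $e=\varphi$.

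Given your Step~2, the repair is short. Associativity, commutativity and bijectivity of all translations make $(\mC,\mathsf{G})$ an abelian group with some identity $e$. Then $\GG{\tau,\varphi}=\tau=\GG{\tau,e}$, and cancellativity gives $\varphi=e$.

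The paper does this before constructing inverses. It takes a three-atom valuation with atomic values $x,\varphi,\tau$ and uses associativity to get $\GG{\GG{x,\varphi},\tau}=\GG{x,\GG{\varphi,\tau}}=\GG{x,\tau}$. It then cancels $\tau$ (after commuting) to obtain $\GG{x,\varphi}=x$.

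With this fixed, the rest of your route coincides with the paper's proof: cancellativity via deducibility of an atom, surjectivity via Maximum Realisability, torsion-freeness and divisibility from the symmetry principle, the cardinality/Hamel-basis isomorphism with $(\mC,+)$, and items~1--3.
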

The proofs of the theorem can be found in Appendix~\ref{app:study-of-G}.

Whatever the underlying valuation, we can always map it to a pre-probability that behaves additively.
In particular, we can reason about valuations using the same algebraic rules of ordinary probability. For example, for any given syntactic universe $\mathcal{L}$ with valuation $V$ and pre-probability function $R$, the value of $R$ on any statement of the universe can be inferred from the vector of values of $R$ on the atomic statements
\begin{equation}
	\vec{r}=\left(R(s_1),R(s_2),\dots,R(s_d)\right).
\end{equation}

A central lemma of our main theorem is that, even if a pre-probability representation of $V$ always exists, this is in general not unique. This is stated in the next lemma.

\begin{restatable}{lemma}{afreedom}
	\label{lem:a-freedom}
	The pre-probability mapping $\phi$ is unique up to composition with an additive function, an additive gauge reparametrisation, that we call $\mbox{\textbf{$a$-function}}$, that is a bijection $a:\mC \to \mC$ satisfying the Cauchy functional equation
	\[
	a(x+y)=a(x)+a(y).
	\]
\end{restatable}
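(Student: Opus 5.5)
We need to prove Lemma~\ref{lem:a-freedom}: if $\phi$ and $\phi'$ are two pre-probability mappings, then $\phi'=a\circ\phi$ with $a$ an additive bijection. Conversely, composing with any additive bijection yields another valid pre-probability mapping. The plan is to prove the converse first, since it is routine, and then spend the effort on the forward direction.

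\textbf{Converse direction.} Let $\phi$ be a pre-probability mapping from Theorem~\ref{th:sum-rule} and let $a$ be an additive bijection. Set $R'=a\circ\phi\circ V$. Properties 1 and 3 of Theorem~\ref{th:sum-rule} transfer to $R'$ by applying $a$ to both sides and using $a(x+y)=a(x)+a(y)$. Property 2 follows because Cauchy additivity gives $a(0)=a(0)+a(0)$, hence $a(0)=0$. So $a\circ\phi$ is again a pre-probability mapping.

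\textbf{Forward direction.} Let $\phi,\phi'$ be two pre-probability mappings and define $a:=\phi'\circ\phi^{-1}$. This is a bijection of $\mC$, and by construction $\phi'=a\circ\phi$; it remains to show $a$ is additive.
\begin{enumerate}
\item Fix $x,y\in\mC$. Take a universe $\mathcal L_2$ with atoms $s_1,s_2$.
\item Use Principle~\ref{prin:realisability}, via the free choice of atomic values already invoked in the proof of Theorem~\ref{th:sum-rule} (Appendix~\ref{app:study-of-G}), to obtain an admissible $V$ with $V(s_1)=\phi^{-1}(x)$ and $V(s_2)=\phi^{-1}(y)$.
\item Apply additivity of $R=\phi\circ V$ to get $\phi(V(s_1\vee s_2))=x+y$, so $V(s_1\vee s_2)=\phi^{-1}(x+y)$.
\item Apply additivity of $R'=\phi'\circ V$ to get $\phi'(\phi^{-1}(x+y))=\phi'(\phi^{-1}(x))+\phi'(\phi^{-1}(y))$, which reads $a(x+y)=a(x)+a(y)$.
\end{enumerate}

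\textbf{Main obstacle.} The delicate step is item 2: realising arbitrary pairs of atomic values in a two-atom universe. The partial valuation $p$ on $\{\bot,s_1,s_2\}$ contains no sub-universe with $\ge2$ atoms. Its only sub-universes inside the domain are the one-atom ones $\{\bot,s_i\}$, and on these Definition~\ref{def:consistent} imposes no deduction constraint, while the one-atom clause of Principle~\ref{prin:syntactic-locality} makes every value allowed. So $p$ is consistent, and Principle~\ref{prin:realisability} extends it to an admissible $V$ on $\mathcal L_2$.

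The one subtlety is whether the theorem's $\phi$ is allowed to depend on the universe. I would argue that it is a single map $\mC\to\mC$, as stated, because it is built from the universal binary composition rule $\mathsf G$. Consequently the same $a$ works everywhere, and additivity tested on $\mathcal L_2$ suffices.
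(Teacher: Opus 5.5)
Your proof is correct and follows the paper's route: set $a=\phi'\circ\phi^{-1}$, show it is an automorphism of $(\mathbb{C},+)$, and observe that composing with any additive bijection preserves the conclusions of Theorem~1. The one difference is how additivity of $a$ is obtained. The paper treats both mappings as group isomorphisms $(\mathbb{C},\mathsf{G},\varphi)\to(\mathbb{C},+)$ and reads additivity off directly. You instead derive it from the additivity of $\phi\circ V$ and $\phi'\circ V$ on a two-atom universe with freely realised atomic values, which makes explicit a step the paper leaves implicit: any bijection satisfying Theorem~1's conclusions is such an isomorphism.
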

The proof of the Lemma can be found in Appendix~\ref{proof:a-freedom}.

The gauge freedom brought by the composition with $a$-function is a distinctive and fundamental property of the language of pre-probabilities. 
To maintain full generality we do not impose continuity or any other regularity assumption on $a$-functions, thus allowing even highly non-constructive $a$-functions~\footnote{In particular, most additive solutions of the Cauchy functional equation can be constructed via a choice of a Hamel basis of $\mathbb{R}$ (or $\mathbb{C}$ viewed as a $\mathbb{Q}$-vector space), which requires the axiom of choice. The axiom of choice is already invoked in the proof of Theorem~\ref{th:sum-rule}.}
.
We adopt this unrestricted setting because the valuation $V$ is taken to be a primitive object. Without additional structure on its codomain, there is a priori no canonical notion of regularity to require. For instance, it is not clear what \enquote{continuity} should mean for $V$ (and hence for admissible reparametrisations) beyond what is encoded by the lattice operations themselves. In this sense, refraining from regularity assumptions keeps the reconstruction purely algebraic. However, for completeness, in Section~\ref{sec:holomorphic} we discuss the effect of imposing additional regularity assumptions. Such assumptions may arise when one requires $V$ to carry further analytic meaning, and we explain how they bear on the results of this work.
 
Because of this gauge freedom, for any fixed universe $\mathcal{L}$ with valuation $V$, there are infinitely many admissible pre-probability mappings $\phi$, and hence infinitely many induced pre-probabilities $R=\phi\circ V$. In general, the pre-probability representation is therefore non-unique, and different choices of $\phi$ are related by composition with an $a$-function.

There is, however, a single exceptional case in which the pre-probability representation is unique. This is the degenerate valuation
\[
V(s)=\varphi \qquad \text{for all } s\in\mathcal{L},
\]
which corresponds to the degenerate classical-logic assignment in which every statement is false. In this case every induced pre-probability is identically zero, and composition with any $a$-function leaves it unchanged. For this reason we call this special case the \emph{invariant valuation}.

This multitude of pre-probabilistic representations for a unique universal valuation is crucial for many of the following results and in particular in the understanding and definition of conditionals and quasi-probabilities.

\subsection{Semantic frames and semantic dimension}
\label{subsec:semantic-frames}

The following subsection introduces additional structure useful for tracking regraduation freedom and comparing different pre-probability representation of the same valuation.

Lemma~\ref{lem:a-freedom} shows that a single valuation may admit many pre-probability representations, all related by composition with $a$-functions. 
All these different representations of the same $V$ share an underlying common invariant structure. For each universe and each valuation it is possible to single out a minimal collection of statements whose pre-probability values provide $\mathbb{Q}$-linearly independent reference values. 
This motivates the notions of semantic frame and semantic dimension.
The interpretation and broader role of semantic frames will be discussed in Sec.~\ref{sec:higher-semantic-dimension}. 

\begin{definition}[Semantic frames and semantic dimension]
	\label{def:gauge-frame-dim}
	Let $\mathcal{L}$ be a universe and let $R:\mathcal{L}\to\mathbb{C}$ be a pre-probability function.
	
	A finite set
	\[
	\SL{R}=\{s_1,\dots,s_m\}\subseteq\mathcal{L}
	\]
	is called a \textbf{semantic frame} for $(\mathcal{L},R)$ if it is a maximal set of statements such that for all rationals $q_1,\dots,q_m\in\mathbb{Q}$,
	\[
	\sum_{j=1}^{m} q_j\,R(s_j)=0
	\quad\Longrightarrow\quad
	q_1=\cdots=q_m=0 .
	\]
	The integer $m:=|\SL{R}|$ is called the \textbf{semantic dimension} of $(\mathcal{L},R)$.
\end{definition}

Semantic frames individuate the degree of freedom over which the $a$-functions act and the semantic dimension $m$ individuates the number of degrees of freedom for the gauge freedom of a valuation on a universe. We point out some properties of semantic frames in the following lemma. 

\begin{restatable}{lemma}{linindipendenti}
	\label{lem:ref-frame-from-lin-indep}
	Given a universe $\mathcal{L}$ with $n$ atomic statements, and a pre-probability $R:\mathcal{L}\to\mathbb{C}$, consider any $a$-function and a new pre-probability $R':=a\circ R$. Then:
	\begin{enumerate}
		\item If $s\in \SL{R}$ then $R(s)\neq 0$.
		\item All semantic frames $\SL{R}$ have the same semantic dimension.
		\item All semantic frames are independent of $a$. They depend only on the valuation $V$.
		\item There is always a semantic frame composed only of atomic statements and thus the semantic dimension $m$ is bounded as $m\le n$.
		\item \label{item:R-restriction-extension} If $R'$ satisfies
		\[
		R'(s_j)=R(s_j)
		\quad\text{for all } j=1,\dots,m,
		\]
		then
		\[
		R'(s)=R(s)
		\quad\text{for all } s\in\mathcal{L}.
		\]
	\end{enumerate}
\end{restatable}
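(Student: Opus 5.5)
The whole lemma rests on one observation. A semantic frame is a maximal $\mathbb{Q}$-linearly independent subset of the multiset of values $\{R(s)\}_{s\in\mathcal L}$ in $\mathbb{C}$, viewed as a $\mathbb{Q}$-vector space. Moreover, every $a$-function is $\mathbb{Q}$-linear and injective: Cauchy additivity gives $a(qx)=q\,a(x)$ for $q\in\mathbb{Q}$, and $a$ is a bijection by Lemma~\ref{lem:a-freedom}. I will prove the items in the order 1, 4, 2, 3, 5, using only additivity from Theorem~\ref{th:sum-rule} and this $\mathbb{Q}$-linearity.

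\emph{Item 1.} If $R(s)=0$ for some $s$ in a frame, take $q_s=1$ and all other $q_j=0$. This is a nontrivial rational relation, which contradicts independence.

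\emph{Item 4.} By Theorem~\ref{th:sum-rule}, every $R(s)$ is a sum of atomic values $R(s_i)$. Hence the $\mathbb{Q}$-span $W$ of all values equals the span of the $n$ atomic values. Extract a maximal independent subfamily of atoms. Any further statement $s$ then has $R(s)\in W$, which is spanned by the chosen atoms, so adding $s$ creates a dependence. This atomic set is therefore maximal in $\mathcal L$ and is a frame, and its size $m=\dim_{\mathbb{Q}}W$ satisfies $m\le n$.

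\emph{Item 2.} Any maximal independent set $\{s_j\}$ has values spanning $W$: by maximality each $R(s)$ is a rational combination of the $R(s_j)$. A basis of $W$ has $\dim_{\mathbb{Q}}W$ elements, so all frames have the same cardinality $m$. One point needs care: two statements with equal values cannot both lie in a frame, so frames correspond to bases of $W$ and the usual basis argument applies.

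\emph{Item 3.} $a$ is an injective $\mathbb{Q}$-linear map. Hence $\sum_j q_j a(R(s_j))=a\bigl(\sum_j q_j R(s_j)\bigr)=0$ holds if and only if $\sum_j q_j R(s_j)=0$. So the independent sets, and therefore the maximal ones, are the same for $R$ and for $R'=a\circ R$. By Lemma~\ref{lem:a-freedom}, every pre-probability representing $V$ has the form $a\circ R$. The family of frames is therefore determined by $V$ alone.

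\emph{Item 5.} For any $s$ write $R(s)=\sum_j q_j R(s_j)$ with $q_j\in\mathbb{Q}$, which is possible by maximality. Then
\[
R'(s)=a(R(s))=\sum_j q_j\,a(R(s_j))=\sum_j q_j\,R'(s_j)=\sum_j q_j\,R(s_j)=R(s).
\]

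The main obstacle is the definitional subtlety in item 2 of treating frames as sets of statements rather than sets of values. I resolve it by the remark that independence forbids repeated values, so frames correspond bijectively to bases of $W$ drawn from the values of $R$. Everything else is routine $\mathbb{Q}$-linear algebra.
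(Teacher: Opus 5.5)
Your proof is correct and follows essentially the same route as the paper: establish that the $a$-function is $\mathbb{Q}$-linear and injective, identify semantic frames with $\mathbb{Q}$-bases of the span $W$ of the values of $R$ (spanned by the atomic values via additivity), and read off all five items from standard basis properties. Your explicit remarks that a frame cannot contain two statements with equal values, and that the chosen atomic subfamily is maximal among all statements rather than only among atoms, make precise two points that the paper's argument leaves implicit.
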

The proof of the Lemma can be found in Appendix~\ref{proof:ref-frame-from-lin-indep}.

Property number 5. points out the idea that semantic frames individuate the degree of freedom of a pre-probabilistic representation. If two pre-probabilistic representations agree on all the elements of a semantic frame for a given universe, then these pre-probabilities coincide, thus any non-trivial $a$-function would act by changing the valuation of at least one element of a semantic frame.

\subsection{Embedding sub-universes in ambient universes}
\label{sec:communicating}

Consider two agents $X$ and $Y$, each equipped with a syntactic universe $\mathcal{L}_X$ and $\mathcal{L}_Y$, who wish to communicate the valuation of their statements. The very possibility of communication presupposes a larger universe $\mathcal{L}$, an \emph{ambient universe} in which both sub-universes are jointly situated. In other words, each agent may discover that what they initially took to be “the” universe is in fact only a sub-universe, and that there exists an ambient universe $\mathcal{L}$ into which both $\mathcal{L}_X$ and $\mathcal{L}_Y$ embed~\footnote{We remind the reader that we restricted our attention to settings in which all local universes arise from a single ambient universe, and their valuations are obtained as restrictions of a global valuation.};
 this is another instance of Syntactic Locality.

Now suppose that the two agents $X$ and $Y$ are equipped with pre-probabilities $R_{X}$ and $R_{Y}$ for their respective (sub-)universes $\mathcal{L}_X$ and $\mathcal{L}_Y$. We refer to their pre-probabilities as \textit{local pre-probabilities}, encoding in this definition the knowledge that their pre-probabilities are just a pre-probability mapping of the restrictions $V|_X$ and $V|_Y$ of the universal valuation on $\mathcal{L}$ to the sub-universes $\mathcal{L}_X$ and $\mathcal{L}_Y$.
In the sub-case when the sub-universe of an agent is a relative sub-universe $\mathcal{L}_s$, with $s\in \mathcal{L}$, we will refer to its local pre-probability $R_s$ as a \textit{relative pre-probability}.

For the two local agents $X$ and $Y$, the act of communicating the pre-probabilistic valuation of the statements of their own sub-universes therefore amounts to embedding these sub-universes into the ambient universe $\mathcal{L}$ in a coherent way, so that the valuation $R_X$ and $R_Y$ assigned locally agrees with the valuation $R$ of the corresponding statements viewed in the ambient universe. The central difficulty, then, is to find such a coherent embedding. In particular, coherence requires that the two agents first \emph{synchronise} their choice of local pre-probability representation, that is, they must ensure they are using the same pre-probability mapping.

\begin{definition}[Synchronisation of pre-probability valuations]
	Let $\mathcal{L}_X$ and $\mathcal{L}_Y$ be syntactic sub-universes of an ambient universe $\mathcal{L}$, with valuation $V$ on $\mathcal{L}$ and induced restrictions $V|_{X}$ and $V|_{Y}$.  
	Let $R_X$ and $R_Y$ be the local pre-probabilities used by agents $X$ and $Y$ on $\mathcal{L}_X$ and $\mathcal{L}_Y$.
	
	We say that agents $X$ and $Y$ \textbf{synchronise} their local pre-probabilities (relative to $\mathcal{L}$) if they define two a-functions $a_X$ and $a_Y$ such that their new  local pre-probabilities
	\[
		R'_X\coloneqq a_X \circ R_X, \qquad R'_Y \coloneqq a_Y \circ R_Y
	\]
	are such that
	\[
	R'_X=R|_{\mathcal{L}_X}, \qquad 	R'_Y=R|_{\mathcal{L}_Y},
	\]
	for some pre-probability $R$ of the ambient lattice $\mathcal{L}$.
\end{definition}

\begin{figure}[t]
	\centering
	\includegraphics[width=1.02\linewidth]{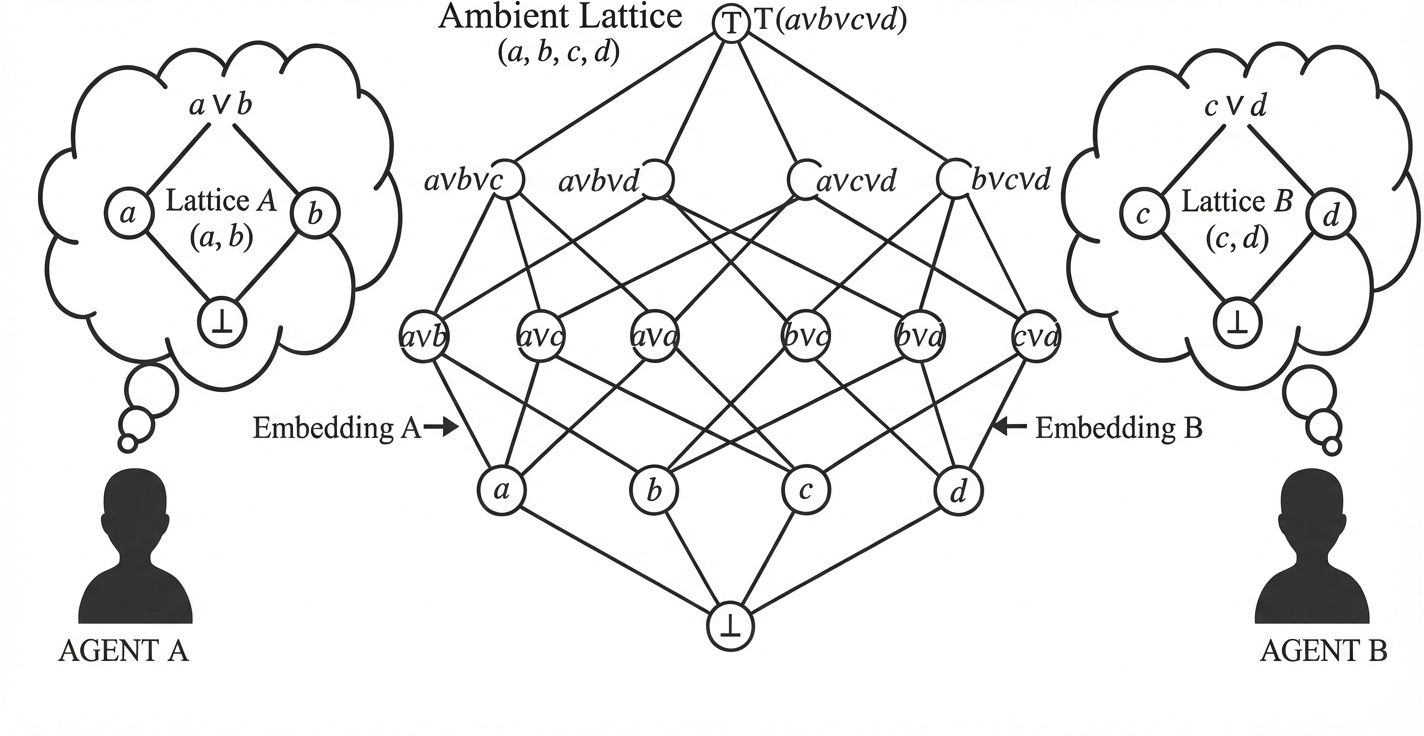}
	\caption{Two agents' sub-universes embedded in an ambient shared universe.}
	\label{fig:Boolean-embedding}
\end{figure}

If two agents can communicate we must suppose that they are indeed embedded in a common syntactic universe, thus that their valuations are synchronisable in principle. However, this just means that the two a-functions necessary for the synchronisation exist, there is no guarantee the two agents can find them. We ask what are the resources the two agents need to be able to synchronise their valuations.

If the agents have some knowledge of the value of the pre-probability $R$ on the ambient universe they can tune their local pre-probability representations to match that value. 
From point \ref{item:R-restriction-extension} of lemma~\ref{lem:ref-frame-from-lin-indep} one can directly prove the following lemma
\begin{lemma}
\label{lem:sychronise}
For an agent $A$, it is enough to know the ambient valuation of all the statements in any semantic frame
$\mathcal{S}_{\mathcal{L}_X}(R_X)$ in order to synchronise its local pre-probability $R_X$ with the pre-probability $R$ on the ambient universe.
\end{lemma}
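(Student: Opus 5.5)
The plan is to reduce the synchronisation problem to point~\ref{item:R-restriction-extension} of Lemma~\ref{lem:ref-frame-from-lin-indep}, using Lemma~\ref{lem:a-freedom} to guarantee that the required $a$-function exists. I would argue in three steps.

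\textbf{Step 1: reduce to a comparison on $\mathcal{L}_X$.} Fix any pre-probability $R=\phi\circ V$ on the ambient universe $\mathcal{L}$ given by Theorem~\ref{th:sum-rule}. Restriction to a sub-universe preserves finite additivity on disjoint joins, $R(\bot)=0$ and the complement rule. Relative universes deserve a brief check here, since the complement there is $\neg_t$; the identity $R(\neg_t s)=R(t)-R(s)$ follows from additivity because $t=s\vee\neg_t s$ is a disjoint join. So $R|_{\mathcal{L}_X}$ is itself a pre-probability representation of $V|_{\mathcal{L}_X}$. The agent's own $R_X=\phi_X\circ V|_{\mathcal{L}_X}$ is another one. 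By Lemma~\ref{lem:a-freedom}, applied in $\mathcal{L}_X$, the two mappings differ by an $a$-function. Hence there exists an $a^\ast$ with $R|_{\mathcal{L}_X}=a^\ast\circ R_X$, so synchronisation is possible in principle.

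\textbf{Step 2: construct a candidate from the frame data.} Let $\{s_1,\dots,s_m\}=\mathcal{S}_{\mathcal{L}_X}(R_X)$ be a semantic frame, and suppose the agent knows the ambient values $R(s_j)$. By point~1 of Lemma~\ref{lem:ref-frame-from-lin-indep}, the values $R_X(s_j)$ are nonzero and $\mathbb{Q}$-linearly independent. Extend $\{R_X(s_j)\}_j$ to a Hamel basis of $\mathbb{C}$ over $\mathbb{Q}$ (this uses choice, as already accepted in the paper). Define a $\mathbb{Q}$-linear map $a_X$ by
\[
a_X\big(R_X(s_j)\big)=R(s_j),
\]
and let it act on the remaining basis elements so that it is a bijection.

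For bijectivity, the images $R(s_j)$ must also be $\mathbb{Q}$-independent. This follows from point~3 of Lemma~\ref{lem:ref-frame-from-lin-indep}: frames depend only on $V$, so $\{s_j\}$ is also a frame for $R|_{\mathcal{L}_X}$. The complements of the two spans then have equal dimension, namely that of the continuum, so a bijective extension exists. Alternatively, and more simply, take $a_X:=a^\ast$ from Step~1, which already maps $R_X(s_j)\mapsto R(s_j)$. The agent does not need to know $a^\ast$ globally: the frame values determine its action on everything relevant.

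\textbf{Step 3: conclude with point~\ref{item:R-restriction-extension}.} Set $R'_X=a_X\circ R_X$. This is a pre-probability on $\mathcal{L}_X$ that agrees with $R|_{\mathcal{L}_X}$ on the whole frame. To compare the two, write $R|_{\mathcal{L}_X}=a^\ast\circ R_X$ and $R'_X=a_X\circ R_X$, and let $b=a_X\circ(a^\ast)^{-1}$, which is again an $a$-function. Then $R'_X=b\circ R|_{\mathcal{L}_X}$, and $R'_X$ agrees with $R|_{\mathcal{L}_X}$ on $s_1,\dots,s_m$. Point~\ref{item:R-restriction-extension} of Lemma~\ref{lem:ref-frame-from-lin-indep} then gives $R'_X(s)=R(s)$ for every $s\in\mathcal{L}_X$, which is exactly synchronisation.

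\textbf{Main obstacle.} The delicate part is the well-definedness underlying Step~3. Every $R_X(s)$ for $s\in\mathcal{L}_X$ must lie in the $\mathbb{Q}$-span of the frame values, with the same rational coefficients as $R(s)$ has in terms of the $R(s_j)$. Only then is the action of any $a$-function on $R_X(s)$ forced by its action on the frame. This is precisely the content of the maximality in Definition~\ref{def:gauge-frame-dim} together with point~3 of the Lemma, and I would cite it rather than reprove it.
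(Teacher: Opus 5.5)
Your proposal is correct and is essentially the paper's argument. The paper's proof is only the remark that the lemma follows from point~5 of Lemma~\ref{lem:ref-frame-from-lin-indep}; your Steps~1 and~3 make explicit what that remark leaves implicit, namely the existence of the synchronising $a$-function via Lemma~\ref{lem:a-freedom}, and the fact from point~3 that the frame of $R_X$ is also a frame of $R|_{\mathcal{L}_X}$, so that point~5 applies to $b=a_X\circ(a^\ast)^{-1}$.

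One small correction: the $\mathbb{Q}$-independence of the $R_X(s_j)$ is part of the definition of a semantic frame, not a consequence of point~1, which only gives $R_X(s_j)\neq 0$.
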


\subsection{The Rule of Total Pre-Probability}
\label{sec:rtp-for-ppf}

Beyond synchronising the local pre-probabilities of distinct sub-universes, there are situations in which several agents, each working within a local universe, can collectively reconstruct the ambient universe together with its ambient pre-probability. This occurs when the agents' (sub-)universes jointly contain all atoms of the ambient universe.

An interesting case is when the sub-universes of the local agents are relative universes $\mathcal{L}_{s_j}$ of an ambient universe $\mathcal{L}$, such that
\[
s_{j_1}\wedge s_{j_2}=\bot \quad (j_1\neq j_2),
\qquad
\bigvee_j s_j=\top.
\]
Each agent is equipped with its relative pre-probability
\[
R_{s_j}:\mathcal L_{s_j}\to \mC
\]
with domain on its relative universe $\mathcal L_{s_j}$. We denote by $a_j(\cdot)$  the $a$-function that synchronises $R_{s_j}$ with the ambient pre-probability $R$.

\noindent
We are interested in reconstructing $R$ from the collection of relative pre-probabilities $R_{s_j}$. In order to do so, it is useful to define the \textit{localisation map} $[\cdot]_{s_j}:\mathcal L\to \mathcal L_{s_j}$ as 
\[
[s]_{s_j}\coloneqq s\wedge s_j.
\]
The localisation map allows us to extend the domain of the relative pre-probabilities $R_{s_j}$ to the whole ambient universe. We denote this extension with $R(s\mid s_j):\mathcal{L}\to \mC$ defined as
\begin{equation}
R(s\mid s_j)\coloneqq R_{s_j}\bigl([s]_{s_j}\bigr),
\end{equation}
and we call it \textit{conditional pre-probability}.

With these tools we can prove the following identity through which the agents become able to reconstruct the pre-probability $R$ of the whole ambient universe.

\begin{restatable}{proposition}{RTPP}
\label{prop:rule-total-pre-prob}
Under the assumptions above, the \textbf{rule of total pre-probability} is the identity
\begin{equation}
\label{eq:rule-total-pre-prob}
R(s)=\sum_j a_j\!\left(R(s\mid s_j)\right),
\end{equation}
where $a_j(\cdot)$ are the $a$-functions that synchronise the pre-probabilities of the agents.
\end{restatable}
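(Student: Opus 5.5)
The plan is to decompose each statement $s\in\mathcal L$ along the partition $\{s_j\}$, apply the finite additivity of Theorem~\ref{th:sum-rule} to the ambient pre-probability $R$, and then use the synchronisation property of each $a_j$ to rewrite each summand in terms of the relative pre-probability $R_{s_j}$.

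\emph{Step 1 (decomposition).} Since $\mathcal L$ is a finite Boolean lattice, it is distributive. Because $\bigvee_j s_j=\top$, we have
\[
s=s\wedge\top=\bigvee_j (s\wedge s_j)=\bigvee_j [s]_{s_j}.
\]
The pieces $[s]_{s_j}$ are pairwise disjoint, since $(s\wedge s_{j_1})\wedge(s\wedge s_{j_2})\le s_{j_1}\wedge s_{j_2}=\bot$. They also involve disjoint sets of atoms: every atom below $s$ lies below exactly one $s_j$.

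\emph{Step 2 (additivity).} Theorem~\ref{th:sum-rule} gives additivity only over joins of distinct atoms. Writing each $[s]_{s_j}$ as the join of the atoms below it, the atoms below $s$ split into disjoint families indexed by $j$. Grouping the atomic sums by $j$ therefore yields
\[
R(s)=\sum_j R\bigl([s]_{s_j}\bigr).
\]
When $[s]_{s_j}=\bot$, the corresponding term is $0$ by item 2 of Theorem~\ref{th:sum-rule}.

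\emph{Step 3 (synchronisation).} By definition of synchronisation, $a_j\circ R_{s_j}=R|_{\mathcal L_{s_j}}$. Since $[s]_{s_j}\in\downarrow s_j=\mathcal L_{s_j}$, we get
\[
R\bigl([s]_{s_j}\bigr)=a_j\bigl(R_{s_j}([s]_{s_j})\bigr)=a_j\bigl(R(s\mid s_j)\bigr),
\]
using the definition of the conditional pre-probability. Substituting this into Step 2 gives \eqref{eq:rule-total-pre-prob}.

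The only delicate point is that the $a_j$ must all synchronise to the \emph{same} ambient $R$. This is part of the hypothesis, since each $a_j$ is defined relative to the given $R$. It is also consistent with the framework: $R_{s_j}$ is a pre-probability mapping of $V|_{\mathcal L_{s_j}}$, and Lemma~\ref{lem:a-freedom} guarantees that $R_{s_j}$ and $R|_{\mathcal L_{s_j}}$ differ by an $a$-function, so such $a_j$ exist. The rest is routine, and no gauge-fixing or semantic-frame arguments (Lemma~\ref{lem:ref-frame-from-lin-indep}) are needed.
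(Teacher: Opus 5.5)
Your proof is correct and follows the paper's argument step for step: distributivity to write $s=\bigvee_j [s]_{s_j}$ as a disjoint join, additivity of the ambient $R$, and the synchronisation identity $R([s]_{s_j})=a_j(R_{s_j}([s]_{s_j}))=a_j(R(s\mid s_j))$. Your Step 2 derives additivity over the disjoint pieces from the atom-level additivity that Theorem~\ref{th:sum-rule} actually states, which is a small tightening of the paper's direct appeal to finite additivity over disjoint joins.
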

The proof of the identity in the Proposition can be found in Appendix~\ref{proof:rule-total-pre-prob}.

\subsection{Generalised Bayes' Theorem}
In standard probability theory Bayes' theorem is used to \enquote{invert} conditional probabilities. It is a direct consequence of the standard definition of conditional probability and it is often written as
\begin{equation}
\label{eq:standard-Bayes}
\mathcal{P}(B=b\mid A=a)=\frac{\mathcal{P}(A=a\mid B=b)\mathcal{P}(B=b)}{\mathcal{P}(A=a)}.
\end{equation}

In the context of pre-probability one can derive a similar equation simply considering the synchronisation of different conditional pre-probabilities. The existence of a universal valuation plays a crucial role for the generalised Bayes' theorem. In fact different conditional pre-probabilities are indeed extensions via localisation map of restrictions to sub-universes of the same universal valuation $V$ possibly with different parametrisations of the pre-probabilities due to different choices of pre-probability mappings. The generalised Bayes' theorem is nothing more than an explicit recipe to synchronise conditional pre-probabilities with different representations. 

To display explicitly an identity analogous to~\eqref{eq:standard-Bayes}, it is convenient to work in a universe $\mathcal{L}_4$ with four atomic statements $\{a,b,c,d\}$. For readability, we will freely switch between the notation $\{a,b,c,d\}$ and the $AB$-labelling introduced in Appendix~\ref{sec:AB-labelling}. The notation $\{a,b,c,d\}$ is useful to think in terms of Boolean lattices, the notation with $AB$-labelling is useful to compare with the usual presentations of these concepts in standard probability theory.

Consider the two conditional pre-probabilities 
\begin{equation}
	\label{eq:conditionals}
	R(B=i_b \mid A=0), \quad R(A=i_a \mid B=1).
\end{equation}
From the definition of conditional pre-probabilities these correspond to
\[
	R_{a\vee b}(x), \quad R_{b\vee d}(y),
\]
where we restricted the domains to $x\in\{a,b\}$ and $y\in\{b,d\}$ to match the domains of the two conditional pre-probabilities in equation~\eqref{eq:conditionals}, and $R_{a\vee b},R_{b\vee d}$ are the relative pre-probabilities for two agents respectively equipped with the relative sub-universes $\mathcal{L}_{a\vee b}$ and $\mathcal{L}_{b\vee d}$.
Since we know that the universes of the two agents are actually sub-universes of $\mathcal{L}_4$, we know that their pre-probabilities can be synchronised to be the restrictions of the ambient universe's pre-probability $R$ to their respective sub-universes. In particular we know that there exist two $a$-functions $a_{A}$ and $a_{B}$ such that 
\[
a_A\circ R_{a\vee b} = R|_{a\vee b}, \quad a_B\circ R_{b \vee d}= R|_{b \vee d}.
\]
This implies that, using the AB-labelling, we can write
\[
	a_A(R(B=1\mid A=0))=a_B(R(A=0\mid B=1)).
\]
In a form reminiscent of Bayes' theorem, we can write an instance of what can be called  \textbf{generalised Bayes' theorem}
\begin{equation}
	R(B=1 \mid A=0)=a_A^{-1}(a_B(R(A=0\mid B=1))).
\end{equation}
We see that "inverting the conditionals" corresponds to synchronising conditional pre-probability representations. We know that both conditional pre-probabilities come from the same underlying universal valuation $V$, but they are in general expressed with two different representations as pre-probabilities. The generalised Bayes' theorem then is nothing more than a synchronisation of pre-probabilities corresponding to different agents each equipped with a representation of the universal valuation $V$ on a restriction of the ambient universe.

In what follows we show that, when specialised to classical probabilities, the theorem reduces exactly to Bayes' theorem.
Viewed in this way, Bayes' theorem will be easily applied to quasi-probabilities without running into the usual difficulties: since the result is essentially a synchronisation identity, it is always well-defined, even when some quasi-probabilities have value zero.

  \section{Quasi-probabilities}

\subsection{Restricting to universes with semantic dimension 1}

In what follows we restrict to universes $\mathcal{L}$ of semantic dimension $m=1$, leaving the discussion of higher semantic dimension to Sec.~\ref{sec:higher-semantic-dimension}. 
This restriction greatly simplifies the structure of admissible pre-probabilities. In semantic dimension $m=1$ all values of a pre-probability are $\mathbb{Q}$-proportional to any fixed nonzero value, as stated in the next lemma.

\begin{restatable}{lemma}{qpropsemanticone}
	\label{lem:q-proportionality}
	Let $\mathcal{L}$ be a universe of semantic dimension $1$ and let $R:\mathcal{L}\to\mathbb{C}$ be a pre-probability function.
	Then for every choice of $\hat{s}\in\mathcal{L}$ with $R(\hat{s})\neq 0$ and for every $s\in\mathcal{L}$ there exists a rational number $q(s)\in\mathbb{Q}$ such that
	\begin{equation}
		\label{eq:q-proportionality}
	R(s)=q(s)\,R(\hat{s}).
	\end{equation}
	Equivalently, all nonzero values of $R$ are proportional to one another with rational proportionality constants.
\end{restatable}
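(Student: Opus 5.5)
The plan is to unpack Definition~\ref{def:gauge-frame-dim} of semantic frame in the case $m=1$. From it we get $\mathbb{Q}$-linear dependence of any pair of values of $R$.

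First fix a semantic frame $\SL{R}=\{s_1\}$. By point~1 of Lemma~\ref{lem:ref-frame-from-lin-indep} we have $R(s_1)\neq 0$. Take an arbitrary $s\in\mathcal{L}$. If $R(s)=0$, we set $q(s)=0$ and are done, so assume $R(s)\neq 0$.

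\textbf{Case $s\neq s_1$.} Maximality of the frame says $\{s_1,s\}$ is not $\mathbb{Q}$-linearly independent in the sense of the definition. Hence there are rationals $(q_1,q_2)\neq(0,0)$ with $q_1R(s_1)+q_2R(s)=0$.

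\textbf{Nondegeneracy.} Neither coefficient can vanish alone. If $q_2=0$ then $q_1\neq 0$, which forces $R(s_1)=0$, a contradiction. Symmetrically, $q_1=0$ would force $R(s)=0$. So both $q_1,q_2$ are nonzero and
\[
R(s)=-\tfrac{q_1}{q_2}R(s_1)
\]
with a rational coefficient. If $s=s_1$, the relation holds trivially with coefficient $1$. Thus every value of $R$ is a rational multiple of $R(s_1)$.

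\textbf{Arbitrary $\hat s$.} Now pass to an arbitrary $\hat s$ with $R(\hat s)\neq 0$. We have $R(\hat s)=q'R(s_1)$ with $q'\in\mathbb{Q}$, and $q'\neq 0$ because $R(\hat s)\neq 0$. Therefore
\[
R(s)=\frac{q_s}{q'}R(\hat s),
\]
and we define $q(s):=q_s/q'\in\mathbb{Q}$. This gives equation~\eqref{eq:q-proportionality}, and the "equivalently" clause follows immediately.

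The main subtlety, rather than an obstacle, is reading "maximal" correctly. It must mean that no strictly larger set, here $\{s_1,s\}$, satisfies the independence condition, which is exactly what produces the nontrivial rational relation. A second point is that the definition is phrased for a set, so duplicated elements are excluded; that is why $s=s_1$ is handled separately. One should also note that a semantic frame exists at all, since $m=1$ is assumed; this also rules out the invariant valuation, where $R\equiv 0$.
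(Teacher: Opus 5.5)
Your proposal is correct and follows essentially the same route as the paper's proof. Both fix a one-element semantic frame $\{t\}$, use maximality to obtain a nontrivial rational relation between $R(t)$ and any nonzero $R(s)$, show that the coefficient of $R(s)$ is nonzero, and then rescale to an arbitrary reference $\hat s$ with $R(\hat s)\neq 0$. Your separate treatment of $s=s_1$ is a small improvement, since the paper's claim that $\{t,s\}$ is dependent does not literally apply when $s=t$.
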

The proof of the Lemma can be found in Appendix~\ref{proof:q-proportionality}.

In particular, once a single nonzero reference value $R(\hat{s})$ is fixed, the entire pre-probability is determined by a rational function $q(\cdot)$ on $\mathcal{L}$.
Moreover, in this setting the freedom coming from composition with $a$-functions collapses to an overall scalar degree of freedom, as stated in the following lemma.

\begin{restatable}{lemma}{semanticquasiprob}
	\label{lem:semanticquasiprob}
	Let $\mathcal{L}$ be a universe of semantic dimension $m=1$, and let $R$ be a corresponding pre-probability function.  
	Then every $a$-function is necessarily a scalar multiplication: there exists a unique complex number $\alpha_a\in\mC$ such that
	\[
	a\circ R \;=\; \alpha_a\, R,
	\]
	i.e.\ for all $s\in\mathcal{L}$ one has $a(R(s))=\alpha_a R(s)$.
\end{restatable}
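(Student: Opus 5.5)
The plan is to use Lemma~\ref{lem:q-proportionality} to reduce the action of $a$ on the image $R(\mathcal L)$ to its action on a single reference value, and then use $\mathbb{Q}$-linearity of additive maps.

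First, dispose of the degenerate case. If $R$ is identically zero (the invariant valuation), then $a(0)=0$ for any additive $a$, since $a(0)=a(0+0)=2a(0)$. So $a\circ R=\alpha R$ for every $\alpha$. Uniqueness of $\alpha_a$ therefore fails here, and the statement must be read for valuations with some $R(\hat s)\neq 0$. That is automatic in semantic dimension $1$: by Lemma~\ref{lem:ref-frame-from-lin-indep}, the frame element $\hat s$ has $R(\hat s)\neq 0$. I will note this caveat explicitly.

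Second, fix $\hat s$ with $R(\hat s)\neq 0$. Next I show that any $a$-function is $\mathbb{Q}$-linear. From the Cauchy equation, induction gives $a(nx)=na(x)$ for $n\in\mathbb N$. Together with $a(-x)=-a(x)$, which follows from $a(0)=0$, this extends to $n\in\mathbb Z$. Writing $x=n\cdot(x/n)$ gives $a(x/n)=a(x)/n$, hence $a(qx)=q\,a(x)$ for all $q\in\mathbb Q$. This is standard and needs no regularity of $a$.

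Third, define $\alpha_a:=a(R(\hat s))/R(\hat s)$. For any $s\in\mathcal L$, Lemma~\ref{lem:q-proportionality} gives $R(s)=q(s)R(\hat s)$ with $q(s)\in\mathbb Q$. Then
\[
a(R(s))=a\bigl(q(s)R(\hat s)\bigr)=q(s)\,a(R(\hat s))=q(s)\,\alpha_a R(\hat s)=\alpha_a R(s).
\]
Uniqueness follows by evaluating at $\hat s$: any $\alpha$ with $a\circ R=\alpha R$ satisfies $\alpha R(\hat s)=a(R(\hat s))$, which forces $\alpha=\alpha_a$ because $R(\hat s)\neq0$. The value $\alpha_a$ does not depend on the choice of $\hat s$, again by this uniqueness. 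Since $a$ is a bijection and $R(\hat s)\neq 0$, we also get $\alpha_a\neq0$, which is worth recording.

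There is no real obstacle here. The only subtle points are the $\mathbb Q$-homogeneity of arbitrary (non-continuous) additive maps and the degenerate zero valuation, where uniqueness holds only vacuously or needs the nonzero hypothesis. The claim concerns $a$ restricted to the image of $R$, not $a$ as a global scalar multiplication on $\mathbb C$; I will state this clearly, since a Hamel-basis $a$ need not be $\mathbb C$-linear.
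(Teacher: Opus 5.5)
Your proposal is correct and matches the paper's proof: define $\alpha_a := a(R(\hat s))/R(\hat s)$, combine Lemma~\ref{lem:q-proportionality} with the $\mathbb{Q}$-linearity of additive maps, and get uniqueness by evaluating at $\hat s$. Your handling of the degenerate case is slightly sharper than the paper's. The paper disposes of $R\equiv 0$ by ``taking $\alpha_a=0$'', although uniqueness fails there. As you note, that case cannot arise, because semantic dimension $1$ provides a frame element with $R(\hat s)\neq 0$.
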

The proof of the Lemma can be found in Appendix~\ref{proof:semanticquasiprob}.

 \subsection{Canonical representation}

We can now invoke Lemma~\ref{lem:q-proportionality} to select a convenient canonical representative for pre-probabilities of semantic dimension~$m=1$.
By canonical we mean that the non-uniqueness induced by the $a$-freedom is eliminated: among all pre-probability representations related by composition with $a$-functions, we single out one that is uniquely defined.

To specify this canonical representative unambiguously, we observe that every universe contains a distinguished statement, namely the unique top element~$\top$. We adopt this element as our reference in the sense of Lemma~\ref{lem:q-proportionality} \footnote{Choosing $\top$ as the reference statement is natural because every universe has a top element, and it does not single out any particular statement from any particular universe. Any alternative choice would depend on the specific universe. For instance, requiring a specific statement $s\in\mathcal{L}$ to have pre-probability valued $1$ is not transferable to universes in which $s$ is absent. Likewise, requiring \enquote{the leftmost atomic statement} to have pre-probability valued $1$ is not invariant under Boolean lattice isomorphisms, which may permute the atomic statements.}.
In fact, whenever $R(\top)\neq 0$ we may take $\hat{s}=\top$ in Lemma~\ref{lem:q-proportionality} and describe the pre-probability valuation via the rational function $q$.

In this case we can use the remaining freedom coming from $a$-functions to \emph{fix} the value at the top element, thereby eliminating the ambiguity in the choice of pre-probability representation.
Concretely, we choose an $a$-function (equivalently, in semantic dimension~$m=1$, a scalar $\alpha_a$) and pass to the reparametrised pre-probability
\[
R' \coloneqq a\circ R \qquad (\text{equivalently } R'=\alpha_a R).
\]
Without loss of generality we impose~\footnote{This is just a convention: any nonzero value could be chosen in place of $1$, for instance $R'(\top)=-\sqrt{2}+i\pi$, and the discussion would be unchanged. The only value that cannot be imposed is $R'(\top)=0$, since no $a$-function can map a nonzero number to $0$. We choose $R'(\top)=1$ to align the canonical representative with the standard quasi-probability and probability normalisation.}
\[
R'(\top)=1.
\]
With this choice, Lemma~\ref{lem:q-proportionality} immediately gives $R'(s)\in\mathbb{Q}$ for every $s\in\mathcal{L}$.
More importantly, the condition $R'(\top)=1$ removes the $a$-freedom: there is no remaining freedom coming from composition with $a$-functions, and one obtains a single well-defined representative.
We call \textit{quasi-probability} this uniquely fixed canonical pre-probability.

\begin{definition}[Quasi-probability]
	\label{def:qpf}
	A \textbf{rational quasi-probability function} (in the following simply a \textbf{quasi-probability}) on a universe $\mathcal{L}$ is a pre-probability $Q:\mathcal{L}\to\mathbb{Q}\subseteq\mathbb{C}$ such that:
	\begin{enumerate}
		\item For any set of distinct indices $\{ i_j \}_j$ corresponding to atomic statements,
		\begin{equation}
			Q\!\left( \bigvee_{j} s_{i_j} \right) = \sum_{j} Q(s_{i_j}) .
		\end{equation}
		
		\item For every $s\in\mathcal{L}$,
		\begin{equation}
			Q(\neg s)=Q(\top)-Q(s).
		\end{equation}
		
		\item Either $Q$ is normalised over the atoms,
		\begin{equation}
			\sum_{s\in\mathrm{Atoms}(\mathcal{L})} Q(s)=1,
		\end{equation}
		or $Q$ is identically zero, i.e.\ $Q(s)=0$ for all $s\in\mathcal{L}$.
	\end{enumerate}
\end{definition}

There is not much difference between a pre-probability and a quasi-probability: a quasi-probability is simply the canonical representative within the $a$-freedom of pre-probabilities, obtained by fixing the value at the top element.

Notice that this canonical choice is only available when the top element is nonzero, or when one has the invariant valuation. Indeed, every $a$-function satisfies $a(0)=0$, so if a pre-probability $R$ has
\[
R(\top)=0,
\]
then no reparametrisation $a\circ R$ can satisfy ${(a\circ R)(\top)=1}$.
In this case there is no way to define a normalised quasi-probability associated with $R$, unless $R$ is identically zero (the case of the invariant valuation).
In this degenerate case the $a$-freedom is already trivial, one has $a\circ R=R$ for every $a$-function, so there is again a single well-defined representative with no residual freedom.
For this reason we include $Q\equiv 0$ as part of the canonical quasi-probability representation.
All the other cases one cannot have a quasi-probabilistic representation of the universal valuation $V$ and is restricted to pre-probabilities.

\section{Conditional quasi-probabilities}

\subsection{Standard conditional quasi-probabilities as extension of conditional probabilities and their problems}
In the standard construction of quasi-probabilities, the usual notion of conditioning can fail. Indeed, the standard approach extends classical probabilities beyond the interval $[0,1]$ while keeping the remaining definitions formally unchanged. 

Kolmogorov defines conditional probability as (see Section~1.4 of~\cite{kolmogorov1933})
\begin{equation}
	\mathcal{P}(B \mid A)\coloneqq \frac{\mathcal{P}(A,B)}{\mathcal{P}(A)}.
\end{equation}
Given a standard quasi-probability $\mathcal{Q}$, one then defines conditional quasi-probabilities by the same formula,
\begin{equation}
	\mathcal{Q}(B \mid A)\coloneqq \frac{\mathcal{Q}(A,B)}{\mathcal{Q}(A)}.
\end{equation}
However, this definition is not always meaningful, as the following example illustrates.

\begin{example}
	Consider two binary variables $A,B\in\{0,1\}$ and a quasi-probability distribution $\mathcal{Q}(A,B)$ given by
	\begin{align*}
		\mathcal{Q}(A=0,B=0)&=0.3, \quad &\mathcal{Q}(A=0,B=1)&=-0.3,\\
		\mathcal{Q}(A=1,B=0)&=0.6, \quad &\mathcal{Q}(A=1,B=1)&=0.4.
	\end{align*}
	The corresponding marginals are
	\begin{align*}
		\mathcal{Q}(A=0)&=0, \qquad &\mathcal{Q}(A=1)&=1,\\
		\mathcal{Q}(B=0)&=0.9, \qquad &\mathcal{Q}(B=1)&=0.1.
	\end{align*}
	Using the standard conditioning rule, we obtain for example
	\begin{align*}
		\mathcal{Q}(A=0 \mid B=0)&=\frac{0.3}{0.9}=\frac{1}{3}, \\
		\mathcal{Q}(A=1 \mid B=0)&=\frac{0.6}{0.9}=\frac{2}{3},\\
		\mathcal{Q}(B=0 \mid A=1)&=\frac{0.6}{1}=0.6, \\
		\mathcal{Q}(B=1 \mid A=1)&=\frac{0.4}{1}=0.4.
	\end{align*}
	But conditioning on $A=0$ is not defined, because $\mathcal{Q}(A=0)=0$:
	\begin{align*}
		\mathcal{Q}(B=0 \mid A=0)\ \text{is undefined},\\
		\mathcal{Q}(B=1 \mid A=0)\ \text{is undefined}.
	\end{align*}
	This is problematic: for instance, the joint value $\mathcal{Q}(A=0,B=0)=0.3$ is perfectly well-defined, yet the corresponding conditional quantities cannot be formed.
	
	In classical probability theory, events with ${\mathcal{P}(A)=0}$ can be handled consistently, so the breakdown above shows that importing the ratio definition of conditioning into the quasi-probabilistic setting can lead to genuine inconsistencies or undefined expressions.
\end{example}

\subsection{Instability of relativisation of quasi-probability and conditional quasi-probabilities}

With our reconstruction, the issues with quasi-probabilities and conditionals highlighted in the previous subsection disappear. Indeed, in our framework conditional quasi-probabilities are not obtained by extending classical conditional probabilities, but they are obtained as restrictions of the notion of conditional pre-probability, which is always well-defined. Within our reconstruction we understand that the failure of the standard definition of conditional quasi-probabilities comes from the fact that quasi-probabilities are not stable under relativisation in the following sense.

For pre-probabilities, restriction to sub-universes is well behaved: if a universe $\mathcal L$ is equipped with a pre-probability $R$, then the restriction of $R$ to the domain of any sub-universe is again a pre-probability.

This stability fails for quasi-probabilities. If $\mathcal L$ carries a quasi-probability $Q$, then the restriction of $Q$ to a relative universe $\mathcal L_{t}$ is, in general, not a valid quasi-probability, because quasi-probabilities must be normalised at the local top element $t$. Concretely, if $Q(t)=0$, then there is no $a$-function $a$ such that $a\!\bigl(Q(t)\bigr)=1$, and therefore the induced valuation on $\mathcal L_{t}$ cannot be parametrised as a quasi-probability. In this case one must instead work with the corresponding relative pre-probability on $\mathcal L_{t}$.

If, however, $Q(t)\neq 0$, the restriction can always be normalised to a quasi-probability by composing with the scalar $a$-function $a_{t}$ whose multiplicative factor is
\[
\alpha_{a_{t}}=\frac{1}{Q(t)}.
\]
This motivates the following definitions.

\begin{definition}[Relative quasi-probability]
\label{def:rel-quas-prob}
Let $\mathcal L$ be a universe equipped with a quasi-probability $Q$, and let $t \in\mathcal L$ satisfy $Q(t)\neq 0$. The \textbf{relative quasi-probability} at $t$ is the normalised restriction of $Q$ to the relative universe $\mathcal L_{t}$, defined by
\begin{equation}
\label{eq:relative-quasi-probability}
Q_{t}(s)\coloneqq \frac{Q(s)}{Q(t)}
\qquad (s\in \mathcal L_{t}).
\end{equation}
Equivalently, $Q_{ts}=a_{t}\circ \bigl(Q|_{\mathcal L_{t}}\bigr)$ with $a_{t}(z)=z/Q(t)$.
\end{definition}

\begin{definition}[Conditional quasi-probability]
	\label{def:cond-quas-prob}
	Let $Q_t$ be the relative quasi-probability at $t$. For any $s\in\mathcal L$, the \textbf{conditional quasi-probability} of $s$ given $t$ is defined by
	\begin{equation}
		\label{eq:cond-quasi-prob}
		Q(s\mid t)\coloneqq Q_t\bigl([s]_t\bigr).
	\end{equation}
\end{definition}

\subsection{The rule of total quasi-probability}

The rule of total pre-probability can be specialised to quasi-probabilities, making the synchronising $a$-functions explicit. However, because quasi-probabilities are not stable under relativisation, in some situations one is forced to revert to pre-probabilities. For this reason we distinguish three cases.

\begin{enumerate}
	\item \emph{Stable case:} the quasi-probability is stable under relativisation, so the analysis can be carried out entirely within quasi-probabilities.
	\item \emph{Mixed case:} conditioning (relativising) a quasi-probability yields a pre-probability that cannot be reparametrised to quasi-probability. This leads to a mixed rule of total quasi-/pre-probability, and it is precisely the regime in which the standard treatment of conditional quasi-probabilities breaks down.
	\item \emph{Invariant sub-universe:} the situation in which the valuation on a sub-universe is the invariant valuation.
\end{enumerate}

For each case we provide an explicit example in the $AB$-labelling.

\subsubsection{Case 1: Quasi-probability stable under relativisation}

We consider a universe $\mathcal L$ equipped with a quasi-probability $Q$, and a family of relative universes $\{\mathcal L_{s_j}\}_j$ such that
\[
 s_j\wedge  s_k=\bot \quad (j\neq k),
\qquad
\bigvee_j  s_j=\top,
\]
so that
\[
\sum_j Q( s_j)=Q(\top)=1.
\]
Assume that each relative universe $\mathcal L_{s_j}$ is equipped with a quasi-probability
$Q_{s_j}:\mathcal L_{\tilde s_j}\to \mC$. Thus simply composing with the localisation map we have a family of conditional quasi-probabilities $\{Q(s\mid s_j)\}_j$.

\begin{proposition}[Rule of total quasi-probability]
\label{def:rule-total-quasi-prob}
Under the assumptions above, the \textbf{rule of total quasi-probability} is the identity
\[
Q(s)=\sum_{j} Q(s\mid \tilde s_j)\,Q(\tilde s_j).
\]
\end{proposition}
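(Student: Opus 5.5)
The plan is to specialise Proposition~\ref{prop:rule-total-pre-prob} to the canonical representative $R=Q$ and make the synchronising $a$-functions explicit. Throughout, $\tilde s_j$ and $s_j$ denote the same partition elements. By Proposition~\ref{prop:rule-total-pre-prob}, with $R=Q$ as the ambient pre-probability, we have $Q(s)=\sum_j a_j\bigl(Q(s\mid s_j)\bigr)$. Here each $a_j$ is an $a$-function with $a_j\circ Q_{s_j}=Q|_{\mathcal L_{s_j}}$. So the task reduces to identifying $a_j$ as multiplication by $Q(s_j)$.

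First, I determine $a_j$ in the stable case, where $Q(s_j)\neq 0$. Since $\mathcal L$ has semantic dimension $1$, each relative universe also has semantic dimension $1$. Indeed, by Lemma~\ref{lem:q-proportionality} every value $Q(s)$ for $s\in\mathcal L_{s_j}$ is a rational multiple of a fixed nonzero value. Lemma~\ref{lem:semanticquasiprob} then says the $a$-function acts on the relevant values as multiplication by a scalar $\alpha_{a_j}$. By Definition~\ref{def:rel-quas-prob}, $Q_{s_j}=Q|_{\mathcal L_{s_j}}/Q(s_j)$. Hence $\alpha_{a_j}Q_{s_j}=Q|_{\mathcal L_{s_j}}$, and evaluating at the local top $s_j$, where $Q_{s_j}(s_j)=1$, gives $\alpha_{a_j}=Q(s_j)$. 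One subtlety needs a line of care. An $a$-function is $\mathbb Q$-linear, so on the $\mathbb Q$-span of $Q(s_j)$ it is forced to be multiplication by $Q(s_j)$. This holds whatever the $a$-function does elsewhere, so only its values on the image of $Q_{s_j}$, which lies in $\mathbb Q$, matter.

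Second, I substitute. By Definition~\ref{def:cond-quas-prob}, $Q(s\mid s_j)=Q_{s_j}([s]_{s_j})$, so $a_j(Q(s\mid s_j))=Q(s\mid s_j)\,Q(s_j)$. Summing over $j$ yields the claimed identity. As a sanity check, I will verify the result directly without Proposition~\ref{prop:rule-total-pre-prob}. The localisations $s\wedge s_j$ are pairwise disjoint with join $s$, so finite additivity from Theorem~\ref{th:sum-rule} gives $Q(s)=\sum_j Q(s\wedge s_j)$. Moreover $Q(s\wedge s_j)=Q(s_j)\,Q_{s_j}(s\wedge s_j)$ by the definition of the relative quasi-probability. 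Here $s\wedge s_j$ is a join of atoms below $s_j$, which justifies the additivity step.

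The main obstacle is the hypothesis that each $Q_{s_j}$ is a quasi-probability, which needs $Q(s_j)\neq 0$. That is exactly the meaning of "stable under relativisation" in Case~1, and I would state this explicitly. If some $Q(s_j)=0$, then $Q|_{\mathcal L_{s_j}}$ is either identically zero or not normalisable. In the identically zero case the corresponding term contributes $0=Q(s\mid s_j)\cdot 0$ under the convention $Q_{s_j}\equiv 0$, so the formula still holds. The non-normalisable case belongs to the mixed case and is excluded by assumption.
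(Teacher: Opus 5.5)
Your proposal is correct and follows the route the paper intends. The paper gives no separate proof; it says the rule is obtained by specialising Proposition~\ref{prop:rule-total-pre-prob} to $R=Q$ with the synchronising $a$-functions made explicit, and in the Bayes section it writes them as $a_A(x)=Q((A=0))\cdot x$. Your $\mathbb{Q}$-linearity argument justifies $a_j(x)=Q(s_j)\,x$ on the rational values of the normalised $Q_{s_j}$, and both your direct additivity check and your handling of the degenerate case $Q|_{\mathcal L_{s_j}}\equiv 0$ are sound.
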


\begin{example}
\label{ex:rtp-1}
We now consider a concrete instance of the rule of total quasi-probability in the ambient universe $\mathcal L_{4}$. Using the AB-labelling (see appendix~\ref{sec:AB-labelling}) the rule of total quasi-probability yields
\begin{align*}
Q((B=i_b))
&=Q(B=i_b\mid A=0)\,Q((A=0))+\nonumber\\
&+Q(B=i_b\mid A=1)\,Q((A=1)),\nonumber\\
Q((A=i_a))
&=Q(A=i_a\mid B=0)\,Q((B=0))+\nonumber\\
&+Q(A=i_a\mid B=1)\,Q((B=1)),
\end{align*}
which has the same formal structure as the standard rule of total probability.
\end{example}

\subsubsection{Case 2: Quasi-probability unstable under relativisation}
We now consider the case in which one selected relative universe,  say $\mathcal L_{s_k}$, is not equipped with a quasi-probability, but only with a pre-probability $R_{s_k}$ satisfying $R_{s_k}(\tilde s_k)=0$. To synchronise this local valuation with the ambient quasi-probability $Q$, one must choose a statement $s'_k\in \mathcal L_{s_k}$ such that $Q(s'_k)\neq 0$ (and hence $R_{s_k}(s'_k)\neq 0$). In this situation, we obtain a mixed rule of total quasi/pre probability of the form
\begin{equation*}
\label{eq:total-mixed}
Q(s)
=
\sum_{j\neq k} Q(s\mid s_j)\,Q(s_j)
+
R(s\mid s_k) \frac{Q(s'_k)}{R_{s_k}(s'_k)}.
\end{equation*}

\begin{example}
\label{ex:rtp-2}
Consider the same universe as in Example~\ref{ex:rtp-1}, but suppose now that $Q(A=1)=0$ while $Q((A=1,B=1))\neq 0$.
 
In this case the rule of total pre-probability yields
\begin{align*}
Q((B=i_b))
&= Q(B=i_b\mid A=0)\,Q((A=0)) \nonumber\\
& +R(B=i_b\mid A=1)\,
\frac{Q((A=1,B=1))}{R(B=1\mid A=1)}.
\end{align*}
\end{example}

\subsubsection{Case 3: Quasi-probability with sub-universe with invariant valuation}
As last case we examine the case in which the relative universe $\mathcal L_{s_k}$ is not equipped with a quasi-probability, but only with a pre-probability $R_{s_k}$ such that $R_{s_k}(s)=0$ for all $s\in \mathcal{L}_{s_k}$. 
Since the composition with $a$-function leaves $R_{s_k}$ invariant, then $R_{s_k}$ must be already synchronised, thus the rule of total pre-probability takes the form
\begin{align*}
\label{eq:total-mixed-0}
Q(s)
&=\sum_{j\neq k} Q(s\mid s_j) Q(s_j) + R(s\mid s_k)= \nonumber\\
&=\sum_{j\neq k} Q(s\mid s_j) Q(s_j).
\end{align*}
\begin{example}
Consider the same universe as in Example~\ref{ex:rtp-2}, but suppose now that 
\begin{align*}
Q((A=1))&=Q((A=1,B=1))=\nonumber \\
&=Q((A=1,B=0))= 0
\end{align*}. 
This implies that $Q((A=0))=1$ and in this case we have:
\begin{align*}
Q((B=i_b))
&= Q(B=i_b\mid A=0).
\end{align*}
\end{example}

\subsection{Generalised Bayes' theorem for quasi-probabilities}
Having a clear and coherent definition of conditional quasi-probabilities allows us to formulate a version of Bayes' theorem for quasi-probability.

As we saw in the case of pre-probability, Bayes' theorem is essentially a synchronisation principle for different conditional valuations. Many of the familiar difficulties with quasi-probabilities arise because they are not, in general, stable under relativisation. In other words, if we pass to a relative universe within a universe equipped with a quasi-probability, the resulting relative valuation, and hence the associated conditional valuation, may no longer be representable as a conditional quasi-probability. However, since we already know how to synchronise pre-probabilities, and we have a generalised Bayes' theorem for them, this obstacle disappears.

In what follows, as in the previous section, we present the formalisation of Bayes' theorem for quasi-probability in three cases. First, when the quasi-probability is stable under relativisation, we obtain a Bayes' theorem that is entirely analogous to the standard one. Second, when the quasi-probability on the universe under consideration is not stable, we must use a mixed form of Bayes' theorem that combines quasi-probabilities with pre-probabilities. Both of these forms are simply a specialisation of the generalised Bayes' theorem for pre-probabilities for these specific cases. As a third case we present the case with the invariant valuation.

\subsubsection{Case 1: Quasi-probability stable under relativisation}

Consider a universe $\mathcal{L}_4$ with four atomic statements $\{a,b,c,d\}$ for which the universal valuation $V$ admits a quasi-probability representation $Q$. For readability, we will freely switch between the notation $\{a,b,c,d\}$ and the $AB$-labelling introduced in Appendix~\ref{sec:AB-labelling}.

We assume $Q((A=0))=Q(a\vee b)\neq 0$ and $Q((B=1))=Q(b\vee d)\neq 0$.
Consider the two conditional quasi-probabilities 
\begin{equation}
	\label{eq:quasi-conditionals}
	Q(B=i_b \mid A=0), \quad Q(A=i_a \mid B=1).
\end{equation}
From the definition of conditional quasi-probabilities these correspond to
\[
Q_{a\vee b}(x), \quad Q_{b\vee d}(y),
\]
where we restricted the domains to $x\in\{a,b\}$ and $y\in\{b,d\}$ to match the domains of the two conditional quasi-probabilities in equation~\eqref{eq:quasi-conditionals}, and $Q_{a\vee b},Q_{b\vee d}$ are the relative quasi-probabilities for two agents respectively equipped with the relative sub-universes $\mathcal{L}_{a\vee b}$ and $\mathcal{L}_{b,d}$.
Since we know that the universes of the two agents are actually sub-universes of $\mathcal{L}_4$, we know that their quasi-probabilities can be synchronised to be the restrictions of the ambient universe's quasi-probability $Q$ to their respective sub-universes. In particular we know that there are two $a$-functions $a_{A}$ and $a_{B}$ such that 
\[
a_A\circ Q_{a\vee b} = Q|_{a\vee b}, \quad a_B\circ Q_{b \vee d}= Q|_{b \vee d}.
\]
This implies that, with AB-labelling, we can write
\[
a_A(Q(B=1\mid A=0))=a_B(Q(A=0\mid B=1)).
\]
In particular we know that $a_A(x)=Q((A=0))\cdot x$ and $a_B(x)=Q((B=1))\cdot x$.
Using the explicit form of the synchronising $a$-functions, the generalised Bayes' theorem
\begin{align*}
	Q(B=1 \mid A=0)=a_A^{-1}(a_B(Q(A=0\mid B=1))),
\end{align*}
for quasi-probabilities stable under relativisation takes the form
\begin{align*}
	Q(B=1 \mid A=0)=\frac{Q(A=0\mid B=1)Q((B=1))}{Q((A=0))}.
\end{align*}
That is exactly the form one has for standard Bayes' theorem.

\subsubsection{Case 2: Quasi-probabilities unstable under relativisation}
We consider the same setting as in Case~$1$, but now we assume that $Q(A=0)=Q(a\vee b)=0$.
In this situation we no longer have two conditional quasi-probabilities. Instead, we obtain one conditional pre-probability and one conditional quasi-probability,
\begin{equation}
	\label{eq:mixed-conditionals}
	R(B=i_b \mid A=0), \quad Q(A=i_a \mid B=1).
\end{equation}

Proceeding exactly as in the previous section, the generalised Bayes' theorem
\begin{align*}
	R(B=1 \mid A=0)=a_A^{-1}\bigl(a_B(Q(A=0\mid B=1))\bigr)
\end{align*}
simplifies to
\begin{align*}
	R(B{=}1\mid A{=}0)=a_A^{-1}\bigl(Q(A{=}0\mid B{=}1)\,Q(B{=}1)\bigr).
\end{align*}

Suppose there exists a statement $\tilde{s}\in\mathcal{L}_{a}$ such that $Q(\tilde{s})\neq 0$. Then
\begin{equation*}
	a_A^{-1}(x)=\frac{R_{a\vee b}([\tilde{s}]_{a\vee b})}{Q(\tilde{s})}\,x.
\end{equation*}
Equivalently, in the $AB$-labelling, suppose there is a value $\tilde{i}_b$ of $B$ such that $Q(A{=}0,B{=}\tilde{i}_b)\neq 0$. In that case,
\begin{equation*}
	a_A^{-1}(x)=\frac{R(B{=}\tilde{i}_b\mid A{=}0)}{Q(A{=}0,B{=}\tilde{i}_b)}\,x.
\end{equation*}

Therefore, in the mixed pre-probability/quasi-probability setting, the generalised Bayes' theorem becomes
\begin{equation*}
	R(B{=}1\!\mid\!A{=}0)\!=\!Q(A{=}0\!\mid\!B{=}1)\,Q((B{=}1))\,\tfrac{R(B{=}\tilde{i}_b \mid\!A{=}0)}{Q(A{=}0,B{=}\tilde{i}_b)}.
\end{equation*}
Of course, when $A$ and $B$ each have only two possible outcomes, the mixed pre-probability/quasi-probability case is essentially of limited practical use. However, once more outcomes are allowed, knowing a single reference value $R(B{=}\tilde{i}_b\!\mid\!A{=}0)$ is enough to recover all the other conditionals $R(B{=}i\mid A{=}0)$ from the quantities $Q(A{=}0\mid B{=}i)$.

The key point is the same as in Section~\ref{sec:communicating}: in semantic dimension~$m=1$, synchronising two valuations requires fixing a single reference value. For quasi-probabilities, we may always choose the valuation of the top element as this reference. For pre-probabilities, however, if the top element is assigned value zero, we must instead specify the valuation of some other element.

\subsubsection{Case 3:  Quasi-probability with sub-universe with invariant valuation}
This case is exactly as the case $2$, except that for the fact that for any  value $\tilde{i}_b$ of $B$ we have $Q(A=0,B=\tilde{i}_b)= 0$.
In this case the valuation $V$ on $\mathcal{L}_{a\vee b}$ is the invariant valuation and we have that 
\begin{equation*}
	Q(B=1\mid A=0)=0.
\end{equation*}

\section{Classical Probability}
\label{sec:classical-prob}

Throughout our reconstruction we have seen that \emph{pure} pre-probabilities, namely pre-probabilities that cannot be reparametrised as quasi-probabilities, arise precisely in universes where to the top element $\top$ is assigned the valuation $V(\top)=\varphi$, that is, where $R(\top)=0$. In the pure pre-probability case there is no canonical way to fix the pre-probability representation.

By contrast, whenever $V(\top)\neq \varphi$ the universal valuation can always be parametrised as a quasi-probability, which provides a canonical representation. However, we have also seen that quasi-probabilities are not stable under relativisation.

In this section we show that, for universes equipped with a quasi-probability that has the same sign on all statements, one can always parametrise the universal valuation as a genuine probability. Moreover, this probabilistic representation is stable under relativisation. In other words, restricting the set of admissible valuation to the one with the same sign on all statements of the universe corresponds to restricting to set of admissible valuations that are stable under relativisation~\footnote{The value of $0$ is always included, both in the valuations all negative and in the valuations all positive}.

We therefore conclude our reconstruction by constructing the probability representation, A brief comparison with Cox and Jaynes' classic derivation of probability can be found in appendix \ref{sec:Cox-Jaynes}.

\subsection{Classical probabilities and stability of relativisation}

For every universe with a valuation admitting a quasi-probability representation, if the sign of the quasi-probability for every statement of the universe is the same, we can define a (classical) probability function.

\begin{definition}[Probability function]
Let $\mathcal{L}$ be a universe. A \textbf{probability function} on $\mathcal{L}$ is a quasi-probability function
$P:\mathcal{L}\to\mathbb{Q}\subseteq\mC$ such that
\[
P(s)\ge 0 \qquad \text{for all } s\in\mathcal{L}.
\]
\end{definition}

Now consider a probability function $P$ on a universe $\mathcal{L}$, let $t\in\mathcal{L}$ and consider the relative sub-universe $\mathcal{L}_t$.
It is immediate to see that for probabilities relativisation is \emph{stable}.

\begin{definition}[Relative probability]
	\label{def:relative-probability}
	Let $P$ be a probability function on $\mathcal{L}$ and let $t\in\mathcal{L}$.
	The \textbf{relative probability} at $t$ is the map
	\[
	P_t:\mathcal{L}_t\to\mathbb{Q}\subseteq\mC
	\]
	defined by
	\[
	P_t(s)\coloneqq
	\begin{cases}
		\dfrac{P(s)}{P(t)}, & \text{if } P(t)>0,\\[1.2ex]
		0, & \text{if } P(t)=0,
	\end{cases}
	\qquad (s\in\mathcal{L}_t).
	\]
\end{definition}

When $P(t)>0$ this is the usual conditional assignment.
When $P(t)=0$, the definition $P_t\equiv 0$ comes as a consequence of positivity:
if $P(t)=0$, non-negativity together with additivity forces $P(s)=0$ for every $s\in\mathcal{L}_t$, therefore the restriction of $P$ to the whole relative universe is identically zero, the invariant valuation. Thus there is no freedom in the reparametrisation even when the top element is zero.
Thus, unlike the quasi-probabilistic case, passing to a relative universe never forces us out of the probabilistic representation.

\subsection{The rule of total probability}
\label{subsec:lotp}

Let $\mathcal{L}$ be a universe equipped with a classical probability $P$.
Let $\{\tilde s_j\}_j$ be a family of pairwise disjoint statements forming a partition of the top element,
\[
\tilde s_j\wedge \tilde s_k=\bot \quad (j\neq k),
\qquad
\bigvee_j \tilde s_j=\top,
\]
so that 
\[
\sum_j P(\tilde s_j)=P(\top)=1.
\]
In this case the law of total probability takes exactly the same form as the rule of total quasi-probability in Proposition~\ref{def:rule-total-quasi-prob}.

\begin{proposition}[Rule of total probability]
	\label{prop:lotp}
	Under the assumptions above, for every $s\in\mathcal{L}$ one has
	\[
	P(s)=\sum_{j} P(s\mid \tilde s_j)\,P(\tilde s_j).
	\]
\end{proposition}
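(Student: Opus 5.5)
The plan is to obtain the statement as a specialisation of the rule of total pre-probability, Proposition~\ref{prop:rule-total-pre-prob}, with the synchronising $a$-functions written out explicitly. The probability function $P$ is in particular a pre-probability on $\mathcal{L}$. For each block $\tilde s_j$ of the partition, the relative universe $\mathcal{L}_{\tilde s_j}$ carries the relative probability $P_{\tilde s_j}$ of Definition~\ref{def:relative-probability}. Composing $P_{\tilde s_j}$ with the localisation map gives the conditional $P(s\mid \tilde s_j)=P_{\tilde s_j}([s]_{\tilde s_j})$. Proposition~\ref{prop:rule-total-pre-prob} then yields $P(s)=\sum_j a_j\bigl(P(s\mid\tilde s_j)\bigr)$, where $a_j$ synchronises $P_{\tilde s_j}$ with $P|_{\mathcal{L}_{\tilde s_j}}$. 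So the only work is to identify each $a_j$ as multiplication by $P(\tilde s_j)$.

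I would split the index set into two cases.

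\textbf{Case $P(\tilde s_j)>0$.} By definition $P_{\tilde s_j}=P|_{\mathcal{L}_{\tilde s_j}}/P(\tilde s_j)$. Multiplication by $P(\tilde s_j)$ is an additive bijection of $\mC$, hence an $a$-function, and it maps $P_{\tilde s_j}$ onto $P|_{\mathcal{L}_{\tilde s_j}}$. Uniqueness of the synchronisation on the values actually taken follows from point~\ref{item:R-restriction-extension} of Lemma~\ref{lem:ref-frame-from-lin-indep}: any two $a$-functions that agree on a semantic frame give the same synchronised pre-probability. Alternatively, in semantic dimension one, Lemma~\ref{lem:semanticquasiprob} says any $a$-function acts as a scalar, and evaluating at $\tilde s_j$ fixes that scalar to $P(\tilde s_j)$. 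Hence $a_j(P(s\mid\tilde s_j))=P(s\mid\tilde s_j)P(\tilde s_j)$.

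\textbf{Case $P(\tilde s_j)=0$.} Non-negativity and additivity force $P\equiv 0$ on $\mathcal{L}_{\tilde s_j}$, which is the invariant valuation, and $P_{\tilde s_j}\equiv 0$ by definition. Every $a$-function fixes $0$, so $a_j(P(s\mid\tilde s_j))=0=P(s\mid\tilde s_j)\,P(\tilde s_j)$. This is exactly Case~3 of the quasi-probability discussion.

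Summing over $j$ gives the claim. For a direct sanity check that avoids the gauge language, note that $s=\bigvee_j (s\wedge\tilde s_j)$ is a disjoint join. Finite additivity (Theorem~\ref{th:sum-rule}, extended from atoms to disjoint joins by decomposing each term into atoms) gives $P(s)=\sum_j P(s\wedge\tilde s_j)$. Each summand equals $P(s\mid\tilde s_j)P(\tilde s_j)$ in both cases above.

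The main obstacle I anticipate is the well-definedness of the synchronising $a$-functions. Writing $a_j$ as scalar multiplication needs Lemma~\ref{lem:semanticquasiprob}, which assumes semantic dimension $1$. Since $P$ is rational-valued this holds whenever $P\not\equiv 0$ on the block, and the degenerate block is handled by invariance. The direct additivity route sidesteps this issue entirely, so I would present it as the core argument and the gauge reading as its interpretation.
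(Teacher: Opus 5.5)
Your proposal is correct and follows the paper's route. The paper gives no separate proof: it obtains the rule as the specialisation of the rule of total pre-probability, whose appendix proof is exactly your disjoint decomposition $s=\bigvee_j(s\wedge\tilde s_j)$ plus additivity, with synchronising $a$-functions $x\mapsto P(\tilde s_j)\,x$, and with the $P(\tilde s_j)=0$ blocks absorbed by the invariant valuation $P_{\tilde s_j}\equiv 0$. The semantic-dimension worry is unnecessary, since the identity only needs some synchronising $a$-function, and multiplication by $P(\tilde s_j)$ is one by construction.
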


Because of stability under relativisation we do not have mixed cases.

\subsection{Relative probabilities and conditional probabilities}

It is interesting to compare the notion of relative probabilities with the one of conditional probabilities. The relative probability $P_t(\cdot)$ is a function from the sub-universe $\mathcal{L}_t$ to $\mathbb{Q}$, while the conditional probability  $P(\cdot\mid t)$ is a function from the universe $\mathcal{L}$ to $\mathbb{Q}$. These two functions differ in their domain, but both take the same values on the sub-universe $\mathcal{L}_t$.  

\noindent
The relative probability $P_t$ is a restriction of the probability function to a sub-universe with a reparametrisation via $a$-function. From the point of view of the underlying universal valuation $V$, the relative probability is exactly a restriction of $V$ to the sub-universe $\mathcal{L}_t$. Thus probabilities and relative probabilities are exactly the same valuation, what changes is just a restriction of the domain. 

\noindent
Conditional probabilities do not describe the same universal valuation $V$ from which they originate. Conditional probabilities are the extension of relative probabilities to the domain of the full ambient universe through the localisation map. This extension gives a fixed probability value to all elements of the ambient lattice that belong to the same class specified by the localisation map (see FIG. \ref{fig:conditionals} for an example of classes and the extension of relative probabilities to conditional probabilities).

\section{Higher semantic dimension: normalised and top-zero semantic frames}
\label{sec:higher-semantic-dimension}

So far we have focused on universes of semantic dimension $m=1$, where all nonzero values of a pre-probability are rationally proportional to one another and, once we pass to quasi-probabilities, the normalisation $Q(\top)=1$ fixes the remaining $a$-freedom.
The picture changes in higher semantic dimension: when the values of a valuation span several $\mathbb Q$-independent directions in $\mathbb C$, a single scalar constraint cannot fix how those independent directions are represented.

Because of this, in the quasi-probability (and hence classical probability) setting, normalisation forces a structural asymmetry among semantic frames:
only exactly one semantic direction can carry the normalisation of $\top$, while every other independent direction must have top equal to $0$.
Equivalently, every quasi-probability canonically splits into one genuinely normalised component and several additional components that are necessarily pure pre-probabilities.

\paragraph{$\mathbb{Q}$-linear decomposition from a semantic frame}

Let $\mathcal{L}$ be a universe and let $R:\mathcal{L}\to\mathbb{C}$ be a pre-probability.
Fix a semantic frame
\[
\SL{R}=\{s_1,\dots,s_m\}\subseteq\mathcal{L},
\]
so that $\{R(s_k)\}_{k=1}^m$ are $\mathbb{Q}$-linearly independent and maximal with this property
(Definition~\ref{def:gauge-frame-dim}).

\begin{restatable}{lemma}{Uniquerationalcoordinates}
\label{lem:unique-q-coordinates}
For every statement $s\in\mathcal{L}$ there exist unique rationals $q_1(s),\dots,q_m(s)\in\mathbb{Q}$ such that
\begin{equation}
\label{eq:R-q-decomp}
R(s)=\sum_{k=1}^{m} q_k(s)\,R(s_k).
\end{equation}
\end{restatable}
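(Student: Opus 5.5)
The plan is to treat $\mathbb{C}$ as a vector space over $\mathbb{Q}$ and read the statement as a spanning-plus-independence claim for the finite family $\{R(s_k)\}_{k=1}^m$ relative to the finite set of values $\{R(s)\}_{s\in\mathcal L}$. Since $\mathcal L$ is a finite Boolean lattice, the set $\{R(s): s\in\mathcal L\}$ is finite. The semantic frame $\SL{R}$ is, by Definition~\ref{def:gauge-frame-dim}, a maximal set of statements whose $R$-values are $\mathbb{Q}$-linearly independent. Existence will come from maximality and uniqueness from independence.

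\emph{Existence.} Fix $s\in\mathcal L$. If $s\in\SL{R}$, say $s=s_j$, take $q_k(s)=\delta_{jk}$. Otherwise, consider the enlarged set $\SL{R}\cup\{s\}$. By maximality this set is not $\mathbb{Q}$-independent, so there are rationals $q_0,q_1,\dots,q_m$, not all zero, with
\[
q_0R(s)+\sum_{k=1}^m q_kR(s_k)=0 .
\]
If $q_0=0$, the relation would be a nontrivial rational relation among $R(s_1),\dots,R(s_m)$, contradicting their independence. Hence $q_0\neq0$, and dividing by $-q_0$ gives $R(s)=\sum_k q_k(s)R(s_k)$ with $q_k(s)=-q_k/q_0\in\mathbb{Q}$. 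The case $R(s)=0$ is covered automatically: $\{R(s_1),\dots,R(s_m),0\}$ is dependent via $q_0=1$, and the coefficients come out as all zero. This is consistent with item~1 of Lemma~\ref{lem:ref-frame-from-lin-indep}, since such an $s$ is never in a frame.

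\emph{Uniqueness.} Suppose $R(s)=\sum_k q_kR(s_k)=\sum_k q'_kR(s_k)$. Subtracting gives $\sum_k(q_k-q'_k)R(s_k)=0$ with rational coefficients, so independence forces $q_k=q'_k$ for every $k$.

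The only delicate point is interpretive rather than technical. Maximality in Definition~\ref{def:gauge-frame-dim} must be read as maximal among subsets of $\mathcal L$ under inclusion, so that adding any statement $s\notin\SL{R}$ destroys independence. A subtlety arises if $R(s)=R(s_j)$ for some $s\neq s_j$: the enlarged set is then dependent in the sense of values, and the argument above still applies. No finiteness issue arises, because $\mathcal L$ is finite; in particular no Hamel-basis or choice argument is needed.
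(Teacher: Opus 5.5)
Your proof is correct and follows the paper's route. Existence comes from maximality of the frame, which the paper phrases as $\{R(s_k)\}_{k=1}^m$ being a $\mathbb{Q}$-basis of $\mathrm{span}_{\mathbb{Q}}\{R(t):t\in\mathcal{L}\}$, and uniqueness comes from $\mathbb{Q}$-linear independence; you simply spell out the standard step that a maximal independent family spans, which the paper takes for granted.
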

The proof of the Lemma can be found in Appendix~\ref{proof:unique-q-coordinates}.

The functions $q_k:\mathcal{L}\to\mathbb{Q}$ can be interpreted as the rational coordinates of $R(s)$ with respect to the semantic frame.

\paragraph{Semantic components are pre-probabilities}

Using the coordinate functions we define the semantic components of $R$ as the parts supported on each independent frame value.

\begin{definition}[Semantic component of a pre-probability]
\label{def:semantic-components}
Let $\SL{R}=\{s_1,\dots,s_m\}$ be a semantic frame and let $q_k(\cdot)$ be the coordinate functions from Lemma~\ref{lem:unique-q-coordinates}.
For each $k\in\{1,\dots,m\}$ define
\begin{equation}
\label{eq:def-Rk}
R^{(k)}:\mathcal{L}\to\mathbb{C},
\qquad
R^{(k)}(s)\coloneqq q_k(s)\,R(s_k).
\end{equation}
We call $\{R^{(k)}\}_{k=1}^m$ the \textbf{semantic decomposition} of $R$ relative to the frame $\SL{R}$.
\end{definition}

By construction,
\begin{equation}
\label{eq:R-sum-components}
R(s)=\sum_{k=1}^{m} R^{(k)}(s)
\qquad \text{for all } s\in\mathcal{L}.
\end{equation}

Moreover, each $R^{(k)}$ is itself a pre-probability of semantic dimension $1$.

\paragraph{Normalisation singles out a unique semantic direction}

We now specialise to quasi-probabilities.
Let $Q:\mathcal L\to\mathbb C$ be a quasi-probability, so that $Q(\top)=1$.
The semantic decomposition \eqref{eq:R-sum-components} applied to $\top$ reads
\begin{equation}
\label{eq:top-decomp-general}
1
=
Q(\top)
=
\sum_{k=1}^m Q^{(k)}(\top)
=
\sum_{k=1}^m q_k(\top)\,Q(s_k),
\end{equation}
which is an equality between $1$ and a $\mathbb Q$-linear combination of the $\mathbb Q$-independent numbers $\{Q(s_k)\}_{k=1}^m$.

For quasi-probabilities one can choose the semantic frame so that $\top$ itself belongs to the frame, this forces the remaining directions to be top-zero.

\begin{restatable}{proposition}{uniquequasiprob}
\label{prop:normalised-frame-exists}
Let $\mathcal L$ be a universe and let $Q:\mathcal L\to\mathbb C$ be a quasi-probability with $Q(\top)=1$.
Then there exists a semantic frame
\[
\SL{Q}=\{\top,s_2,\dots,s_m\}\subseteq\mathcal L
\]
such that, for the semantic decomposition $\{Q^{(k)}\}_{k=1}^m$ relative to this frame, one has
\begin{equation}
\label{eq:one-normalised-rest-topzero}
Q^{(1)}(\top)=1,
\qquad
Q^{(k)}(\top)=0\ \ (k=2,\dots,m).
\end{equation}
In particular, $Q^{(1)}$ is a quasi-probability of semantic dimension $1$, while each $Q^{(k)}$ with $k\ge 2$ is a pure pre-probability.
The same conclusion holds \emph{a fortiori} for classical probabilities.
\end{restatable}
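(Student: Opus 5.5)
The plan is to build a frame containing $\top$ by exchange inside the $\mathbb{Q}$-span of the values of $Q$, and then read off the top-decomposition directly from uniqueness of rational coordinates.

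\emph{Building the frame.} By Lemma~\ref{lem:ref-frame-from-lin-indep} (item 4) there is a semantic frame made of atoms, say $\{t_1,\dots,t_m\}$. Its values $\{Q(t_j)\}$ form a $\mathbb{Q}$-basis of
\[
W:=\operatorname{span}_{\mathbb Q}\{Q(s): s\in\mathcal L\}.
\]
This holds because every $Q(s)$ is a finite sum of atomic values, by additivity (Theorem~\ref{th:sum-rule}). Since $Q(\top)=1\neq 0$, the singleton $\{Q(\top)\}$ is $\mathbb{Q}$-independent. The Steinitz exchange lemma in the finite-dimensional $\mathbb{Q}$-vector space $W$ then lets me extend $\{1\}$ to a basis by adding $m-1$ elements of $\{Q(t_j)\}$. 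Call the resulting statements $s_2,\dots,s_m$ and set $s_1:=\top$.

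\emph{Checking it is a semantic frame.} I must show $\{\top,s_2,\dots,s_m\}$ is maximal in the sense of Definition~\ref{def:gauge-frame-dim}. Adding any further statement $s$ gives a value $Q(s)\in W$. Since $\dim_{\mathbb Q}W=m$, this value is already a $\mathbb{Q}$-combination of the chosen ones, so the enlarged set is dependent. Hence the set is a maximal independent set, and its cardinality agrees with item~2 of Lemma~\ref{lem:ref-frame-from-lin-indep}.

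\emph{Reading off the decomposition.} By Lemma~\ref{lem:unique-q-coordinates} the coordinates are unique. Since $Q(\top)=1\cdot Q(s_1)+\sum_{k\ge2}0\cdot Q(s_k)$, uniqueness gives $q_1(\top)=1$ and $q_k(\top)=0$ for $k\ge2$. Therefore $Q^{(1)}(\top)=1$ and $Q^{(k)}(\top)=0$ for $k\ge 2$, which is \eqref{eq:one-normalised-rest-topzero}.

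\emph{Checking that each component is a pre-probability.} The coordinate maps $q_k:W\to\mathbb{Q}$ are $\mathbb{Q}$-linear. Hence $Q^{(k)}=q_k(Q(\cdot))\,Q(s_k)$ inherits the following properties from $Q$:
\begin{itemize}
\item additivity on disjoint joins of atoms;
\item $Q^{(k)}(\bot)=0$, since $Q(\bot)=0$;
\item $Q^{(k)}(\neg s)=Q^{(k)}(\top)-Q^{(k)}(s)$.
\end{itemize}
Each $Q^{(k)}$ takes values in $\mathbb{Q}\,Q(s_k)$, so it has semantic dimension $1$; the degenerate case where it is identically zero I handle as the invariant valuation. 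In particular $Q^{(1)}$ is $\mathbb{Q}$-valued, additive, and has top $1$, so it meets Definition~\ref{def:qpf}. For $k\ge2$, top value $0$ means no $a$-function can normalise $Q^{(k)}$, since $a(0)=0$; so each such $Q^{(k)}$ is a pure pre-probability. For classical probabilities the statement follows immediately, since a probability is a particular quasi-probability.

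\emph{Main obstacle.} The only delicate point is interpretive: "pure pre-probability" should mean top-zero and not identically zero. I expect this to hold automatically. Each $s_k$ is a frame element, so $Q(s_k)\neq0$ by item 1 of Lemma~\ref{lem:ref-frame-from-lin-indep}, and $q_k(s_k)=1$. Hence $Q^{(k)}(s_k)=Q(s_k)\neq 0$, so no component is identically zero.
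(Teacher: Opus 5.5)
Your proof is correct and follows essentially the same route as the paper's. Both extend the $\mathbb{Q}$-independent singleton $\{\top\}$ to a maximal independent set, then read off $q_1(\top)=1$ and $q_k(\top)=0$ for $k\ge 2$ from uniqueness of the rational coordinates (Lemma~\ref{lem:unique-q-coordinates}). Your Steinitz exchange against an atomic frame, and your check that each $Q^{(k)}$ is additive, of semantic dimension $1$, and not identically zero (via $Q^{(k)}(s_k)=Q(s_k)\neq 0$), just spell out steps the paper's proof leaves implicit.
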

The proof of the Proposition can be found in Appendix~\ref{proof:normalised-frame-exists}.

Optional remarks on number-field choices (e.g. rational vs real components) and on frequency-like interpretations are deferred to Appendices~\ref{sec:rational} and~\ref{sec:irrational}.

\subsection{What happens with regularity conditions on $a$-functions?}
\label{sec:holomorphic}

Throughout this work we have allowed $a$-functions in their full algebraic generality, namely as additive automorphisms of $(\mathbb{C},+)$. We imposed no regularity assumptions beyond additivity and bijectivity because our starting point considers the universal valuation $V$ as a primitive object: without additional analytic structure on its codomain there is no canonical notion of \enquote{regularity} to require. In particular, our choice admits highly non-constructive solutions of Cauchy’s functional equation. Algebraically, this is exactly what makes the gauge freedom large enough to act independently along different $\mathbb{Q}$-directions in the value space, and hence what supports the discussion of higher semantic dimension.

In many interpretational settings one does want extra structure. If $V$ is meant to represent, for example, \enquote{information content}, \enquote{plausibility}, or another quantity for which complex analysis is intended to be meaningful, then it is natural to restrict admissible reparametrisations to those preserving that structure. Concretely, one may require the allowed gauge regraduations (the $a$-functions) to be holomorphic while still bijective and satisfying Cauchy’s functional equation. Under this holomorphic restriction the $a$-freedom collapses completely. Every holomorphic solution is necessarily of the form \cite{aczel1989,churchill2014}
\begin{equation}
	a(z)=c\,z,
	\qquad c\in\mathbb{C},
	\label{eq:a-cz}
\end{equation}
and bijectivity is equivalent to $c\neq 0$. Thus the only remaining gauge action is a uniform global rescaling and rotation applied to all values at once.

Equivalently, holomorphicity collapses the $a$-freedom to function acting as $z\mapsto cz$. In particular, this removes the possibility of regraduations that treat distinct $\mathbb{Q}$-independent semantic directions non-uniformly, for instance, fixing one semantic component while amplifying another (see for example appendix~\ref{sec:rational}). With holomorphic $a$-functions there is only a single constant $c$, so no separate action on different $\mathbb{Q}$-vector-space components singled out by a semantic decomposition is available.

A useful conceptual consequence is that, under holomorphicity, one may work in any semantic frame and, in particular, in the full semantic dimension without introducing frame-dependent gauge artefacts. Indeed, the residual regraduation $z\mapsto cz$ acts uniformly on the entire value space. The most important consequence is that one naturally reintroduce irrational numbers for probabilities, and full complex numbers for pre-probabilities and quasi-probabilities.

Importantly, since this restriction comes after the additivity theorem, this restriction does not alter Theorem~1, nor any of the algebraic consequences drawn from additivity and synchronisability. All results on pre-probabilities, the synchronisation problem, and the generalised Bayes' theorem remain valid: what changes is only the size and qualitative behaviour of the admissible gauge freedom (and hence the bookkeeping needed to compare representations).

The discussion of quasi-probabilities is likewise essentially unchanged at the level of gauge-invariant content. In semantic dimension $1$, holomorphicity means that (before fixing a normalisation) quasi-probabilities may be represented by arbitrary complex numbers, since the residual freedom is complex scaling; imposing a canonical normalisation such as $Q(\top)=1$ fixes $c$ and removes this freedom entirely. More generally, under holomorphicity the remaining gauge is purely global, so a single canonical normalisation suffices to eliminate the residual ambiguity even when working in higher semantic dimension.

Finally, if one proceeds to define \emph{probabilities} as those quasi-probabilities that are stable under relativisation and behave as genuine nonnegative weights, also in higher-semantic dimensions. Then one must additionally require that the originating quasi-probability be real and nonnegative. This rules out nontrivial cancellations (e.g.\ purely imaginary parts summing to zero at $\top$) that would otherwise allow spurious \enquote{top-zero} behaviours under relativisation. With this positivity requirement in place, the probability notion is recovered exactly as in our earlier definition, and all previously derived algebraic identities continue to hold.

The message here is, if one admits extra regularity conditions pre-probabilities, quasi-probabilities and probabilities lose the decomposition along different $\mathbb{Q}$-direction, the proofs of this work works analogously (since we mainly studied valuations and universes with semantic dimension $m=1$), and one recovers representations on the full field of complex/real numbers~\footnote{Even with additional regularity conditions, the results of this work can be formulated for a valuation $\tilde{V}:\mathcal{L}\to\mR$ just by substituting $\mC$ with $\mR$. Since $(\mC,+,0)$ is isomorphic to $(\mR,+,0)$ the proof of Theorem~1 will be identical. In general, any group isomorphic to $(\mC,+,0)$ is suitable in the proof of Theorem~1. Thus one might have a representation additive on real numbers, but also, for example, additive on quaternions.}.

\section{Conclusions}

When you flip a coin, most of the time you have no idea whether it will land heads or tails. Still, you can imagine a scenario in which you have enough information to predict perfectly the face on which it will land. Our theory of physics tells you how to do it. When you measure a quantum system, most of the time you have no idea what measurement result you will obtain, and there is no way for you to know. Our theory of physics tells you this too.

Both scenarios are modelled using probability. However, in the coin scenario, probability is clearly a tool for dealing with uncertainty, and it is hard to be convinced that it is a property of the coin and the flip. The fact that the flipped coin will land heads is either true or false, and this truth or falsity is a property of the coin and the flip. On the contrary, in the quantum scenario, one can feel compelled to think that probability is actually a property of the system and the measurement. The theory predicts this probability precisely, and we can test it experimentally. The statement about the fact that the outcome of the measurement \textit{will} be, let us say, $0$ is not simply true or false. It comes with a probability, and this probability looks like a property of the quantum system and the measurement. All information about the system and the measurement in its totality, yields a probability associated with statement about the fact that the measurement \textit{will} be $0$, it doesn't tell if it is true or false. 

Quantum mechanics suggests to us that we should treat its probabilities as properties of our systems. We take this suggestion seriously and try to develop a theory from it. We are used to statements having a property, a binary property that we call a truth value. Quantum theory pushes us to extend the range of this quantity, for every statement, beyond two values. Since we also know that quantum mechanics often speaks in terms of quasi-probabilities with complex numbers, we start from a quantity attached to statements, a Universal Valuation $V$, allowed to take complex values. This reasoning drives us to the direct formalisation of one fundamental property we want for this universal valuation $V$. We want the Universal Valuation $V$ to be compatible with classical logic, that is, with the assignment of truth values. At the end of the day, we do not want something exotic. We want a theory that incorporates what we already know about this property of statements, the universal valuation $V$, and we already know how it behaves when it takes two distinguished values that we are used to calling true and false.

Moreover, once we interpret this quantity as a property of statements, we can import concepts from physics. We may treat statements as \enquote{physical} objects, and the universal value $V$ as a \enquote{physical} property of these objects. Inspired by this analogy we introduce some principles that we expect a universal valuation $V$ should satisfy. Before doing so, however, we introduce the concept of Syntactic Locality. This is nothing more than deciding what it means for a system of statements to be close, what a subsystem is, and how to embed systems of statements in bigger systems. It is the concept of relativisation of systems. Inspired by Moore, we call these systems (syntactic) \enquote{universes}. On top of this it is also the idea that \textit{every} universe of discourse we take in consideration should always be thought of as a local part of a bigger universe. 

On top of the reduction to truth valuations, we introduce four additional principles. First, we assume that the universal value $V$ has structure, in the sense that its value can be deduced from information about other elements of the universe. We call this the principle of local deducibility. Second, we assume that these rules are invariant across universes. There is a unique structure for the universal valuation $V$, and it does not depend on the specific discourse. We call this the principle of universality. Then we ask for the set of universal valuations we characterise to be as large as possible, compatible with our principles. We do not want to include hidden constraints. Finally, we ask that the universal valuation $V$ behaves well under symmetries of the syntax.

From these principles, clearly inspired by basic ideas in physics, we obtain the main result of this work, which strongly characterises the universal valuation $V$. We find that any universal valuation $V$ can always be reparametrised as a function that is additive on statements. In other words, the universal valuation $V$ is essentially finitely additive. Additivity is the essential property of our theory of probability, and we obtain that the universal valuation $V$, this property we attach to every statement, can always be regarded as additive. Because of this similarity in structure with probability, we call these reparametrisations of the universal valuation $V$ \textit{pre-probabilities}. As a lemma of the main structure theorem, we also find that there is a huge freedom in choosing the pre-probability representation of the universal valuation $V$, and we characterise this freedom.

Studying this freedom yields two important results. The first concerns the possibility of fixing a canonical pre-probability parametrisation. We find that it is not always possible, but in some cases the freedom can be eliminated canonically. When this can be done, we call the canonically fixed representation a quasi-probability representation. In fact, in this canonical form the representation of the universal valuation $V$ resembles exactly the standard quasi-probabilities one is used to defining by relaxing Kolmogorov's positivity axiom, and possibly extending the domain to complex numbers. We find that quasi-probabilities, a language that appears almost unavoidably in quantum mechanics, emerge naturally and coherently once one tries to characterise the universal valuation $V$.

Whilst quasi-probabilities are usually treated as a useful calculation tool, often without meaning of their own, here we derive them from motivated principles, making them a theory in their own right. In this view, quasi-probabilities are a proper structured assignment of values to statements in syntactic universes. Furthermore, the framework we develop dissipates the problems that one generally encounters when defining conditionals for standard quasi-probabilities, and clarifies what we really mean by conditioning in a broader sense, not confined to standard classical probability. This allows us to obtain and define a generalised rule of total probability and a generalised Bayes' theorem, both for pre-probabilities and for quasi-probabilities. In this framework, probability appears as a special case of quasi-probability, namely those quasi-probabilities that are stable under relativisation. Said differently, these are the ones that behave properly and coherently when moving from universes to sub-universes, or to ambient (dilated) universes. Probabilities are quasi-probabilities equipped with a stability property. From this perspective, it is hardly surprising that, if one interprets quasi-probabilities not as such but merely as relaxations of ordinary probabilities, certain things fail. What fails, precisely, is this stability property, and one must address this by working with pre-probabilities.

The second result that comes from studying the freedom in choosing a pre-probabilistic representation for the universal valuation $V$ is the discovery of additional structures, which we call semantic frames.
In particular, in our framework, since we allow for highly non-regular gauge reparametrisations, quasi-probabilities and probabilities can naturally be taken to have values in the rational numbers rather than the irrational ones, with irrational numbers carrying additional structure.
This extra structure is equipped with a valuation that cannot be described by quasi-probabilities and probabilities themselves, but only via pre-probabilities.

This is a particular structure that resonates with the constructions of de Finetti~\cite{definetti1931b}. However, we noticed that as soon as we ask for additional regularity conditions (such as holomorphicity if we work over the field of complex numbers, or continuity at a single point if we work over the field of real numbers), this extra structure collapses and our reconstruction recovers quasi-probabilities and probabilities on the whole field of complex/real numbers, without distinctions between rationals and irrationals. One merit of this reconstruction is also to highlight the necessity of this extra assumption and the fact that the fundamental structure of probability and quasi-probability emerges even without it.

Finally we discuss in the appendices how rational versus irrational numerical realisations interact with certain frequentist intuitions,

From the foundational side, we stress that the hypothesis of a universal valuation $V$ is not intended to take a stance on the interpretation of quantum states. The framework concerns the structure of valuations on syntactic universes and the calculus they induce, it is just inspired by quantum mechanics, and it is independent of how one chooses to interpret the underlying quantum formalism.

Once a more refined theory of quasi-probability that incorporates pre-probabilities is in place, two natural directions for future work suggest themselves. The first is the development of an inference theory in the Bayesian sense. Bayesian inference prescribes how priors should be updated in the light of data, thereby providing a bridge between abstract probabilistic structures and experimental practice, a link that has not been addressed here. The quasi-probabilities considered in this work are entirely theoretical; an inference framework would clarify how they may be updated after an experiment and thus how they might be brought into contact with observations. Standard Bayesian inference rests on a tight relationship between conditional probabilities, as theoretical objects, and the operational act of updating a probabilistic description once outcomes are recorded. An analogous relationship can be sought for quasi-probabilities, although one must account for additional consistency requirements specific to the quasi-probabilistic setting, for which the quantum-mechanical literature may offer useful guidance.

A second direction concerns the fact that one is often less interested in unconstrained quasi-probability theories than in constrained ones. In quantum mechanics, for instance, not every quasi-probability vector is admissible, and the meaningful objects are precisely those that satisfy the relevant constraints. One may therefore view quasi-probabilities as experimentally valuable chiefly when constrained, since they can serve as a compact way of encoding admissibility conditions. From this perspective, a systematic study of \emph{constrained quasi-probabilities} should lead to constraint-preserving notions of conditionals and to a corresponding version of Bayes' theorem, tailored so that updating rules respect the structure of the admissible set.

\section{Acknowledgements}
I thank the many people  with whom I have discussed these ideas, including the quantum foundations group at Perimeter Institute, in particular Y\`{\i}l\`{e} Y{}\={\i}ng and Rob Spekkens for the last part of the appendix, the group of Borivoje Daki\'{c}, in particular Flavio Del Santo and Borivoje Daki\'{c}, the one of \v{C}aslav Brukner, in particular Luis Cortés Barbado and  \v{C}aslav Brukner; and Matteo Scandi, and Leonardo Vaglini, and Shintaro Minagawa.
This work has been done with the support of the French government under the France 2030 investment plan, as part of the Initiative d'Excellence d'Aix-Marseille Université-A*MIDEX, AMX-22-CEI-01.

 \bibliography{Classical-Conditionals.bib}

 \newpage
 
\appendix
\onecolumngrid

\section{Proof of theorem 1}
\label{app:study-of-G}

\subsection{Valuation of atoms can be freely chosen}

Before starting with the main proof it is useful to prove the following lemma.

\begin{lemma}
	\label{lem:atomic-richness}
	Fix $n\ge 1$ and values $x_1,\dots,x_n\in \mC$. Then for any universe $\mathcal{L}_n$ with atoms $a_1,\dots,a_n$ there exists an admissible valuation $V:\mathcal{L}_n\to \mC$
	such that $V(a_i)=x_i$ for all $i$.
\end{lemma}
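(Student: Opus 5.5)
The plan is to reduce the claim to Maximum Realisability (Principle~\ref{prin:realisability}). I will use the partial valuation supported on the atoms and on $\bot$, and show that it is automatically consistent in the sense of Definition~\ref{def:consistent}. Concretely, fix $n\ge 1$, the universe $\mathcal{L}_n$ with atoms $a_1,\dots,a_n$, and arbitrary $x_1,\dots,x_n\in\mC$. I will define
\[
\mathcal{S}:=\{\bot,a_1,\dots,a_n\},\qquad p(\bot)=\varphi,\quad p(a_i)=x_i .
\]
Once consistency of $p$ is established, Principle~\ref{prin:realisability} gives an admissible $V\in\Val(\mathcal{L}_n)$ with $V|_{\mathcal{S}}=p$, and hence $V(a_i)=x_i$.

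The case $n=1$ will be handled separately and directly. Here $\mathcal{L}_1=\{\bot,a_1\}$, and the second clause of Local Deducibility (Principle~\ref{prin:syntactic-locality}) already states that for every $x\in\mC$ there is an admissible valuation with $V(a_1)=x$. Nothing further is needed.

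For $n\ge 2$, condition 1 of Definition~\ref{def:consistent} holds by construction. For condition 2, I must check every sub-universe $\mathcal{L}'\subseteq\mathcal{L}_n$ with $\mathcal{L}'\subseteq\mathcal{S}$ and $\mathcal{L}'\setminus\{\tilde s,\bot\}\neq\emptyset$. The key observation is that a sub-universe is closed under the ambient $\vee$. Suppose $\mathcal{L}'$ contained two distinct atoms $a_i\neq a_j$. Then it would also contain $a_i\vee a_j$, which is not in $\mathcal{S}$, a contradiction. So every sub-universe contained in $\mathcal{S}$ is either $\{\bot\}$ or a one-atom chain $\{\bot,a_i\}$ with top $a_i$. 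In both cases, for every $\tilde s\in\mathcal{L}'\setminus\{\bot\}$ the set $\mathcal{L}'\setminus\{\tilde s,\bot\}$ is empty. The hypothesis of condition 2 therefore never applies, so $p$ is vacuously consistent.

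The step needing care is this closure argument. I must make sure that a sub-universe inside $\mathcal{S}$ cannot be some exotic Boolean lattice with its own $\tilde\top$ and $\tilde\neg$. Closure under the ambient join settles this: any two elements of $\mathcal{S}\setminus\{\bot\}$ join outside $\mathcal{S}$. I also note that no deduction rule $\Ded$ is ever evaluated on one-atom universes, consistent with Principle~\ref{prin:syntactic-locality} being stated only for $n\ge 2$. So there is no hidden constraint linking the $x_i$, and the atomic values are genuinely free.
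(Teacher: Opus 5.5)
Your proposal is correct and follows the paper's proof: the same partial valuation on $\mathcal{S}=\{\bot,a_1,\dots,a_n\}$, consistency holding vacuously, and then Maximum Realisability. Consistency is vacuous because closure under the ambient join forces any sub-universe inside $\mathcal{S}$ to contain at most one atom. Your separate treatment of $n=1$ through the second clause of Local Deducibility is valid but unnecessary, since the same vacuity argument applies verbatim when $\mathcal{S}=\mathcal{L}_1$.
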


\begin{proof}
	Consider $ \mathcal{S}:=\{\bot,a_1,\dots,a_n\}\subseteq \mathcal{L}_n $ and define a partial valuation $p:\mathcal{S}\to \mC$ by
	\[
	p(\bot)=\varphi,
	\qquad
	p(a_i)=x_i \quad (i=1,\dots,n).
	\]
	First of all we can see that $p$ is a consistent partial valuation.
	Indeed, $p(\bot)=\varphi$ holds by construction. Furthermore every sub-universe of $\mathcal{L}$ containing more than one element of $\mathcal{S}\setminus\{\bot\}$ cannot be included in $\mathcal{S}$ since $\mathcal{S}$ does not include any join of atoms.
	Therefore no deducibility constraint is imposed on the values $p(a_i)$, and $p$ is consistent.
	By Principle~\ref{prin:realisability}, $p$ extends to an admissible valuation $V\in\Val(L_n)$. In particular, $V(a_i)=p(a_i)=x_i$ for all $i=1,\dots,n$.
\end{proof}

\subsection{Behaviour of $V$ with respect to $\vee$: An outline of the proof of theorem \ref{th:sum-rule}}

Before starting we define two functions $\mathsf{G}:\mC^2\to \mC$ and $\mathsf{K}:\mC^2\to \mC$ such that for any sub-universe $\mathcal{L}_2$ of a universe $\mathcal{L}$ the following equalities hold
\begin{equation*}
	V(a\vee b)=\Ded_{2,2}(V(a),V(b))\coloneqq \mathsf{G}\big(V(a),V(b)\big), \qquad V(a)=\Ded_{2,1}(V(a\vee b), V(b))\coloneqq \mathsf{K}\big(V(a\vee b),V(b)\big)
\end{equation*}
The existence of these functions is guaranteed by the principle of local deducibility, and because of Universality this same function can be used for any $V$ on any sub-universe with two atomic statements.

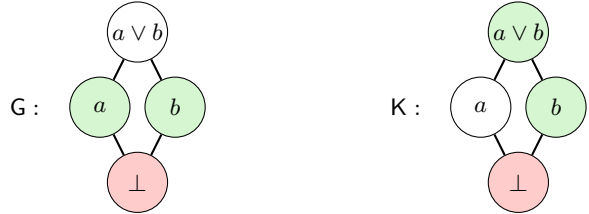
\begin{figure}[h!]
\centering
\begin{minipage}{0.3\textwidth}
\centering
\begin{tikzpicture}[baseline=(current bounding box.center)]
  \node (label) at (-1.5,1) {$\mathsf{G}:$};
  \node[lattice node, fill=red!20] (bot) at (0,0) {$\bot$};           
  \node[lattice node, fill=green!20] (a) at (-0.5,1) {$a$};
  \node[lattice node, fill=green!20] (b) at (0.5,1) {$b$};
  \node[lattice node] (top) at (0,2) {$a \vee b$};

  \draw[lattice edge] (bot) -- (a);                  
  \draw[lattice edge] (bot) -- (b);                  
  \draw[lattice edge] (a) -- (top); 
  \draw[lattice edge] (b) -- (top);     
\end{tikzpicture}
\end{minipage}
\begin{minipage}{0.3\textwidth}
\centering
\begin{tikzpicture}
	\node (label) at (-1.5,1) {$\mathsf{K}:$};
  \node[lattice node, fill=red!20] (bot) at (0,0) {$\bot$};           
  \node[lattice node] (a) at (-0.5,1) {$a$};
  \node[lattice node, fill=green!20] (b) at (0.5,1) {$b$};
  \node[lattice node, fill=green!20] (top) at (0,2) {$a \vee b$};

  \draw[lattice edge] (bot) -- (a);                  
  \draw[lattice edge] (bot) -- (b);                  
  \draw[lattice edge] (a) -- (top); 
  \draw[lattice edge] (b) -- (top);   	
\end{tikzpicture}
\end{minipage}
\caption{Because of the principle of local deducibility (principle \ref{dec-prin:syntactic-locality}), the \property of the white elements of the lattice is completely specified by the green elements of the lattice. Because of Universality (principle \ref{dec-prin:universality}) the way in which $a$ and $b$, specify $a\vee b$, and the way in which $b$ and $a\vee b,$ specify $a$ is general. We identify the relation between white and green elements with the function $\mathsf{G}$ for the case on the left, and with $\mathsf{K}$ for the case on the right.}
\label{fig:Functions-R-K}
\end{figure}

We will characterise the valuations $V$ via $\mathsf{G}$. Because the process of characterisation of $\mathsf{G}$ is long, we divide it into steps following the structure outlined in the following.\\

\begin{paragraph}{\textbf{Part 1: $(\mC,\mathsf{G},\varphi)$, is a cancellative, commutative, torsion-free monoid.}}
	
In the first part we will show that principles \eqref{dec-prin:classical-compatibility},\eqref{dec-prin:syntactic-locality},\eqref{dec-prin:universality}, and \eqref{dec-prin:realisability} imply that the function $\mathsf{G}$ has the following properties for every $x,y,z\in \mC$:
\begin{itemize}
\item $\GG{x,y}=\GG{y,x}$ \hfill \enquote{\textit{commutativity}}(see section~\ref{subsub:commutativity}),
\item $\GG{x,\GG{y,z}}=\GG{\GG{x,y},z}$ \hfill \enquote{\textit{associativity}}(see section~\ref{subsub:associativity}),
\item $\GG{x,\varphi}=x$ \hfill \enquote{\textit{neutral element}}(see section~\ref{subsub:neutral-element}),
\item $\GG{x,y}=\GG{x,z} \implies y=z$ \hfill \enquote{\textit{cancellative}}(see section~\ref{subsub:cancellativity}).
\end{itemize}
Essentially we will show that principles \eqref{dec-prin:classical-compatibility},\eqref{dec-prin:syntactic-locality},\eqref{dec-prin:universality} imply that \textbf{$(\mC,\mathsf{G},\varphi)$ is a cancellative commutative monoid}.
\noindent
With this structure in hand we will see that on every lattice $\mathcal{L}$, each $V$ is completely characterised by its value on the atomic statements of $\mathcal{L}$ .
Subsequently, requesting for $\mathsf{G}$ to satisfy principles~\eqref{dec-prin:symmetry-removes-freedom}, we will obtain that \textbf{$(\mC,\mathsf{G},\varphi)$ is a cancellative, commutative, torsion-free, divisible monoid}.\\
\end{paragraph}

\begin{paragraph}{\textbf{Part 2: Mapping of $(\mC,\mathsf{G},\varphi)$ into $(\mC,+,0)$}}

	 In this part we prove the main theorem, namely the existence and uniqueness up to group automorphisms of the pre-probability mapping $\phi$ that allows one to bijectively map every \property $V$ to a pre-probability, that is a valuation $R$ that is additive on the join of atomic statements
  \begin{equation}
    V \overset{\phi}{\mapsto} R \mid R(s_i\vee s_j)=R(s_i)+R(s_j) \qquad \mbox{with $s_i,s_j$ atomic statements.}
  \end{equation}
  To do so we show that we are in the condition of using the structure theorem of divisible groups to show that  $(\mC,\mathsf{G},\varphi)$ is isomorphic to $(\mC,+,0)$ proving the theorem and obtaining that the pre-probability mapping $\phi$ is a bijective homomorphism that maps $(\mC, \mathsf{G}, \varphi)$ to $(\mC, +,0)$.

\end{paragraph}

\subsubsection{\textbf{Part 1: $(\mC,\mathsf{G},\varphi)$, is a cancellative, commutative, torsion-free monoid.}}

\paragraph{\textbf{Commutativity}}
\label{subsub:commutativity}
We consider the two lattices $\mathcal{L}_2$ of FIG. \ref{lat:commutative}.

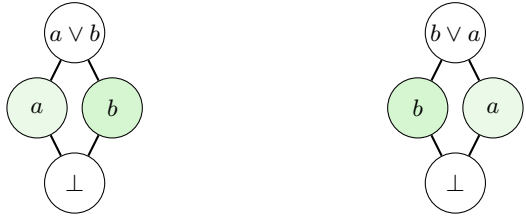
\begin{figure}[h]
\centering
\begin{minipage}{0.3\textwidth}
\centering
\begin{tikzpicture}
  \node[lattice node] (bot) at (0,0) {$\bot$};           
  \node[lattice node, fill=green!10] (a) at (-0.5,1) {$a$};
  \node[lattice node, fill=green!20] (b) at (0.5,1) {$b$};
  \node[lattice node] (top) at (0,2) {$a \vee b$};

  \draw[lattice edge] (bot) -- (a);                  
  \draw[lattice edge] (bot) -- (b);                  
  \draw[lattice edge] (a) -- (top); 
  \draw[lattice edge] (b) -- (top);     
\end{tikzpicture}
\end{minipage}
\begin{minipage}{0.3\textwidth}
\centering
\begin{tikzpicture}
  \node[lattice node] (bot) at (0,0) {$\bot$};           
  \node[lattice node, fill=green!20] (a) at (-0.5,1) {$b$};
  \node[lattice node, fill=green!10] (b) at (0.5,1) {$a$};
  \node[lattice node] (top) at (0,2) {$b \vee a$};

  \draw[lattice edge] (bot) -- (a);                  
  \draw[lattice edge] (bot) -- (b);                  
  \draw[lattice edge] (a) -- (top); 
  \draw[lattice edge] (b) -- (top);     
\end{tikzpicture}
\end{minipage}
\caption{The sub-universe on the right is obtained via the Boolean lattice isomorphism $\beta_s$ defined in equation~\eqref{eq:beta-commut}}
\label{lat:commutative}
\end{figure}

We define the Boolean lattice isomorphism
\begin{equation}
	\label{eq:beta-commut}
	\beta(\bot)=\bot, \qquad \beta(a)=b, \qquad \beta(b)=a,\qquad\beta(a\vee b)=a\vee b.
\end{equation}

Because of local deducibility and Universality we have
\begin{align*}
	V(a\vee b)=\mathsf{G}(V(a),V(b)), \qquad
	V(a\vee b)=\mathsf{G}(V(b),V(a)).
\end{align*}
It follows directly that
\begin{equation*}
	\mathsf{G}(V(a),V(b))=\mathsf{G}(V(b),V(a)), 
\end{equation*}
and since this has to be valid for every $V$ we have that \textbf{$\mathsf{G}$ is commutative}.

\paragraph{\textbf{Associativity}}
\label{subsub:associativity}
We consider the two sub-universes of the universe $\mathcal{L}_3$ shown in the left part of FIG.~\ref{fig:L3}. For these we have
\begin{align*}
	V(a\vee b\vee c) = \GG{V(a\vee b),V(c)},\\
	V(a\vee b\vee c) = \GG{V(a),V(b\vee c)}. 	
\end{align*}
We also consider the two sub-universes~\footnote{The principal ideals $\PrI{a\vee b}$and $\PrI{b\vee c}$.} shown in the right part of FIG.~\ref{fig:L3}, for these we have
\begin{align}
	V(a\vee b)=\GG{V(a),V(b)},\\
	V(b\vee c)=\GG{V(b),V(c)}.
\end{align}
Putting all this together we have
\begin{equation*}
	\GG{\GG{V(a),V(b)},V(c)}=\GG{V(a),\GG{V(b),V(c)}}.
\end{equation*}
This must be valid for every $V$, thus \textbf{$\mathsf{G}$ is associative}.
\begin{figure}[h!]
\centering
\begin{minipage}{0.2\textwidth}
	\begin{tikzpicture}
  \node[lattice node,line width=2pt] (bot) at (0,0) {$\bot$};               

  \node[lattice node, dotted]  (a)  at (-1,1) {$a$};
  \node[lattice node, dotted]  (b)  at (0,1)  {$b$};
  \node[lattice node,line width=2pt]  (c)  at (1,1)  {$c$};

  \node[lattice node,line width=2pt] (ab) at (-1,2) {$a\vee b$};
  \node[lattice node, dotted] (ac) at (0,2)  {$a\vee c$};
  \node[lattice node, dotted] (bc) at (1,2)  {$b\vee c$};

  \node[lattice node,line width=2pt] (top) at (0,3) {$\top$};

  \draw[lattice edge,very thick] (bot) -- (a);
  \draw[lattice edge, dotted] (bot) -- (b);            
  \draw[lattice edge,very thick] (bot) -- (c);

  \draw[lattice edge,very thick] (a) -- (ab);
  \draw[lattice edge, dotted] (a) -- (ac);

  \draw[lattice edge, dotted] (b) -- (ab);
  \draw[lattice edge, dotted] (b) -- (bc);

  \draw[lattice edge, dotted] (c) -- (ac);
  \draw[lattice edge,very thick] (c) -- (bc);

  \draw[lattice edge,very thick] (ab) -- (top);
  \draw[lattice edge, dotted] (ac) -- (top);        
  \draw[lattice edge,very thick] (bc) -- (top);
\end{tikzpicture}
\end{minipage}
\centering
\begin{minipage}{0.2\textwidth}
	\begin{tikzpicture}
  \node[lattice node,line width=2pt] (bot) at (0,0) {$\bot$};               

  \node[lattice node,line width=2pt]  (a)  at (-1,1) {$a$};
  \node[lattice node, dotted]  (b)  at (0,1)  {$b$};
  \node[lattice node, dotted]  (c)  at (1,1)  {$c$};

  \node[lattice node, dotted] (ab) at (-1,2) {$a\vee b$};
  \node[lattice node, dotted] (ac) at (0,2)  {$a\vee c$};
  \node[lattice node,line width=2pt] (bc) at (1,2)  {$b\vee c$};

  \node[lattice node,line width=2pt] (top) at (0,3) {$\top$};

  \draw[lattice edge, very thick] (bot) -- (a);
  \draw[lattice edge, dotted] (bot) -- (b);            
  \draw[lattice edge, very thick] (bot) -- (c);

  \draw[lattice edge, very thick] (a) -- (ab);
  \draw[lattice edge, dotted] (a) -- (ac);

  \draw[lattice edge, dotted] (b) -- (ab);
  \draw[lattice edge, dotted] (b) -- (bc);

  \draw[lattice edge, dotted] (c) -- (ac);
  \draw[lattice edge, very thick] (c) -- (bc);

  \draw[lattice edge, very thick] (ab) -- (top);
  \draw[lattice edge, dotted] (ac) -- (top);        
  \draw[lattice edge, very thick] (bc) -- (top);
\end{tikzpicture}
\end{minipage}
\centering
\begin{minipage}{0.2\textwidth}
	\begin{tikzpicture}
  \node[lattice node,line width=2pt] (bot) at (0,0) {$\bot$};               

  \node[lattice node,line width=2pt]  (a)  at (-1,1) {$a$};
  \node[lattice node,line width=2pt]  (b)  at (0,1)  {$b$};
  \node[lattice node, dotted]  (c)  at (1,1)  {$c$};

  \node[lattice node,line width=2pt] (ab) at (-1,2) {$a\vee b$};
  \node[lattice node, dotted] (ac) at (0,2)  {$a\vee c$};
  \node[lattice node, dotted] (bc) at (1,2)  {$b\vee c$};

  \node[lattice node, dotted] (top) at (0,3) {$\top$};

  \draw[lattice edge, very thick] (bot) -- (a);
  \draw[lattice edge, very thick] (bot) -- (b);            
  \draw[lattice edge, dotted] (bot) -- (c);

  \draw[lattice edge, very thick] (a) -- (ab);
  \draw[lattice edge, dotted] (a) -- (ac);

  \draw[lattice edge, very thick] (b) -- (ab);
  \draw[lattice edge, dotted] (b) -- (bc);

  \draw[lattice edge, dotted] (c) -- (ac);
  \draw[lattice edge, dotted] (c) -- (bc);

  \draw[lattice edge, dotted] (ab) -- (top);
  \draw[lattice edge, dotted] (ac) -- (top);        
  \draw[lattice edge, dotted] (bc) -- (top);
\end{tikzpicture}
\end{minipage}
\centering
\begin{minipage}{0.2\textwidth}
	\begin{tikzpicture}
  \node[lattice node, line width=2pt] (bot) at (0,0) {$\bot$};               

  \node[lattice node, dotted]  (a)  at (-1,1) {$a$};
  \node[lattice node, line width=2pt]  (b)  at (0,1)  {$b$};
  \node[lattice node, line width=2pt]  (c)  at (1,1)  {$c$};

  \node[lattice node, dotted] (ab) at (-1,2) {$a\vee b$};
  \node[lattice node, dotted] (ac) at (0,2)  {$a\vee c$};
  \node[lattice node, line width=2pt] (bc) at (1,2)  {$b\vee c$};

  \node[lattice node, dotted] (top) at (0,3) {$\top$};

  \draw[lattice edge, dotted] (bot) -- (a);
  \draw[lattice edge, very thick] (bot) -- (b);            
  \draw[lattice edge, very thick] (bot) -- (c);

  \draw[lattice edge, dotted] (a) -- (ab);
  \draw[lattice edge, dotted] (a) -- (ac);

  \draw[lattice edge, dotted] (b) -- (ab);
  \draw[lattice edge, very thick] (b) -- (bc);

  \draw[lattice edge, dotted] (c) -- (ac);
  \draw[lattice edge, very thick] (c) -- (bc);

  \draw[lattice edge, dotted] (ab) -- (top);
  \draw[lattice edge, dotted] (ac) -- (top);        
  \draw[lattice edge, dotted] (bc) -- (top);
\end{tikzpicture}
\end{minipage}
\caption{Four lattices $\mathcal{L}_3$ with highlighted sub-lattices, the two on the left are two Boolean-sub-lattices, the two on the right are two principal ideals.}
\label{fig:L3}
\end{figure}
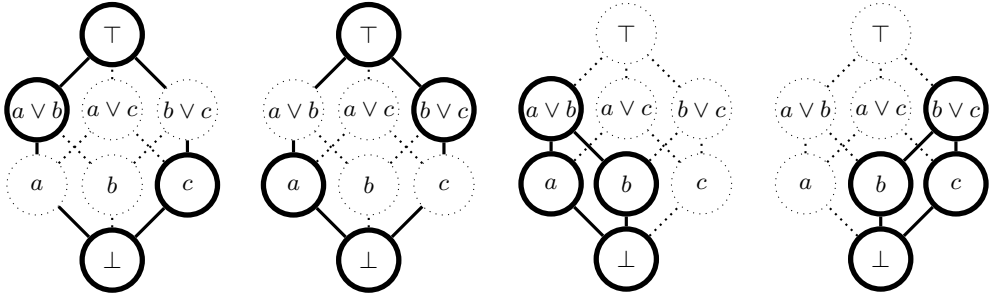

\paragraph{\textbf{Cancellativity}}
\label{subsub:cancellativity}

Consider the universe $\mathcal{L}_2=\{\bot,a,b,a\vee b\}$ with 
\begin{equation}\label{eq:swapK}
	V(b)=\mathsf{K}(V(a\vee b),V(a)).
\end{equation}

Set $x:=V(a)$ and $y:=V(b)$. By Lemma~\ref{lem:atomic-richness} with $n=2$, for every pair
$(x,y)\in \mC^2$ there exists a $2$-atom universe $\mathcal{L}_2=\{\bot,a,b,a\vee b\}$ with a valuation $V$
such that $V(a)=x$ and $V(b)=y$. Hence the identity above holds for all $x,y\in C$, i.e.
\begin{equation}
	\label{eq:KGidentity}
	\mathsf{K}(\mathsf{G}(x,y),x)=y\qquad\forall x,y\in C.
\end{equation}

Now suppose $\mathsf{G}(x,y)=\mathsf{G}(x,z)$. Applying $\mathsf{K}(\,\cdot\,,x)$ to both sides and using
\eqref{eq:KGidentity} yields $y=z$. Therefore \textbf{$\mathsf{G}$ is cancellative.}

\paragraph{\textbf{Neutral Element}}
\label{subsub:neutral-element}
We prove that $\varphi$ is a neutral element for $\mathsf{G}$.

First of all it is easy to see that $\varphi$ is the neutral element for $\mathsf{G}$ whenever we reduce to classical truth valuations. Let $\mathcal{L}_2=\{\bot,a,b,a\vee b\}$ be a $2$-atom universe. Consider the classical truth assignment on $\mathcal{L}_2$ with $a$ true and $b$ false. By Classical Compatibility
there exists an admissible valuation $V$ on $\mathcal{L}_2$ realising this assignment, namely
\[
V(\bot)=\varphi,\qquad V(a)=\tau,\qquad V(b)=\varphi,\qquad V(a\vee b)=\tau.
\]
Thus we have
\begin{equation}\label{eq:tauphi-neutral}
	\mathsf{G}(\tau,\varphi)=\tau,
\end{equation}
and by commutativity of $\mathsf{G}$ it also follows that $\mathsf{G}(\varphi,\tau)=\tau$.
Now fix an arbitrary $x\in\mathbb C$. By Lemma~\ref{lem:atomic-richness} with $n=3$, there exist a
$3$-atom universe $\mathcal{L}_3$ with atoms $a,b,c$ and an admissible valuation $W$ such that
\[
W(a)=x,\qquad W(b)=\varphi,\qquad W(c)=\tau.
\]
Using associativity of $\mathsf{G}$ we compute
\[
\mathsf{G}\bigl(\mathsf{G}(W(a),W(b)),\,W(c)\bigr)=\mathsf{G}\bigl(W(a),\,\mathsf{G}(W(b),W(c))\bigr).
\]
Substituting $W(a)=x$, $W(b)=\varphi$, $W(c)=\tau$ gives
\[
\mathsf{G}\bigl(\mathsf{G}(x,\varphi),\,\tau\bigr)=\mathsf{G}\bigl(x,\,\mathsf{G}(\varphi,\tau)\bigr).
\]
Using $\mathsf{\mathsf{G}}(\varphi,\tau)=\tau$ from \eqref{eq:tauphi-neutral}, we obtain
\begin{equation}\label{eq:step-neutral}
	\mathsf{G}\bigl(\mathsf{G}(x,\varphi),\,\tau\bigr)=\mathsf{G}(x,\tau).
\end{equation}

Using commutativity of $\mathsf{G}$, equation~\eqref{eq:step-neutral} can be rewritten as
\[
\mathsf{G}\bigl(\tau,\,\mathsf{G}(x,\varphi)\bigr)=\mathsf{G}(\tau,x).
\]
Using the cancellativity $\mathsf{G}$ we conclude
\[
\mathsf{G}(x,\varphi)=x.
\]
Finally, by commutativity also $\mathsf{G}(\varphi,x)=x$. Hence \textbf{$\varphi$ is a neutral element for $\mathsf{G}$}.

\paragraph{\textbf{Characterisation of $V$ from its value on atomic statements}}
\label{subsub:atomic-characterisation}
\begin{figure*}[h!]
	\centering
	\begin{tikzpicture}
\node[lattice node, line width=2pt, dotted] (bot) at (0,0) {$\bot$};

\node[lattice node, fill=green!20, line width=2pt, dotted] (a) at (-1.5, 1) {$a$};
\node[lattice node, fill=green!20, line width=2pt] (b) at (-0.5, 1) {$b$};
\node[lattice node, fill=green!20, line width=2pt] (c) at (0.5, 1)  {$c$};
\node[lattice node, fill=green!20] (d) at (1.5, 1)  {$d$};

\node[lattice node, line width=2pt] 			(ab) at (-2.5, 2) {$a\vee b$};
\node[lattice node, line width=2pt] 			(ac) at (-1.5, 2) {$a\vee c$};
\node[lattice node] 						(ad) at (-0.5, 2) {$a\vee d$};
\node[lattice node, line width=2pt, dotted] (bc) at (0.5, 2)  {$b\vee c$};
\node[lattice node] 						(bd) at (1.5, 2)  {$b\vee d$};
\node[lattice node] 						(cd) at (2.5, 2)  {$c\vee d$};

\node[lattice node, line width=2pt, dotted] (abc) at (-1.5, 3) {\scalebox{0.6}{$a\vee b\vee c$}};
\node[lattice node] 						(abd) at (-0.5, 3) {\scalebox{0.6}{$a\vee b\vee d $}};
\node[lattice node] 						(acd) at (0.5, 3)  {\scalebox{0.6}{$a\vee c\vee d$}};
\node[lattice node] 						(bcd) at (1.5, 3)  {\scalebox{0.6}{$b\vee c\vee d$}};

\node[lattice node] (top) at (0, 4) {$\top$};

\draw[lattice edge, very thick] (bot) -- (a);
\draw[lattice edge, very thick] (bot) -- (b);
\draw[lattice edge, very thick] (bot) -- (c);
\draw[lattice edge] (bot) -- (d);

\draw[lattice edge, very thick] (a) -- (ab); 	\draw[lattice edge, very thick] (a) -- (ac); 	\draw[lattice edge] (a) -- (ad);
\draw[lattice edge, very thick] (b) -- (ab); 	\draw[lattice edge, very thick] (b) -- (bc); 	\draw[lattice edge] (b) -- (bd);
\draw[lattice edge, very thick] (c) -- (ac); 	\draw[lattice edge, very thick] (c) -- (bc); 	\draw[lattice edge] (c) -- (cd);
\draw[lattice edge] (d) -- (ad); 				\draw[lattice edge] (d) -- (bd); 				\draw[lattice edge] (d) -- (cd);

\draw[lattice edge, very thick] (ab) -- (abc); 	\draw[lattice edge] (ab) -- (abd);
\draw[lattice edge, very thick] (ac) -- (abc); 	\draw[lattice edge] (ac) -- (acd);
\draw[lattice edge] (ad) -- (abd); 				\draw[lattice edge] (ad) -- (acd);
\draw[lattice edge, very thick] (bc) -- (abc); 	\draw[lattice edge] (bc) -- (bcd);
\draw[lattice edge] (bd) -- (abd); 				\draw[lattice edge] (bd) -- (bcd);
\draw[lattice edge] (cd) -- (acd); 				\draw[lattice edge] (cd) -- (bcd);

\draw[lattice edge] (abc) -- (top);
\draw[lattice edge] (abd) -- (top);
\draw[lattice edge] (acd) -- (top);
\draw[lattice edge] (bcd) -- (top);
	\end{tikzpicture}
	\caption{A universe $\mathcal{L}_4$. Starting from the knowledge of the value of $V$ on the atomic statements (in green), one can compute the value of $V$ on all the other statements. In particular in order to compute $V(a\vee b \vee c)$ one has to consider $\PrI{a \vee b \vee c}=\{\bot,a,b,c,a\vee b,a\vee c,b\vee c,a\vee b \vee c\}$ (highlighted with thicker lines) and a Boolean sub-lattice of this prime ideal (dotted).}
	\label{fig:L4-atomic-statements}
\end{figure*}
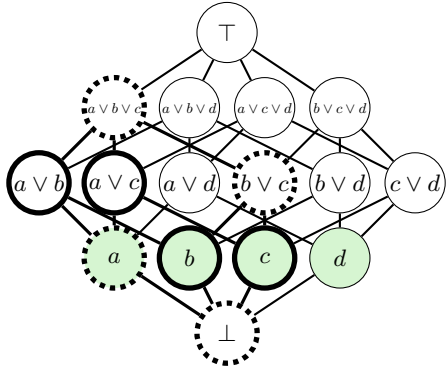
Consider any universe $\mathcal{L}$. We observe that the value of the \property $V(s_i)$  on the atomic statements $\{s_i\}_i$ completely determine the value of $V$ on every statement of the universe. Indeed, one can reconstruct the value of $V$ of any join of atomic statements directly from the value of $V$ on atomic statements. 
For instance, for two distinct atomic statements $s_i \neq s_j$, the value of $V(s_i \vee s_j)$ is simply given by $\GG{V(s_i),V(s_j)}$.
To determine the value of a join of three distinct atoms $s_i \vee s_j \vee s_k$, consider any sub-universe that forms a Boolean sub-lattice $\mathcal{L}_2$ of the prime ideal $\PrI{s_i\vee s_j\vee s_k}$ and use $\mathsf{G}$ again. 
As an example (see FIG.~\ref{fig:L4-atomic-statements}), to compute $V(a\vee b\vee c)$, consider the principal ideal $\PrI{a\vee b\vee c}$, and  focuss on its sub-universe $\{\bot,a,b\vee c, a\vee b \vee c\}$.
Within this sub-universe we have 
\begin{equation}
	V(a\vee b\vee c)=\GG{V(a),V(b\vee c)}=\GG{V(a),\GG{V(b),V(c)}},
\end{equation}
where $V(b\vee c)=\GG{V(b),V(c)}$ is computed from the sub-universe $\PrI{b\vee c}$. 
In this way, the same method extends to compute the value of $V$ on every element of any universe $\mathcal{L}$, using only its values on the atomic statements.
In general one would have that for every set of non-repeating indices of atomic statements $\{i_j\}_{j}$ the following holds
\begin{equation}
  \label{eq:decomposition-into-atomic}
	V\left(\bigvee_{j}s_{i_j}\right)=\GG{s_{i_1},\GG{s_{i_2},\GG{s_{i_3},\dots}}}.
\end{equation}

\paragraph{\textbf{Torsion-free and Divisibility}}

We define the following functions for every $n\in \mathbb{N}_{>0}$
\begin{equation}
\label{eq:def-GGn}
  \begin{cases}
    \GGn{n}{x}=\GG{x,\GGn{n-1}{x}},\\
    \GGn{1}{x}=x.
  \end{cases}
\end{equation}

Principle~\eqref{dec-prin:symmetry-removes-freedom} tells us that for every set of non-repeating indices of atomic statements $\{i_j\}_{j}$ and a value $y\in \mC$ there is just one $V$ such that $V(\bigvee_{j}s_{i_j})=y$ and $V(s_{i_j})=V(s_{i_{j'}})$ for every  $j,j'$.

Now fix $n\ge 1$ and let $s=\bigvee_{j=1}^n s_{i_j}$ be the join of $n$ distinct atoms.
Impose the symmetry condition that all these atoms receive the same valuation, i.e. $V(s_{i_j})=x$ for every $j$.
Then, by equation~\ref{eq:decomposition-into-atomic} and associativity, the value of the join depends only on $x$ and is given by the $n$-fold iterate of $\mathsf{G}$:
\[
V\!\left(\bigvee_{j=1}^n s_{i_j}\right)=\mathsf{G}^{(n)}(x).
\]
This means that  $V(s)=y$  is equivalent to $G^{(n)}(x)=y$.
Therefore, Principle~\ref{dec-prin:symmetry-removes-freedom} is exactly the statement
\begin{equation}
  \label{eq:property-v-G}
  \forall~ y\in \mC, n\in \mathbb{N}_{>0} ~ \exists !~ x\in \mC \mid \GGn{n}{x}=y.
\end{equation}
Recalling that $\varphi$ is the neutral element for $\mathsf{G}$, the property \eqref{eq:property-v-G} implies that \textbf{$\mathsf{G}$ is torsion-free}:
\begin{equation*}
  \forall~n\in \mathbb{N}_+, ~\GGn{n}{x}=\varphi \iff x=\varphi \qquad \mbox{\enquote{\textit{torsion-free}}}.
\end{equation*}
It follows directly from \eqref{eq:property-v-G} that \textbf{$\mathsf{G}$ is divisible}:
\begin{equation*}
  \forall~y\in \mC, \forall~ n\in\mathbb{N}_{>0}, \exists~ x\in\mC \mid \GGn{n}{x}=y \qquad \mbox{\enquote{\textit{divisibility}}.}
\end{equation*}

\subsubsection{\textbf{Part 2: Proof of the main theorem}}

\paragraph{\textbf{Properties of $\mathsf{G}$}}
We start this part recalling all the derived properties of $\mathsf{G}$

From principles~\eqref{dec-prin:classical-compatibility},\eqref{dec-prin:syntactic-locality},\eqref{dec-prin:universality},\eqref{dec-prin:realisability},\eqref{dec-prin:symmetry-removes-freedom} we have that the function $\mathsf{G}:\mC \times \mC \to \mC$ satisfies the following properties:
\begin{enumerate}
\item $\GG{x,y}=\GG{y,x}$ \hfill \enquote{\textit{commutativity}},
\item $\GG{x,\GG{y,z}}=\GG{\GG{x,y},z}$ \hfill \enquote{\textit{associativity}},
\item $\GG{x,\varphi}=x$ \hfill \enquote{\textit{neutral element}},
\item $\GG{x,y}=\GG{x,z}\implies y=z$ \hfill \enquote{\textit{cancellative}}.
\end{enumerate}
Together with property $(v)$ we have that $\mathsf{G}$ also satisfies:
\begin{enumerate}[resume]
\item \label{item:G-torsion-free}$ \forall~n\in \mathbb{N}_+, ~\GGn{n}{x}=\varphi \iff x=\varphi$ \hfill \enquote{\textit{torsion-free}}
\item \label{item:G-divisibility}$\forall~y\in \mC, \forall~ n\in\mathbb{N}_{>0}, \exists~ x\in\mC \mid \GGn{n}{x}=y$ \hfill \enquote{\textit{divisibility}}
\end{enumerate}

\paragraph{\textbf{Proof of the main theorem}}
\MainTheorem*

\begin{proof}
Because of equation~\eqref{eq:decomposition-into-atomic} proving Theorem~\ref{th:sum-rule} is equivalent to proving the following restatement of Theorem~1:
\begin{quote}
	There exists a bijection $\phi:\mC \to \mC$ that we call \textbf{pre-probability mapping} such that
	\begin{equation}
		\phi\left(\mathsf{G}(x,y)\right)=\phi(x)+\phi(y).
	\end{equation}
	Furthermore $\phi(\varphi)=0$.
\end{quote}

To prove it we start by fixing $y\in\mC$ and define the map
\[
\mathsf{G}_y:\mC \to \mC,\qquad \mathsf{G}_y(x):=\mathsf{G}(x,y).
\]
Cancellativity implies that $\mathsf{G}_y$ is injective. To see that it is surjective, consider an arbitrary $z\in\mC$  and consider the  universe $\mathcal{L}_2=\{\bot,a,b,a\vee b\}$ together with the subset
$\mathcal{S}:=\{\bot,b,a\vee b\}\subseteq\mathcal{L}_2$. Define a partial valuation $p:\mathcal{S}\to\mathbb{}\mathbb{C}$ by
\[
p(\bot)=\varphi,\qquad p(b)=y,\qquad p(a\vee b)=z.
\]
This partial valuation is consistent, so by Principle~\eqref{dec-prin:realisability} it extends to an admissible valuation
$V:\mathcal{L}_2\to\mC$. Setting $x:=V(a)$, we obtain that
\[
z \;=\; V(a\vee b) \;=\; \mathsf{G}(V(a),V(b)) \;=\; \mathsf{G}(x,y) \;=\; \mathsf{G}_y(x),
\]
so $\mathsf{G}_y$ is surjective as well. Hence for each $y\in\mathbb{}\mathbb{C}$ the map $\mathsf{G}_y$ is bijective,
and in particular, taking $z=\varphi$, for every $y\in\mathbb{}\mathbb{C}$ there exists an element $y^{-1}\in \mC$
such that $\mathsf{G}(y^{-1},y)=\varphi$. By commutativity, also $\mathsf{G}(y,y^{-1})=\varphi$, so every element has a two-sided
inverse and $(\mC,\mathsf{G},\varphi)$ is an abelian group.

Because of the divisibility and torsion-free of $\mathsf{G}$, $(\mC,\mathsf{G},\varphi)$ is a divisible torsion-free abelian group.

The next step is to prove that $(\mC,\mathsf{G},\varphi)$ is isomorphic to $(\mC,+)$ and thus that there is a group isomorphism
$\phi$ mapping $(\mC,\mathsf{G},\varphi)$ to $(\mC,+)$.
Using the structure theorem of divisible groups~\footnote{This proof requires the axiom of choice.}
(Theorem~3.1 in \cite{fuchs2015}) we know that, since $(\mC,\mathsf{G},\varphi)$ is a torsion-free divisible abelian group,
it is isomorphic to a direct sum of copies of $\mathbb{}\mathbb{Q}$, i.e.\ there exists a cardinal $\kappa$ such that
\[
(\mC,\mathsf{G},\varphi)\cong\left(\bigoplus_{\kappa}\mathbb{}\mathbb{Q},+_{.}\right),
\]
where $+_{.}$ denotes componentwise addition.  
At the same time $(\mC,+)$ is isomorphic to $\left(\bigoplus_{2^{\aleph_0}}\mathbb{}\mathbb{Q},+_{.}\right)$
(Exercise~3.2 in \cite{fuchs2015}). Therefore, to conclude that
\[
(\mC,\mathsf{G},\varphi)\cong (\mC,+),
\]
it suffices to show that $\kappa=2^{\aleph_0}$.

Since $ \left|\bigoplus_{\kappa}\mathbb{}\mathbb{Q}\right| =\kappa $ and  $(\mC,\mathsf{G},\varphi)\cong\left(\bigoplus_{\kappa}\mathbb{}\mathbb{Q},+_{.}\right)$, it follows that
\[
|\mC|=\left|\bigoplus_{\kappa}\mathbb{}\mathbb{Q}\right|=\kappa.
\]
Because $|\mC|=2^{\aleph_0}$, we conclude that $\kappa=2^{\aleph_0}$. Consequently,
\[
(\mC,\mathsf{G},\varphi)\cong\left(\bigoplus_{2^{\aleph_0}}\mathbb{}\mathbb{Q},+_{.}\right)\cong(\mC,+),
\]
and therefore there exists a group isomorphism $\phi:(\mC,\mathsf{G},\varphi)\to(\mC,+)$.

	All in all we obtain
	\begin{align*}
		(\mC,\mathsf{G},\varphi) \xrightarrow{\phi}(\mC,+,0).
	\end{align*}
	We call the bijection $\phi: \mC \to \mC$  a \textit{pre-probability mapping} with $\phi(\varphi)=0$ and define the \textit{pre-probability}
	\begin{equation}
		R\coloneqq \phi \circ V.	
	\end{equation}
	\noindent
	Because of the isomorphism just proved we have
	\begin{align*}
		\phi(\mathsf{G}(x,y))=\phi(x)+\phi(y).
	\end{align*}
	We just proved point $1.$ and $2.$. 
	Now we prove part $3.$
	For every $s\in \mathcal{L}$ we have
	\[
	\GG{V(s),V(\neg s)}=V(s\vee \neg s)=V(\top),
	\]
	mapping to pre-probability we have
	\[
	\phi\left(\GG{V(s),V(\neg s)}\right)=\phi(V(\top)),
	\]
	that is 
	\[
	\phi(V(s))+\phi(V(\neg s))=R(\top),
	\]
	that is
	\[
	R(s)+R(\neg s)=R(\top),
	\]
	and finally
	\[
	R(\neg s)= R(\top)-R(s).
	\]
\end{proof}

\newpage

\section{Proofs}

\subsection{Proof of Lemma~\ref{lem:atoms-to-ranks}}
\label{proof:atoms-to-ranks}
\rankequality*
\begin{proof}
	Fix $n\ge 2$ and assume $V$ satisfies local deducibility and universality.
	Assume moreover that $V$ is constant on atoms, that is there exists $x\in\mathbb C$ such that
	\[
	V(a)=x\qquad\text{for all }a\in\mathrm{Atoms}(\mathcal L).
	\]
	We prove by induction on the level $l$ of the statements that $V$ is constant on all elements of level $l$.
	
	First we show that the claim is valid for $l=1$.
	The level $1$ statements are the atoms, so the claim holds with $v_1=x$.

	Now we assume that for some $l\ge 2$ the claim holds for all levels $1,2,\dots,l-1$, that is for each $r<l$ there exists a number $v_r\in\mathbb C$
	such that $V(t)=v_r$ for every statement $t$ at level $r$.
	Consider a statement $t\in\mathcal L$ at level $l$ and the sub-universe of its principal ideal $\downarrow t$ (a sub-universe with $l$ atomic statements).
	Because of the principle of local deducibility and universality there is a deduction function $\Ded_{l,l}$ such that
	\begin{equation}
	V(t)=\Ded_{l,l}\bigl(
\underbrace{v_1,\dots,v_1}_{\binom{l}{1}},
\underbrace{v_2,\dots,v_2}_{\binom{l}{2}},
\dots,
\underbrace{v_{l-1},\dots,v_{l-1}}_{\binom{l}{\,l-1\,}}
\bigr)
	\end{equation}
	where the input data is coming from our inductive assumption.
	Now consider any other statement $\tilde{t}$ at level $l$ and proceed in the same way to obtain that
		\begin{equation}
	V(\tilde{t})=\Ded_{l,l}\bigl(
\underbrace{v_1,\dots,v_1}_{\binom{l}{1}},
\underbrace{v_2,\dots,v_2}_{\binom{l}{2}},
\dots,
\underbrace{v_{l-1},\dots,v_{l-1}}_{\binom{l}{\,l-1\,}}
\bigr).
\end{equation}
Since the input is the same, the inductive step is proved. Since we showed that for $l=1$ the claim is valid the lemma is proved.
	
\end{proof}

\subsection{Proof of Lemma~\ref{lem:a-freedom}}
\label{proof:a-freedom}
In theorem \ref{th:sum-rule} we proved the existence of a bijection on the valuation functions such that we obtain additivity. The theorem says nothing about the unicity of this bijection, and in fact this bijection in general is not unique at all. The freedom one has in the choice of $\phi$ is given in the following theorem.

\afreedom*
\begin{proof}
	Suppose that $\phi_1,\phi_2:(\mC,\mathsf{G},\varphi) \to(\mC,+)$ are two valid isomorphisms in the proof of theorem \ref{th:sum-rule}.  Then the composition $a_{2,1}:=\phi_2 \circ \phi_1^{-1}$ is an automorphism of $(\mC,+)$ and $\phi_2=a_{2,1}\circ \phi_1$. 
	Thus if $\phi$ is a pre-probability mapping, then for any automorphism $a:(\mC,+)\to (\mC,+)$, also $a\circ \phi$ is a pre-probability mapping.
	An automorphism $a$ of $(\mC,+)$ is any bijection from $\mC$ to $\mC$ that preserves the group structure, that is, such that $a(x,y)=a(x)+a(y)$ for all $x,y\in \mC$.
\end{proof}

\subsection{Proof of Lemma~\ref{lem:ref-frame-from-lin-indep}}
\label{proof:ref-frame-from-lin-indep}
\linindipendenti*
\begin{proof}
We begin by showing that any $a:\mathbb C\to\mathbb C$ satisfying Cauchy additivity
\[
a(x+y)=a(x)+a(y)\qquad(\forall x,y\in\mathbb C)
\]
is $\mathbb Q$-linear.

Setting $x=y=0$ gives $a(0)=a(0)+a(0)$, hence $a(0)=0$.
Also $0=a(x+(-x))=a(x)+a(-x)$, so $a(-x)=-a(x)$.

For $n\in\mathbb N$,
\[
a(nx)=a(\underbrace{x+\cdots+x}_{n\text{ times}})=\underbrace{a(x)+\cdots+a(x)}_{n\text{ times}}=n\,a(x),
\]
by induction using additivity. For $n\in\mathbb Z$ we use $a(-x)=-a(x)$ to extend this to negative integers:
$a(nx)=n\,a(x)$.
Now fix $q\in\mathbb N$. We know that
\[
a(qx)=q\,a(x).
\]
Replacing $x$ with $x/q$ and using additivity repeatedly yields
\[
a(x)=a\!\left(q\cdot \frac{x}{q}\right)=q\,a\!\left(\frac{x}{q}\right),
\]
hence $a(x/q)=a(x)/q$. Therefore, for any $p\in\mathbb Z$ and $q\in\mathbb N$,
\[
a\!\left(\frac{p}{q}x\right)=p\,a\!\left(\frac{x}{q}\right)=\frac{p}{q}\,a(x).
\]
So $a(rx)=r\,a(x)$ for all $r\in\mathbb Q$, that is, $a$ is $\mathbb Q$-linear.

\medskip
From now on, let $R:\mathcal L\to\mathbb C$ be a pre-probability and let $R':=a\circ R$ for some $a-function$ $a$.
Recall that a \emph{semantic frame} $\SL{R}$ is a maximal set of statements
$\{s_1,\dots,s_m\}\subseteq \mathcal L$ such that $\{R(s_1),\dots,R(s_m)\}$ is
$\mathbb Q$-linearly independent; its cardinality $m$ is the \emph{semantic dimension}.

\medskip
\textit{1.} If $s\in \SL{R}$ then $R(s)\neq 0$.

Indeed, if $R(s)=0$ then  $\{R(s)\}$ is $\mathbb Q$-linearly dependent (since, for example, $1\cdot R(s)=0$), contradicting that $\SL{R}$ is $\mathbb Q$-independent.

\medskip
\textit{2.} All semantic frames $\SL{R}$ have the same semantic dimension.

Let
\[
W_R:=\mathrm{span}_{\mathbb Q}\{R(s): s\in\mathcal L\}\subseteq \mathbb C.
\]
By maximality, the set $\{R(s): s\in\SL{R}\}$ is a $\mathbb Q$-basis of $W_R$.
Any two $\mathbb Q$-bases of the same vector space have the same cardinality, hence every semantic frame has size $\dim_{\mathbb Q}(W_R)$.

\textit{3.} All semantic frames are independent of $a$ (hence depend only on $V$).

Since $a$ is $\mathbb Q$-linear and injective, for any finite family $\{t_j\}$,
\[
\sum_j q_j R(t_j)=0 \ \Longleftrightarrow\ 
a\!\left(\sum_j q_j R(t_j)\right)=0 \ \Longleftrightarrow\
\sum_j q_j a(R(t_j))=0 \ \Longleftrightarrow\
\sum_j q_j R'(t_j)=0,
\]
for all $q_j\in\mathbb Q$. Thus $\{R(t_j)\}$ is $\mathbb Q$-independent if and only if
$\{R'(t_j)\}$ is $\mathbb Q$-independent. Therefore the maximal $\mathbb Q$-independent sets of statements, namely the semantic frames, are the same for $R$ and $R'$.
In particular, if $R$ and $R'$ are two pre-probabilities obtained from the same underlying valuation $V$ by choosing different pre-probability mappings, then they have the same semantic frames.

\textit{4.} There is always a semantic frame composed only of atomic statements, so $m\le n$.

Write the $n$ atoms of $\mathcal L$ as $s^{\mathrm{at}}_1,\dots,s^{\mathrm{at}}_n$.
Every $s\in\mathcal L$ is the join of the atoms below it, say
$s=\bigvee_{i\in I_s} s^{\mathrm{at}}_i$ and it holds that
\[
R(s)=\sum_{i\in I_s} R(s^{\mathrm{at}}_i).
\]
Hence $W_R=\mathrm{span}_{\mathbb Q}\{R(s^{\mathrm{at}}_1),\dots,R(s^{\mathrm{at}}_n)\}$.
Choose a maximal $\mathbb Q$-independent subfamily among the atomic values; let it correspond to atoms
$s^{\mathrm{at}}_{i_1},\dots,s^{\mathrm{at}}_{i_m}$. Their values form a $\mathbb Q$-basis of $W_R$, so the corresponding atoms form a semantic frame. In particular $m\le n$.

\textit{5.} If $R'(s_j)=R(s_j)$ for all $j=1,\dots,m$, then $R'(s)=R(s)$ for all $s\in\mathcal L$.

Let $\SL{R}=\{s_1,\dots,s_m\}$. Since $\{R(s_1),\dots,R(s_m)\}$ is a $\mathbb Q$-basis of $W_R$,
for each $s\in\mathcal L$ there exist unique $q_1,\dots,q_m\in\mathbb Q$ with
\[
R(s)=\sum_{j=1}^m q_j\,R(s_j).
\]
Using $\mathbb Q$-linearity of $a$ and the hypothesis $a(R(s_j))=R'(s_j)=R(s_j)$,
\[
R'(s)=a(R(s))
= a\!\left(\sum_{j=1}^m q_j R(s_j)\right)
= \sum_{j=1}^m q_j\,a(R(s_j))
= \sum_{j=1}^m q_j\,R(s_j)
= R(s).
\]
This holds for every $s\in\mathcal L$, as required.
\end{proof}

\subsection{Proof of Proposition~\ref{prop:rule-total-pre-prob}}
\label{proof:rule-total-pre-prob}
\RTPP*
\begin{proof}
	Fix an arbitrary statement $s\in\mathcal{L}$. Since $\bigvee_j s_j=\top$, we have
	\[
	s=s\wedge \top = s\wedge\Bigl(\bigvee_j s_j\Bigr).
	\]
	Because $\wedge$ distributes over $\vee$ in a Boolean lattice, this implies
	\begin{equation}
		\label{eq:decompose-s}
		s=\bigvee_j (s\wedge s_j)=\bigvee_j [s]_{s_j}.
	\end{equation}
	Moreover, the terms in this join are pairwise disjoint: if $j\neq k$ then
	\[
	(s\wedge s_j)\wedge(s\wedge s_k)=s\wedge (s_j\wedge s_k)=s\wedge \bot=\bot,
	\]
	so the family $\{[s]_{s_j}\}_j$ is a disjoint family of statements in $\mathcal{L}$.
	
	Since $R$ is a pre-probability, it is finitely additive over disjoint joins, hence applying $R$ to \eqref{eq:decompose-s} yields
	\begin{equation}
		\label{eq:R-decompose}
		R(s)=\sum_j R(s\wedge s_j)=\sum_j R([s]_{s_j}).
	\end{equation}
	
	Now use synchronisability. By assumption, for each $j$ there exists an $a$-function $a_j$ such that the reparametrised local pre-probability
	\[
	R'_{s_j}\coloneqq a_j\circ R_{s_j}
	\]
	agrees with the restriction of the ambient pre-probability to the relative universe $\mathcal{L}_{s_j}$, that is,
	\[
	R'_{s_j}(t)=R(t)\qquad \text{for all } t\in\mathcal{L}_{s_j}.
	\]
	In particular, for $t=[s]_{s_j}\in\mathcal{L}_{s_j}$ we obtain
	\[
	R([s]_{s_j})=R'_{s_j}([s]_{s_j})
	=a_j\!\left(R_{s_j}([s]_{s_j})\right)
	=a_j\!\left(R(s\mid s_j)\right),
	\]
	where in the last equality we used the definition $R(s\mid s_j)\coloneqq R_{s_j}([s]_{s_j})$.
	
	Substituting this identity into \eqref{eq:R-decompose} gives
	\[
	R(s)=\sum_j a_j\!\left(R(s\mid s_j)\right),
	\]
	which is exactly \eqref{eq:rule-total-pre-prob}.
\end{proof}

\subsection{Proof of Lemma~\ref{lem:q-proportionality}}
\label{proof:q-proportionality}
\qpropsemanticone*
\begin{proof}
Assume $(\mathcal L,R)$ has semantic dimension $1$ in the sense of Definition~\ref{def:gauge-frame-dim}.
Then there exists a semantic frame $\SL{R}=\{t\}$ consisting of a single statement $t\in\mathcal L$.
In particular $R(t)\neq 0$, since otherwise $\{t\}$ would fail to be $\mathbb Q$-linearly independent.

Fix any $\hat s\in\mathcal L$ with $R(\hat s)\neq 0$.
We first show that every $R(s)$ is a rational multiple of $R(t)$.
Let $s\in\mathcal L$.
If $R(s)=0$ we are done by taking the rational $q_t(s)=0$.
If $R(s)\neq 0$, then the set $\{t,s\}$ cannot be $\mathbb Q$-linearly independent, because the semantic dimension is $1$.
Hence there exist rationals $q_1,q_2\in\mathbb Q$, not both zero, such that
\[
q_1\,R(t)+q_2\,R(s)=0.
\]
Moreover $q_2\neq 0$: if $q_2=0$ then $q_1R(t)=0$ forces $q_1=0$ since $R(t)\neq 0$, a contradiction.
Therefore
\[
R(s)=-\frac{q_1}{q_2}\,R(t),
\]
so indeed $R(s)=q_t(s)\,R(t)$ for some $q_t(s)\in\mathbb Q$ (with the convention $q_t(s)=0$ when $R(s)=0$).

Apply the same argument to $\hat s$ in place of $s$ to obtain a rational $q_t(\hat s)\in\mathbb Q\setminus\{0\}$ such that
\[
R(\hat s)=q_t(\hat s)\,R(t).
\]
Combining the two relations gives, for every $s\in\mathcal L$,
\[
R(s)=q_t(s)\,R(t)=\frac{q_t(s)}{q_t(\hat s)}\,R(\hat s),
\]
and the coefficient $q(s):=\frac{q_t(s)}{q_t(\hat s)}$ is rational. This proves \eqref{eq:q-proportionality}.
\end{proof}

\subsection{Proof of Lemma~\ref{lem:semanticquasiprob}}
\label{proof:semanticquasiprob}
\semanticquasiprob*
\begin{proof}
	If $R\equiv 0$ then $a\circ R\equiv 0$ and the claim holds (take $\alpha_a=0$).  Assume $R$ is not identically zero and fix $\hat{s}\in\mathcal{L}$ with $R(\hat{s})\neq 0$.
	Define
	\[
	\alpha_a \;:=\; \frac{a\!\bigl(R(\hat{s})\bigr)}{R(\hat{s})}\in\mathbb{C}.
	\]
	Now let $s\in\mathcal{L}$. By Lemma~\ref{lem:q-proportionality} there exists $q(s)\in\mathbb{Q}$ such that
	$R(s)=q(s)\,R(\hat{s})$. Hence
	\[
	a\!\bigl(R(s)\bigr)
	\;=\; a\!\bigl(q(s)\,R(\hat{s})\bigr)
	\;=\; q(s)\,a\!\bigl(R(\hat{s})\bigr)
	\;=\; q(s)\,\alpha_a\,R(\hat{s})
	\;=\; \alpha_a\,R(s),
	\]
	which is exactly $a\circ R=\alpha_a R$.
	
	Uniqueness: if also $a\circ R=\beta R$, then evaluating at $\hat{s}$ gives
	$\alpha_a R(\hat{s})=\beta R(\hat{s})$, so $\alpha_a=\beta$ since $R(\hat{s})\neq 0$.
\end{proof}

\subsection{Proof of Lemma~\ref{lem:unique-q-coordinates}}
\label{proof:unique-q-coordinates}

\Uniquerationalcoordinates*
\begin{proof}
Consider the $\mathbb{Q}$-vector subspace of $\mathbb{C}$ generated by the set $\{R(t):t\in\mathcal{L}\}$.
By maximality, $\{R(s_k)\}_{k=1}^m$ is a $\mathbb{Q}$-basis for this subspace, hence every $R(s)$ admits an expansion of the form \eqref{eq:R-q-decomp}.
Uniqueness follows from $\mathbb{Q}$-linear independence of $\{R(s_k)\}_{k=1}^m$.
\end{proof}

\subsection{Proof of Proposition~\ref{prop:normalised-frame-exists}}
\label{proof:normalised-frame-exists}
\uniquequasiprob*
\begin{proof}
Since $Q(\top)=1\neq 0$, the singleton $\{\top\}$ is $\mathbb Q$-linearly independent.
Extend it to a maximal $\mathbb Q$-linearly independent set of frame statements, obtaining a semantic frame
\[
\SL{Q}=\{\top,s_2,\dots,s_m\}.
\]
Let $q_k(\cdot)$ be the unique rational coordinate functions from Lemma~\ref{lem:unique-q-coordinates} for this frame.
Applying \eqref{eq:R-q-decomp} to $\top$ gives
\[
Q(\top)=\sum_{k=1}^m q_k(\top)\,Q(s_k),
\qquad\text{with } s_1\equiv \top.
\]
But also
\[
Q(\top)=1\cdot Q(\top)+\sum_{k=2}^m 0\cdot Q(s_k).
\]
By uniqueness of the rational coordinates,
$q_1(\top)=1$ and $q_k(\top)=0$ for $k\ge 2$.
Using Definition~\ref{def:semantic-components} yields
\[
Q^{(1)}(\top)=q_1(\top)\,Q(\top)=1,
\qquad
Q^{(k)}(\top)=q_k(\top)\,Q(s_k)=0 \ \ (k\ge 2),
\]
as claimed.
\end{proof}

\section{AB-labelling}
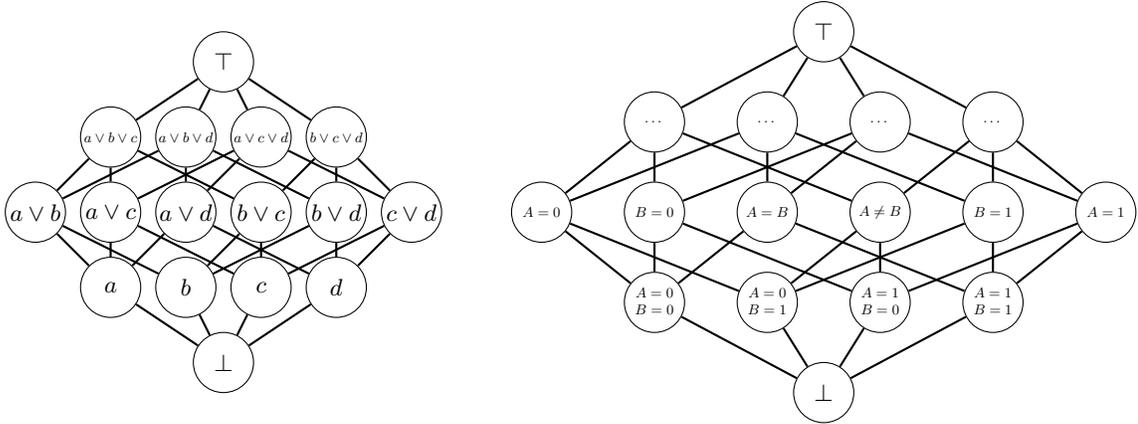
\begin{figure}[h!]
	\begin{minipage}{0.45\textwidth}
		\begin{tikzpicture}
			\node[lattice node] (bot) at (0,0) {$\bot$};
			
			\node[lattice node] (a) at (-1.5, 1) {$a$};
			\node[lattice node] (b) at (-0.5, 1) {$b$};
			\node[lattice node] (c) at (0.5, 1)  {$c$};
			\node[lattice node] (d) at (1.5, 1)  {$d$};
			
			\node[lattice node] 			(ab) at (-2.5, 2) {$a\vee b$};
			\node[lattice node] 			(ac) at (-1.5, 2) {$a\vee c$};
			\node[lattice node]			(ad) at (-0.5, 2) {$a\vee d$};
			\node[lattice node]    			(bc) at (0.5, 2)  {$b\vee c$};
			\node[lattice node]			(bd) at (1.5, 2)  {$b\vee d$};
			\node[lattice node] 			(cd) at (2.5, 2)  {$c\vee d$};
			
			\node[lattice node] 			(abc) at (-1.5, 3) {\scalebox{0.6}{$a\vee b\vee c$}};
			\node[lattice node] 			(abd) at (-0.5, 3) {\scalebox{0.6}{$a\vee b\vee d $}};
			\node[lattice node] 			(acd) at (0.5, 3)  {\scalebox{0.6}{$a\vee c\vee d$}};
			\node[lattice node] 			(bcd) at (1.5, 3)  {\scalebox{0.6}{$b\vee c\vee d$}};
			
			\node[lattice node] 			(top) at (0, 4) {$\top$};
			
			\draw[lattice edge] (bot) -- (a);
			\draw[lattice edge] (bot) -- (b);
			\draw[lattice edge] (bot) -- (c);
			\draw[lattice edge] (bot) -- (d);
			
			\draw[lattice edge] (a) -- (ab); 	\draw[lattice edge] (a) -- (ac); 	\draw[lattice edge] (a) -- (ad);
			\draw[lattice edge] (b) -- (ab); 	\draw[lattice edge] (b) -- (bc); 	\draw[lattice edge] (b) -- (bd);
			\draw[lattice edge] (c) -- (ac); 	\draw[lattice edge] (c) -- (bc); 	\draw[lattice edge] (c) -- (cd);
			\draw[lattice edge] (d) -- (ad); 	\draw[lattice edge] (d) -- (bd); 	\draw[lattice edge,] (d) -- (cd);
			
			\draw[lattice edge] (ab) -- (abc); 	\draw[lattice edge] (ab) -- (abd);
			\draw[lattice edge] (ac) -- (abc); 	\draw[lattice edge] (ac) -- (acd);
			\draw[lattice edge] (ad) -- (abd); 	\draw[lattice edge] (ad) -- (acd);
			\draw[lattice edge] (bc) -- (abc); 	\draw[lattice edge] (bc) -- (bcd);
			\draw[lattice edge] (bd) -- (abd); 	\draw[lattice edge] (bd) -- (bcd);
			\draw[lattice edge] (cd) -- (acd); 	\draw[lattice edge] (cd) -- (bcd);
			
			\draw[lattice edge] (abc) -- (top);
			\draw[lattice edge] (abd) -- (top);
			\draw[lattice edge] (acd) -- (top);
			\draw[lattice edge] (bcd) -- (top);
		\end{tikzpicture}
	\end{minipage}
	\begin{minipage}{0.5\textwidth}
		\begin{tikzpicture}
			\node[lattice node] (bot) at (0,0) {$\bot$};
			
			\node[lattice node] (a) at (-2.25, 1.2) {\scalebox{0.6}{$\begin{matrix}A=0\\B=0\end{matrix}$}};
			\node[lattice node] (b) at (-0.75, 1.2) {\scalebox{0.6}{$\begin{matrix}A=0\\B=1\end{matrix}$}};
			\node[lattice node] (c) at (0.75, 1.2)  {\scalebox{0.6}{$\begin{matrix}A=1\\B=0\end{matrix}$}};
			\node[lattice node] (d) at (2.25, 1.2)  {\scalebox{0.6}{$\begin{matrix}A=1\\B=1\end{matrix}$}};
			
			\node[lattice node] 			(ab) at (-3.75, 2.4) {\scalebox{0.6}{$A=0$}};
			\node[lattice node] 			(ac) at (-2.25, 2.4) {\scalebox{0.6}{$B=0$}};
			\node[lattice node]			(ad) at (-0.75, 2.4) {\scalebox{0.6}{$A=B$}};
			\node[lattice node]    			(bc) at (0.75, 2.4) {\scalebox{0.6}{$A\neq B$}};
			\node[lattice node] 			(bd) at (2.25, 2.4)  {\scalebox{0.6}{$B=1$}};
			\node[lattice node] 			(cd) at (3.75, 2.4)  {\scalebox{0.6}{$A=1$}};
			
			\node[lattice node] 			(abc) at (-2.25, 3.6) {\scalebox{0.6}{$\dots$}};
			\node[lattice node] 			(abd) at (-0.75, 3.6) {\scalebox{0.6}{$\dots$}};
			\node[lattice node] 			(acd) at (0.75, 3.6)  {\scalebox{0.6}{$\dots$}};
			\node[lattice node] 			(bcd) at (2.25, 3.6)  {\scalebox{0.6}{$\dots$}};
			
			\node[lattice node] 			(top) at (0, 4.8) {$\top$};
			
			\draw[lattice edge] (bot) -- (a);
			\draw[lattice edge] (bot) -- (b);
			\draw[lattice edge] (bot) -- (c);
			\draw[lattice edge] (bot) -- (d);
			
			\draw[lattice edge] (a) -- (ab); 	\draw[lattice edge] (a) -- (ac); 	\draw[lattice edge] (a) -- (ad);
			\draw[lattice edge] (b) -- (ab); 	\draw[lattice edge] (b) -- (bc); 	\draw[lattice edge] (b) -- (bd);
			\draw[lattice edge] (c) -- (ac); 	\draw[lattice edge] (c) -- (bc); 	\draw[lattice edge] (c) -- (cd);
			\draw[lattice edge] (d) -- (ad); 	\draw[lattice edge] (d) -- (bd); 	\draw[lattice edge,] (d) -- (cd);
			
			\draw[lattice edge] (ab) -- (abc); 	\draw[lattice edge] (ab) -- (abd);
			\draw[lattice edge] (ac) -- (abc); 	\draw[lattice edge] (ac) -- (acd);
			\draw[lattice edge] (ad) -- (abd); 	\draw[lattice edge] (ad) -- (acd);
			\draw[lattice edge] (bc) -- (abc); 	\draw[lattice edge] (bc) -- (bcd);
			\draw[lattice edge] (bd) -- (abd); 	\draw[lattice edge] (bd) -- (bcd);
			\draw[lattice edge] (cd) -- (acd); 	\draw[lattice edge] (cd) -- (bcd);
			
			\draw[lattice edge] (abc) -- (top);
			\draw[lattice edge] (abd) -- (top);
			\draw[lattice edge] (acd) -- (top);
			\draw[lattice edge] (bcd) -- (top);
		\end{tikzpicture}
	\end{minipage}
	\caption{On the left we use our standard notation for the atomic statements of the universe $\mathcal{L}_4$; on the right we show a different choice of atomic statements. The nodes marked with dotted labels have not been filled in, owing to lack of space.}
	\label{fig:switch-notation}
\end{figure}
\label{sec:AB-labelling}
For a universe $\mathcal{L}_4$ with four atomic statements $\{a,b,c,d\}$ the AB-labelling of statements is given by the following identifications (see Fig.~\ref{fig:switch-notation} for a clear graphical representation of these identifications)
\begin{align}
	\label{eq:AB-atoms}
	a=(A=0,B=0),\qquad b=(A=0,B=1), \nonumber\\
	c=(A=1,B=0),\qquad d=(A=1,B=1),
\end{align}
where $A$ and $B$ denote the outcomes of two binary experiments. It follows that
\begin{align}
	a\vee b =(A=0),\qquad c\vee d=(A=1),\nonumber\\
	a\vee c =(B=0),\qquad b\vee d=(B=1).
\end{align}
We also introduce the shorthand conditional notation (the analogous can be done with quasi-probabilities $Q$ and probabilities $P$)
\begin{align}
	R(B=i_b \mid A=i_a) &\coloneqq R((A=i_a,B=i_b)\mid (A={i_a})),\nonumber\\
	R(A=i_a \mid B=i_b) &\coloneqq R((A=i_a,B=i_b) \mid (B={i_b})),
\end{align}
with $i_a,i_b\in\{0,1\}$.\\

\section{Additional figures}

\begin{figure}[h!]
	\begin{minipage}{0.45\textwidth}
		\begin{tikzpicture}
			\node[lattice node, fill=red!40, line width=2pt] (bot) at (0,0) {$\bot$};
			
			\node[lattice node, fill=green!40, line width=2pt] (a) at (-1.5, 1) {$a$};
			\node[lattice node, fill=blue!40, line width=2pt] (b) at (-0.5, 1) {$b$};
			\node[lattice node, fill=red!20] (c) at (0.5, 1)  {$c$};
			\node[lattice node, fill=red!20] (d) at (1.5, 1)  {$d$};
			
			\node[lattice node, fill=orange!40, line width=2pt] 			(ab) at (-2.5, 2) {$a\vee b$};
			\node[lattice node, fill=green!20] 			(ac) at (-1.5, 2) {$a\vee c$};
			\node[lattice node, fill=green!20]			(ad) at (-0.5, 2) {$a\vee d$};
			\node[lattice node, fill=blue!20]    		(bc) at (0.5, 2)  {$b\vee c$};
			\node[lattice node, fill=blue!20] 			(bd) at (1.5, 2)  {$b\vee d$};
			\node[lattice node, fill=red!20] 			(cd) at (2.5, 2)  {$c\vee d$};
			
			\node[lattice node, fill=orange!20] 			(abc) at (-1.5, 3) {\scalebox{0.6}{$a\vee b\vee c$}};
			\node[lattice node, fill=orange!20] 			(abd) at (-0.5, 3) {\scalebox{0.6}{$a\vee b\vee d $}};
			\node[lattice node, fill=green!20] 			(acd) at (0.5, 3)  {\scalebox{0.6}{$a\vee c\vee d$}};
			\node[lattice node, fill=blue!20] 			(bcd) at (1.5, 3)  {\scalebox{0.6}{$b\vee c\vee d$}};
			
			\node[lattice node, fill=orange!20] 			(top) at (0, 4) {$\top$};
			
			\draw[lattice edge] (bot) -- (a);
			\draw[lattice edge] (bot) -- (b);
			\draw[lattice edge, very thick] (bot) -- (c);
			\draw[lattice edge, very thick] (bot) -- (d);
			
			\draw[lattice edge] (a) -- (ab); 	\draw[lattice edge, very thick] (a) -- (ac); 	\draw[lattice edge, very thick] (a) -- (ad);
			\draw[lattice edge] (b) -- (ab); 	\draw[lattice edge, very thick] (b) -- (bc); 	\draw[lattice edge, very thick] (b) -- (bd);
			\draw[lattice edge] (c) -- (ac); 	\draw[lattice edge, very thick] (c) -- (bc); 	\draw[lattice edge, very thick] (c) -- (cd);
			\draw[lattice edge] (d) -- (ad); 				\draw[lattice edge] (d) -- (bd); 				\draw[lattice edge, very thick] (d) -- (cd);
			
			\draw[lattice edge, very thick] (ab) -- (abc); 	\draw[lattice edge, very thick] (ab) -- (abd);
			\draw[lattice edge] (ac) -- (abc); 	\draw[lattice edge, very thick] (ac) -- (acd);
			\draw[lattice edge] (ad) -- (abd); 				\draw[lattice edge, very thick] (ad) -- (acd);
			\draw[lattice edge] (bc) -- (abc); 	\draw[lattice edge, very thick] (bc) -- (bcd);
			\draw[lattice edge] (bd) -- (abd); 				\draw[lattice edge, very thick] (bd) -- (bcd);
			\draw[lattice edge] (cd) -- (acd); 				\draw[lattice edge] (cd) -- (bcd);
			
			\draw[lattice edge, very thick] (abc) -- (top);
			\draw[lattice edge, very thick] (abd) -- (top);
			\draw[lattice edge] (acd) -- (top);
			\draw[lattice edge] (bcd) -- (top);
		\end{tikzpicture}
	\end{minipage}
	\begin{minipage}{0.45\textwidth}
		\begin{tikzpicture}
			\node[lattice node, fill=red!40, line width=2pt] (bot) at (0,0) {$\bot$};
			
			\node[lattice node, fill=red!20] (a) at (-1.5, 1) {$a$};
			\node[lattice node, fill=red!20] (b) at (-0.5, 1) {$b$};
			\node[lattice node, fill=green!40, line width=2pt] (c) at (0.5, 1)  {$c$};
			\node[lattice node, fill=blue!40, line width=2pt] (d) at (1.5, 1)  {$d$};
			
			\node[lattice node, fill=red!20] 			(ab) at (-2.5, 2) {$a\vee b$};
			\node[lattice node, fill=green!20] 			(ac) at (-1.5, 2) {$a\vee c$};
			\node[lattice node, fill=blue!20]			(ad) at (-0.5, 2) {$a\vee d$};
			\node[lattice node, fill=green!20]    		(bc) at (0.5, 2)  {$b\vee c$};
			\node[lattice node, fill=blue!20] 			(bd) at (1.5, 2)  {$b\vee d$};
			\node[lattice node, fill=orange!40, line width=2pt] 			(cd) at (2.5, 2)  {$c\vee d$};
			
			\node[lattice node, fill=green!20] 			(abc) at (-1.5, 3) {\scalebox{0.6}{$a\vee b\vee c$}};
			\node[lattice node, fill=blue!20] 			(abd) at (-0.5, 3) {\scalebox{0.6}{$a\vee b\vee d $}};
			\node[lattice node, fill=orange!20] 			(acd) at (0.5, 3)  {\scalebox{0.6}{$a\vee c\vee d$}};
			\node[lattice node, fill=orange!20] 			(bcd) at (1.5, 3)  {\scalebox{0.6}{$b\vee c\vee d$}};
			
			\node[lattice node, fill=orange!20] 			(top) at (0, 4) {$\top$};
			
			\draw[lattice edge, very thick] (bot) -- (a);
			\draw[lattice edge, very thick] (bot) -- (b);
			\draw[lattice edge] (bot) -- (c);
			\draw[lattice edge] (bot) -- (d);
			
			\draw[lattice edge, very thick] (a) -- (ab); 	\draw[lattice edge] (a) -- (ac); 	\draw[lattice edge] (a) -- (ad);
			\draw[lattice edge, very thick] (b) -- (ab); 	\draw[lattice edge] (b) -- (bc); 	\draw[lattice edge] (b) -- (bd);
			\draw[lattice edge, very thick] (c) -- (ac); 	\draw[lattice edge, very thick] (c) -- (bc); 	\draw[lattice edge] (c) -- (cd);
			\draw[lattice edge, very thick] (d) -- (ad); 				\draw[lattice edge, very thick] (d) -- (bd); 				\draw[lattice edge] (d) -- (cd);
			
			\draw[lattice edge] (ab) -- (abc); 	\draw[lattice edge] (ab) -- (abd);
			\draw[lattice edge, very thick] (ac) -- (abc); 	\draw[lattice edge] (ac) -- (acd);
			\draw[lattice edge, very thick] (ad) -- (abd); 				\draw[lattice edge] (ad) -- (acd);
			\draw[lattice edge, very thick] (bc) -- (abc); 	\draw[lattice edge] (bc) -- (bcd);
			\draw[lattice edge, very thick] (bd) -- (abd); 				\draw[lattice edge] (bd) -- (bcd);
			\draw[lattice edge, very thick] (cd) -- (acd); 				\draw[lattice edge, very thick] (cd) -- (bcd);
			
			\draw[lattice edge] (abc) -- (top);
			\draw[lattice edge] (abd) -- (top);
			\draw[lattice edge, very thick] (acd) -- (top);
			\draw[lattice edge, very thick] (bcd) -- (top);
		\end{tikzpicture}
	\end{minipage}
	\caption{A universe $\mathcal{L}_4$ with two sub-universe and classes of statements with the same conditional probability highlighted. On the left we consider the sub-universe $\mathcal{L}_{a\vee b}$, the relative probability $P_{a\vee b}$ is the restriction of $P$ on this sub-universe. Extending $P_{a \vee b}$ to a conditional probability on the whole $\mathcal{L}_4$ the values the conditional probability takes are extended as in picture, where the same colour means the same value. Statements of the same colour are statements that belong to the same equivalence class with respect to the equivalence relation $\sim_{a\vee b }$ defined as $s \sim_{a  \vee b} s' \iff s\wedge (a \vee b) = s'\wedge (a \vee b)$. On the right the same is done considering the sub-universe $\mathcal{L}_{c\vee d}$.}
	\label{fig:conditionals}
\end{figure}
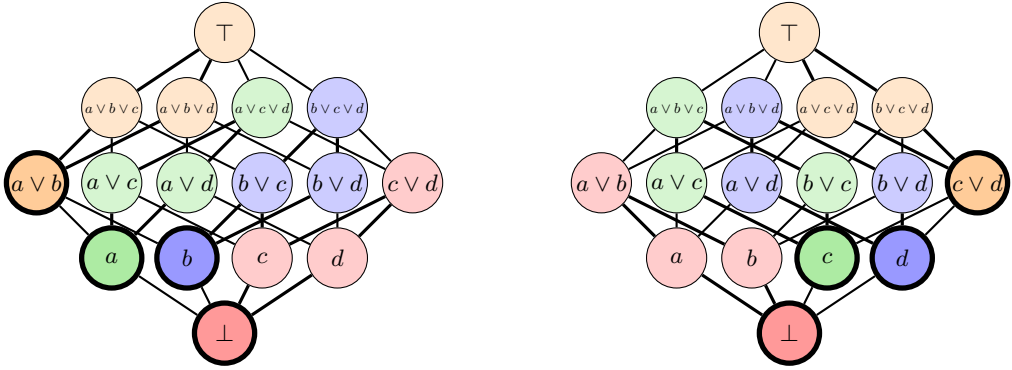

\section{Universality split}
\label{app:global-local-universality}
\subsection*{Global universality}
Let $\mathcal{L}$ be a universe with $n\ge 2$ atoms, and fix a distinguished statement $\tilde{s}\in\mathcal{L}\setminus\{\bot\}$.
By local deducibility there exists a function $F_{\mathcal{L},\tilde s}$ such that
\begin{equation}
	\label{eq:global-SL}
	V(\tilde{s})
	=
	F_{\mathcal{L},\tilde s}\!\left(\{V(s)\}_{s\in\mathcal{L}\setminus\{\tilde{s},\bot\}}\right).
\end{equation}
Global universality requires that the same deduction rule applies after any relabelling by a Boolean-lattice automorphism.
That is, for every Boolean-lattice automorphism $\phi:\mathcal{L}\to\mathcal{L}$ one has
\begin{equation}
	\label{eq:global-U}
	F_{\mathcal{L},\phi(\tilde s)}=F_{\mathcal{L},\tilde s},
\end{equation}
and consequently
\begin{equation}
	\label{eq:global-U-expanded}
	V\!\bigl(\phi(\tilde{s})\bigr)
	=
	F_{\mathcal{L},\tilde s}\!\left(\{V(\phi(s))\}_{s\in\mathcal{L}\setminus\{\tilde{s},\bot\}}\right).
\end{equation}
Since automorphisms preserve rank, Eq.~\eqref{eq:global-U} identifies the deduction rule within each level of the lattice, without forcing it to coincide across different levels.

\subsection*{Universality across sub-universes}
Let $\mathcal{L}_1$ and $\mathcal{L}_2$ be syntactic sub-universes of an ambient universe, and assume there exists a Boolean-lattice isomorphism
\[
\phi_{12}:\mathcal{L}_1\to\mathcal{L}_2.
\]
Let $V_{\mathcal{L}_1}$ and $V_{\mathcal{L}_2}$ denote the restrictions of $V$ to $\mathcal{L}_1$ and $\mathcal{L}_2$, respectively.
Fix a distinguished statement $\tilde{s}\in\mathcal{L}_1\setminus\{\bot\}$.
By local deducibility on $\mathcal{L}_1$ there exists a function $F_{\mathcal{L}_1,\tilde s}$ such that
\begin{equation}
	\label{eq:local-SL}
	V_{\mathcal{L}_1}(\tilde{s})
	=
	F_{\mathcal{L}_1,\tilde s}\!\left(\{V_{\mathcal{L}_1}(s)\}_{s\in\mathcal{L}_1\setminus\{\tilde{s},\bot\}}\right).
\end{equation}
Applying local deducibility to $\mathcal{L}_2$ with distinguished statement $\phi_{12}(\tilde s)$ gives
\begin{equation}
	\label{eq:local-SL2}
	V_{\mathcal{L}_2}\!\bigl(\phi_{12}(\tilde{s})\bigr)
	=
	F_{\mathcal{L}_2,\phi_{12}(\tilde s)}\!\left(\{V_{\mathcal{L}_2}(\phi_{12}(s))\}_{s\in\mathcal{L}_1\setminus\{\tilde{s},\bot\}}\right).
\end{equation}
Local universality requires that isomorphic sub-universes share the same deduction rule, i.e.
\begin{equation}
	\label{eq:local-U}
	F_{\mathcal{L}_2,\phi_{12}(\tilde s)}=F_{\mathcal{L}_1,\tilde s}.
\end{equation}
With \eqref{eq:local-U}, Eq.~\eqref{eq:local-SL2} becomes
\begin{equation}
	\label{eq:local-U-expanded}
	V_{\mathcal{L}_2}\!\bigl(\phi_{12}(\tilde{s})\bigr)
	=
	F_{\mathcal{L}_1,\tilde s}\!\left(\{V_{\mathcal{L}_2}(\phi_{12}(s))\}_{s\in\mathcal{L}_1\setminus\{\tilde{s},\bot\}}\right).
\end{equation}

\section{Universal valuations, Syntactic Locality, and the derivation of probability of Cox--Jaynes}
\label{sec:Cox-Jaynes}
Our derivation begins from a \emph{universal valuation} $V$, an object that will later specialise to a probability assignment (or, more generally, to a quasi-probability or pre-probability).
By contrast, the Cox--Jaynes approach takes as primitive what Cox calls a \enquote{degree of plausibility} \cite{cox2001} and Jaynes reads as an \enquote{extended logic}\cite{jaynes2003}: a numerical valuation of a statement \emph{relative to} a background condition, typically written schematically as
\[
w(A\mid B),
\]
and intended to behave like a conditional probability.\footnote{We keep the discussion informal here. In the Cox--Jaynes tradition, one imposes consistency constraints (sum and product rules, together with mild regularity assumptions) on $w(A\mid B)$, and then shows that it can be regraduated into an ordinary probability function.}
At first sight these starting points seem different: we start from an unconditioned valuation $V$, whereas Cox and Jaynes start from an explicitly relative valuation $w(\cdot \mid \cdot)$.

This difference is, however, largely a matter of presentation.
The reason is that our notion of \emph{Syntactic Locality} builds relativity into the formalism, but using a different language.
Syntactic Locality is the idea that any universe of statements can always be regarded as a sub-universe of a larger ambient universe.
This is inspired by a familiar reductionist physicists' perspective: when a description looks \enquote{intrinsic}, it is often more illuminating to see it as the restriction of a more global description after an embedding or dilation (one may think of extensions by ancillas, Naimark-type dilations, or more broadly the practice of passing to a larger system in which the relevant structure becomes canonical).

\paragraph{Sub-universes as relativisations}

Let $\mathsf{L}$ be the Boolean lattice of statements of some universe, and let $V$ be a valuation on $\mathcal{L}$.
Given a non-trivial statement $b\in\mathcal{L}$, there is a natural associated sub-universe obtained by restricting attention to what is compatible with $b$.
Concretely, one may represent this sub-universe as the principal ideal $\mathcal{L}_{b}$.

From a valuation on the ambient universe one obtains a \emph{relative valuation} on $\mathcal{L}_{b}$ by restriction of the domain.
Thus, even though our development defines conditional probabilities \emph{from} probabilities, the mechanism is the same: conditionality is simply what one obtains by passing to a sub-universe.

\paragraph{Why this is close to Cox and Jaynes}
It is possible to consider that our primitive object is not just $V$ but the family of all its relativisations.
This family is precisely the sort of object that Cox and Jaynes take as primitive, except that they choose to start the story already at the level of the relative quantities.
From a purely mathematical point of view, the two starting points are therefore much closer than they appear at first sight:
Cox and Jaynes begin from a conditional valuation $w(\cdot\mid\cdot)$ and reconstruct a global probability calculus, while
we instead begin from a global valuation $V$ and regard conditionality as the induced calculus on sub-universes.

Syntactic Locality is the notion that helps us do this.
Because every universe can be embedded into a larger one, any valuation we ever apply is, in practice, applied to some sub-universe selected by context or by an agent's informational situation.
The global object $V$ can be best understood as a book-keeping device for all coherent local valuations, rather than as a metaphysical claim that probabilities are written into the world as absolute numbers.

\paragraph{Relativity: information versus embedding}
The same reconciliation can be stated philosophically.
Cox and Jaynes place relativity at the core of their story, plausibilities are always conditioned on something, typically an agent's background information.
There is no agent-independent valuation of a statement, only valuations relative to what is known.
In our language, one might be tempted to read the \enquote{universal valuation} as an objective quantity, and hence as a departure from the Cox and Jaynes viewpoint.

Syntactic Locality softens this apparent tension.
It can be understood as telling us that what is universal is not a particular numerical assignment, but the structure that relates different relativisations.
Changing what we condition on is, for Cox and Jaynes, a change of information.
Changing which sub-universe we work in is, for us, a change of relativisation via embedding and restriction.
These are two descriptions of the same mathematical operation, and the coherence constraints one ought to impose are the same in spirit.
What matters is that the passage between contexts is well defined.

\paragraph{Pre-probabilities as objective coherence data}
This perspective also clarifies why we insist on working, at the global level, with pre-probabilities rather than probabilities.
We find, starting from first principles, that the global structure that organises all relativisations need not itself satisfy positivity, nor need it be uniquely fixed by a single pre-probability mapping, or, in the context of theory with semantic dimension $1$, by a single normalisation convention.
In fact, once one allows richer semantic structure, the global valuation may carry $a$-functions freedom while still inducing well defined relative probabilistic models on the sub-universes of interest.
In that sense, the universal valuation is not a pseudo-probability of the world, but the coherent rule that produces relative probabilities when a context is specified.

One may therefore speak of a form of objectivity that is weaker, and arguably more appropriate for foundational physics:
objectivity is not the claim that there is a unique context-free probability assignment, but the claim that there is an objective coherence structure governing how relative assignments must transform under changes of context.
A universal pre-probability is objective precisely in this limited sense.
It encodes the admissible rules of deducibility connecting different relative valuations, and hence it captures what can be shared between agents even when their sub-universes, and thus their conditionalisations, differ.

Finally, Cox and Jaynes highlight the relativity and then derive the global calculus; we instead start from a global calculus and then read conditionality as restriction.
Syntactic Locality shows that these are two views of the same mechanism.

As a final note, we observe that adding the monotonicity assumption of Jaynes to our quasi-probability representation singles out probability representations, but in our case monotonicity assumption is recovered without assuming it and without even assuming continuity.

\section{The case for rational probabilities}
\label{sec:rational}
Proposition~\ref{prop:normalised-frame-exists} clarifies how normalisation interacts with higher semantic dimension.
Even when $Q$ takes values spanning several $\mathbb Q$-independent directions in $\mathbb C$, only one direction can behave as a (quasi-)probability.
All remaining independent directions are necessarily top-zero and therefore remain at the level of pure pre-probabilities.

In particular, once we restrict attention to the genuinely normalised direction $Q^{(1)}$, we return to the semantic-dimension-$1$ situation.
The higher-dimensional features of $Q$ are not additional (quasi-)probabilistic degrees of freedom: they are extra $\mathbb Q$-independent pre-probabilistic components that cannot themselves be normalised and hence cannot be upgraded to quasi-probabilities or probabilities.

For this reason we restricted to semantic dimension $1$, and we interpret this structure as a clue that our reconstruction suggests that quasi-probabilities, and thus classical probabilities, should be rational-valued, or more conservatively, without further regularity axioms (see appendix \ref{sec:Cox-Jaynes} and section \ref{sec:holomorphic}), canonical normalisation fixes only the $\mathbb{Q}$-scalar freedom and irrational components behave like additional semantic coordinates. We emphasise that this suggestion does not arise from experimental considerations. For example, one might argue on frequentist grounds that probabilities should be rational because no finite agent can perform infinitely many trials. Here the request for rational values appears as a purely structural constraint of the reconstruction, it follows from the axioms governing valuations and their parametrisations, without any appeal to empirical limitations or measurement procedures. In this sense, theoretical and experimental viewpoints point in the same direction, both pointing to the possibility that, without additional regularity axioms on admissible regraduations, irrational components may behave as extra semantic coordinates rather than as canonically meaningful probabilistic degrees of freedom.

\paragraph{Residual freedom along the irrational directions}

When the semantic dimension is $m>1$ one may vary the irrational components without affecting the top element at all, because those components belong to top-zero semantic directions.

We make this explicit with a minimal example.

\begin{example}
\label{ex:irrational-prob}
Consider a two-atom universe with atoms $\{a,b\}$, and define
\[
P(a)=\frac{1}{2}-\frac{\sqrt{2}}{10},
\qquad
P(b)=\frac{1}{2}+\frac{\sqrt{2}}{10},
\]
so that $P(\top)=P(a\vee b)=1$ and $P(a),P(b)\ge 0$.
We consider a semantic decomposition of $P$ into a rational part plus an irrational part, $P = P_{\mathrm{rat}} + P_{\mathrm{irr}}$, such that
\begin{align*}
P_{\mathrm{rat}}(a)=P_{\mathrm{rat}}(b)=\frac12, \\
P_{\mathrm{irr}}(a)=-\frac{\sqrt2}{10},\ \ P_{\mathrm{irr}}(b)=\frac{\sqrt2}{10}.
\end{align*}
Then $P_{\mathrm{rat}}(\top)=1$ while $P_{\mathrm{irr}}(\top)=0$, so the entire irrational contribution sits in a top-zero semantic direction.

Now choose an additive $a$-function that fixes the rational direction and rescales the irrational one, for instance
\[
a(x+y\sqrt{2})\coloneqq x+2y\sqrt{2}
\qquad (x,y\in\mathbb{Q}),
\]
defined on the $\mathbb{Q}$-linear span of $\{1,\sqrt{2}\}$.
Define ${P'\coloneqq a\circ P}$. Then
\[
P'(\top)=a(P(\top))=a(1)=1,
\]
but
\[
P'(a)=\frac{1}{2}-\frac{\sqrt{2}}{5},
\qquad
P'(b)=\frac{1}{2}+\frac{\sqrt{2}}{5}.
\]
Thus $P'$ remains a probability assignment, but it corresponds to a different irrational parametrisation.
\end{example}

One might object that irrationals can be approximated arbitrarily well by rationals, so any additional semantic direction could be made negligible.
For instance, in Example~\ref{ex:irrational-prob} one could replace $\sqrt{2}$ with a truncated continuous fraction approximation obtaining an irrational error term
\[
\beta_n
\;:=\;
\sqrt{2}
\;-\;
\Bigl(
1
+
\cfrac{1}{
\underbrace{
2 + \cfrac{1}{
2 + \cfrac{1}{
\ddots + \cfrac{1}{2}
}}
}_{\text{$n$ levels}}
}
\Bigr),
\]
which can be made arbitrarily small by increasing $n$.
However, this conclusion is misleading, there always exists an $a$-function that fixes $1$ but magnifies the error by an arbitrarily large factor.
An additional semantic frame is therefore a genuine new degree of freedom, not a small correction.

\smallskip
\noindent
Finally, one could try to eliminate the ambiguity by imposing a separate normalisation on each semantic direction, effectively replacing a single quasi-probability by a collection of semantic-dimension-$1$ quasi-probabilities.
But then the resulting global parametrisation depends on extra choices (in particular, on a choice of semantic frame), and the canonicity is lost.
This is one reason why, within the present reconstruction, semantic dimension $1$ has been chosen as the natural regime in which to study (quasi-)probabilities.

\section{The case of irrational probabilities: against hypothetical/infinite frequentism}
\label{sec:irrational}
One of the most common ways of connecting probability theory with empirical data is the frequentist interpretation, according to which probabilities are identified with relative frequencies. Two main variants are usually distinguished: finite frequentism \cite{hajek1996} and hypothetical frequentism \cite{hajek2009} (also called infinite frequentism). The framework developed in this work is, in principle, compatible with finite frequentism, although we do not pursue this connection here. In fact our definition of probabilities and quasi-probabilities relies exclusively on rational valuations. This feature appears to be in tension with hypothetical frequentism, which typically assigns probabilities as limits of infinite relative frequencies and thereby allows for irrational values.

In order to make a case in favour of rational probabilities, I present below an argument against the hypothetical frequentist interpretation, by examining two cases in which the hypothetical frequentist definition of probability gives rise to particularly interesting problems and inconsistencies. 

Consider a completely biased coin that returns \textsf{tail} with certainty and \textsf{head} with probability zero. In hypothetical frequentism, a possible outcome of the experiment is an infinite sequence
\[
x:\mathbb{N}\to\{\textsf{head},\textsf{tail}\},
\]
and the probability assigned to an outcome $\textsf{out}\in\{\textsf{head},\textsf{tail}\}$ is defined as the limiting relative frequency
\begin{equation}
p_x(\textsf{out})
=\lim_{n\to\infty}\frac{1}{n}\sum_{i=1}^{n}\delta\!\bigl(x(i),\textsf{out}\bigr),
\end{equation}
where $\delta(u,v)=1$ if $u=v$ and $\delta(u,v)=0$ otherwise.

One admissible sequence is
\[
x(n)=\textsf{tail}\qquad \forall n,
\]
for which
\[
p_x(\textsf{head})=0,
\qquad
p_x(\textsf{tail})=1.
\]
Another admissible sequence is
\[
y(n)=
\begin{cases}
\textsf{head} & \text{if } \log_2(n)\in\mathbb{N},\\
\textsf{tail} & \text{otherwise},
\end{cases}
\]
so that \textsf{head} occurs only at times $n=2^k$. In this case, again,
\[
p_y(\textsf{head})
=\lim_{n\to\infty}\frac{\lfloor \log_2(n)\rfloor}{n}
=0,
\]
and therefore $p_y(\textsf{tail})=1$.
Both sequences of outcome $x(n)$ and $y(n)$ are proper of a coin with probability $0$ of returning $\textsf{head}$ and probability $1$ of returning $\textsf{tail}$. 

\subsection{An experiment with definite outcomes with well defined probabilities that leads to undefined probability}
We now consider the following experiment. A printer outputs one symbol per second, either $0$ or $1$. The printer has two internal states, denoted $\textsf{A}$ and $\textsf{B}$. If the printer is in state $\textsf{A}$, then at time $t\in\mathbb{N}$ it prints
\begin{equation}
f_{\textsf{A}}(t)=
\begin{cases}
0 & \text{if } t \text{ is even},\\
1 & \text{if } t \text{ is odd},
\end{cases}
\end{equation}
whereas in state $\textsf{B}$ it prints $0$ at every time step.

Before each print, a coin is flipped. If the outcome is \textsf{head}, the internal state of the printer is switched; if it is \textsf{tail}, the state is left unchanged. The printer starts in state $\textsf{A}$. We use the completely biased coin above, for which $p(\textsf{head})=0$ in the hypothetical frequentist sense, and ask for the probability that the printed symbol is $0$.

If the realised coin sequence is $x(n)=\textsf{tail}$ for all $n$, then the printer never switches state and the relative frequency of $0$ on the paper is $\tfrac{1}{2}$. However, if the realised sequence is $y$, the situation changes. The relative frequency of printed $0$s does not converge: it oscillates and remains strictly larger than $\tfrac{1}{2}$. Consequently, the hypothetical frequentist probability of printing $0$ does not exist.

Thus, within hypothetical frequentism, the assignment of probabilities to the coin outcomes leads to a situation in which the probability of a well-defined derived event fails to exist. This indicates an internal tension in the interpretation.

\subsection{An experiment with definite outcomes where local probabilities are well defined but without global probabilities}
We now slightly extend the experiment to show that the same mechanism can generate scenarios in which some probabilities exist while others do not. Suppose that the printer now outputs two symbols at each time $t$, one in a left column and one in a right column. The printer again has two internal states, $\textsf{A}$ and $\textsf{B}$. In state $\textsf{A}$ it prints
\[
\bigl(f^{L}_{\textsf{A}}(t),f^{R}_{\textsf{A}}(t)\bigr)=
\begin{cases}
(0,1) & \text{if } t \text{ is even},\\
(1,0) & \text{if } t \text{ is odd},
\end{cases}
\]
whereas in state $\textsf{B}$ it prints
\[
\bigl(f^{L}_{\textsf{B}}(t),f^{R}_{\textsf{B}}(t)\bigr)=
\begin{cases}
(0,0) & \text{if } t \text{ is even},\\
(1,1) & \text{if } t \text{ is odd}.
\end{cases}
\]
As before, the printer starts in state $\textsf{A}$ and switches state only upon the outcome \textsf{head}. We again consider the hypothetical frequentist scenario in which the realised coin sequence is $y$, for which $p_y(\textsf{head})=0$.

In this case, the relative frequencies of symbols in each column separately are well defined and in both columns the relative frequency of $0$ converges to $\tfrac{1}{2}$, and similarly for $1$. Thus, all local probabilities exist and are equal to $\tfrac{1}{2}$.

However, when one considers the joint outcomes $\bigl(f^{L}(t),f^{R}(t)\bigr)$, the relative frequencies fail to converge. The sparse but unbounded switches between the two internal states induced by the sequence $y$ produce persistent oscillations in the frequencies of the joint symbols $(0,0)$, $(0,1)$, $(1,0)$, and $(1,1)$. As a result, no hypothetical frequentist probability can be assigned to the global two-column output.

This example shows that, within hypothetical frequentism, the existence of well-defined local probabilities does not guarantee the existence of a well-defined global probability distribution.

\end{document}